\theoremstyle{plain}
\newtheorem{theorem}{Theorem}[section]
\newtheorem{lemma}[theorem]{Lemma}
\newtheorem{prop}[theorem]{Proposition}
\newtheorem{cor}[theorem]{Corollary}
\theoremstyle{definition}
\numberwithin{equation}{section}
\newcommand{\R}{\mathbb R}
\newcommand{\tr}{\operatorname{Tr}}
\newcommand{\abs}[1]{\left|#1\right|}
\begin{document}




\title[Relating mass to angular momentum and charge]{Relating mass to angular momentum and charge in 5-dimensional minimal supergravity}

\author{Aghil Alaee}
\address{Department of Mathematical and Statistical Science\\
University of Alberta\\
Edmonton AB T6G 2G1, Canada}
\email{khangha@ualberta.ca}


\author{Marcus Khuri}
\address{Department of Mathematics, Stony Brook University, Stony Brook, NY 11794, USA}
\email{khuri@math.sunysb.edu}

\author{Hari Kunduri}
\address{Department of Mathematics and Statistics\\
Memorial University of Newfoundland\\
St John's NL A1C 4P5, Canada}
\email{hkkunduri@mun.ca}


\thanks{A. Alaee acknowledges the support of a PIMS Postdoctoral Fellowship.  M. Khuri acknowledges the support of NSF Grant DMS-1308753. H. Kunduri acknowledges the support of NSERC Grant 418537-2012.}

\begin{abstract}
We prove a mass-angular momentum-charge inequality for a broad class of maximal, asymptotically flat, bi-axisymmetric initial data within the context of five-dimensional minimal supergravity.  We further show that the charged Myers-Perry black hole initial data are the unique minimizers.  In addition, we establish a rigidity statement for the relevant BPS bound, and give a variational characterization of BMPV black holes.
\end{abstract}
\maketitle

\section{Introduction}

As is well known, asymptotically flat stationary black holes in 4D Einstein-Maxwell theory are characterized by their mass $m$, angular momentum $\mathcal{J}$, and electric charge $q$. In order to avoid a naked singularity, these physical parameters must satisfy the inequality
\begin{equation}\label{4dinequality}
m^2 \geq \frac{q^2 + \sqrt{q^4 + 4 \mathcal{J}^2}}{2}
\end{equation}
More generally, it has been shown that for maximal, simply connected, axisymmetric initial data sets with two ends, one designated asymptotically flat and the other either asymptotically flat or asymptotically cylindrical, the above inequality holds
\cite{Chrusciel2009,Costa2010,Dain2008,schoen2013convexity} (see \cite{dain2012geometric} for a thorough review). As was partially shown in these references, and completed in \cite{khuri2015positive}, the bound is saturated if and only if the initial data is that corresponding to the extreme Kerr-Newman black hole. This result has also been generalized to the setting of multiple black holes in \cite{chrusciel2008mass,khuri2015positive}.

It is natural to ask whether the above inequality \eqref{4dinequality} admits a generalization to dimensions greater than four.  This program was initiated in our recent article \cite{alaee2015global}, in which we considered geometric inequalities satisfied by a broad class of asymptotically flat initial data for the \emph{vacuum} Einstein equations in spacetime dimension five.  The initial data $(M^4,g, k)$, consisting of a Riemannian 4-manifold $M^4$ with metric $g$, and second fundamental form tensor $k$, was assumed to be bi-axisymmetric so that it admits a $U(1)^2$ action by isometries.  Such data possess two independent angular momenta $\mathcal{J}_l$, $l=1,2$ in addition to the ADM mass $m$.  The inequality reads
\begin{equation}\label{vacuumineq}
m^3\geq \frac{27\pi}{32}\left(|\mathcal{J}_1|+|\mathcal{J}_2|\right)^2,
\end{equation}
and holds for $M^4$ diffeomorphic to $\mathbb{R}^4\setminus\{0\}$.  The class of data treated in \cite{alaee2015global} includes that of the Myers-Perry black hole family \cite{Myers2011}, which is the natural generalization of the Kerr solution to $D>4$. Indeed, we have established that the lower bound of \eqref{vacuumineq} is achieved if and only if the initial data set is the canonical slice of an extreme Myers-Perry spacetime.

The present article is concerned with charged generalizations of \eqref{vacuumineq}, where the charge arises from an Abelian (Maxwell) gauge field.  Initial data of the five-dimensional Einstein-Maxwell theory would be the obvious candidate geometries.  However, pure Einstein-Maxwell theory restricted to spacetimes with $U(1)^2$ isometry does not admit a coset structure similar to that which is found in four dimensions.  The lack of associated solution-generating techniques is perhaps the reason that there is no known `charged' Myers-Perry solution to pure Einstein-Maxwell theory.
For this reason, we will instead consider five-dimensional, $\mathcal{N}=1$ minimal supergravity \cite{Cremmer:1980gs}, which admits a harmonic map structure and is thus the natural theory to study \cite{Cremmer:1997ct, Cremmer:1998px, Cremmer:1999du} (indeed four-dimensional Einstein-Maxwell theory is itself a supergravity theory).  The theory is `minimal' in the sense that it is the simplest supersymmetric extension of general relativity.   As we discuss below, the only relevant additional feature of the supergravity theory is that the Maxwell equation is now self-sourced, that is $d \star_{5} F \neq 0$ where $F$ is the field strength.  Stationary black hole solutions to minimal supergravity have been studied extensively over the past decade, motivated by developments in string theory (see e.g. \cite{emparan2008black} for a review). This is because the action arises via a simple Kaluza-Klein type dimensional reduction of ten-dimensional type IIB supergravity on $\mathbb{T}^5$ (see  \cite{Elvang:2004ds} for details on the explicit compactification).  In particular, an important achievement of string theory was the microscopic calculation of the Bekenstein-Hawking entropy of five-dimensional black hole solutions of supergravity in terms of an underlying weakly-coupled string theory \cite{Breckenridge:1996is,strominger1996microscopic}.

The model geometry in our analysis is the charged Myers-Perry black hole solution \cite{Chong:2005hr,Cvetic:1996xz}. This is a four-parameter family of asymptotically flat, bi-axisymmetric stationary black hole solutions characterized by $(m, \mathcal{J}_1,\mathcal{J}_2, Q)$, where $Q$ is the electric charge. When $Q=0$, it reduces to the vacuum Myers-Perry black hole discussed above, while for $\mathcal{J}_l =0$, $l=1,2$ it reduces to the familiar two-parameter family of Reissner-Nordstr\"{o}m black hole solutions. These latter solutions are in fact also solutions of pure Einstein-Maxwell theory.  The charged Myers-Perry solution also contains a 3-parameter subset of extreme black holes, which play an important role in the proof of our inequality.  These properties indicate that the charged Myers-Perry solutions are the natural five-dimensional generalization of the Kerr-Newmann family.

A crucial ingredient in the proof of \eqref{vacuumineq} is to show that the mass of the initial data is bounded below by a certain mass functional \cite{alaee2015global}. This mass functional is itself a regularization of the (divergent) Dirichlet energy for singular maps between $\mathbb{R}^3$ and $SL(3,\mathbb{R})/SO(3)$, where the target space is equipped with a metric of nonpositive sectional curvature.  The critical points of this latter energy functional are precisely the stationary, bi-axisymmetric solutions of the vacuum Einstein equations in $D=4+1$ \cite{maison1979ehlers}.  Remarkably, it can be shown that for minimal supergravity, the stationary bi-axisymmetric solutions arise as critical points for an energy functional with target $G_{2(2)}/SO(4)$  \cite{Cremmer:1997ct, Cremmer:1998px, Cremmer:1999du}.  This space\footnote{$G_{2(2)}$ refers to the noncompact real Lie group whose complexification is $G_2$; the notation $2(2)$ refers respectively to the rank and character of the group.}  is eight-dimensional and again carries a metric with nonpositive curvature.  This allows for the convexity arguments used in the proof of \eqref{vacuumineq} to be applied.  In particular we will construct an appropriate mass functional for a large class of initial data of minimal supergravity, and show how it can be interpreted as a regularization of  the Dirichlet energy for singular harmonic maps taking $\mathbb{R}^3 \to G_{2(2)} / SO(4)$.

In order to state the main result we first discuss the appropriate setting. In addition to a Riemannian 4-manifold $M^4$ with metric $g$ and extrinsic curvature $k$, an initial data set for 5-dimensional minimal supergravity comes equipped with a 1-form $E$ and 2-form $B$ which represent the electric and magnetic field, respectively. These quantities are related to one another, as well as the energy density $\mu_{SG}$ and momentum density $J_{SG}$ of the nonelectromagnetic matter fields, through the constraint equations \eqref{Const1}, \eqref{Const2}, and \eqref{Const3} derived in Section \ref{secsugra}. As with \eqref{vacuumineq}, the data are assumed to be bi-axisymmetric with the $U(1)^2$ symmetry generated by Killing fields $\eta_{(l)}$, $l=1,2$. Associated with each Killing field is the ADM angular momentum
\begin{equation}\label{admam}
\mathcal{J}_{l}=\frac{1}{8\pi}\int_{S^{3}_{\infty}}(k_{ij}-(\tr_{g} k)g_{ij})\nu^{i}\eta_{(l)}^{j},
\text{ }\text{ }\text{ }\text{ }\text{ }l=1,2,
\end{equation}
where $S^{3}_{\infty}$ indicates the limit as $r\rightarrow\infty$ of integrals over coordinate spheres $S^{3}_{r}$, with unit outer normal $\nu$, in a designated asymptotically flat end. Typically, enhanced asymptotics beyond the usual definitions of asymptotic flatness are needed to guarantee that this limit exists. However, here it will be assumed that the momentum density vanishes in the Killing directions $J_{SG}(\eta_{(l)})=0$, $l=1,2$, and as is shown in Section \ref{secpot} this is sufficient to guarantee that \eqref{admam} is finite and well-defined. Furthermore, the ADM mass is given by
\begin{equation}\label{admmass}
m=\frac{1}{16\pi}\int_{S^{3}_{\infty}}(g_{ij,i}
-g_{ii,j})\nu^{j},
\end{equation}
and the total electric charge takes the usual form
\begin{equation}\label{totalcharge}
Q = \frac{1}{16\pi} \int_{S^3_\infty} \star E,
\end{equation}
where $\star$ denotes the Hodge star operation. Due to the fact that the magnetic field is represented by a 2-form, there is no meaningful notion of total magnetic charge for our purposes; this is examined in more detail in Section \ref{secpot}.
Certain combinations of the mass, angular momenta, and charge, which will be labeled $a$, $b$, and $q$, appear naturally in the explicit expression for the charged Myers-Perry spacetime and play a role in the statements below. In particular, the charged Myers-Perry solution has a naked singularity precisely when $ab+q=0$. These quantities are given implicitly through the relations
\begin{equation}\label{relations}
\mathcal{J}_{1}=\frac{2m}{3}a+\frac{Q}{\sqrt{3}}b,
\text{ }\text{ }\text{ }\text{ }\text{ }
\mathcal{J}_{2}=\frac{Q}{\sqrt{3}}a+\frac{2m}{3}b,
\text{ }\text{ }\text{ }\text{ }\text{ }
Q=\frac{\sqrt{3}\pi}{4}q.
\end{equation}
Observe that these relations define $a$ and $b$ uniquely in terms of the mass, angular momenta, and charge whenever $m^2\neq \frac{3}{4}Q^2$, which in light of \eqref{maininequality} is always satisfied unless $m=0$. However, the mass cannot vanish in the context of our results as there are two ends, and thus there is no obstruction to inverting the relations \eqref{relations}.

\begin{theorem}\label{MainTheorem}
Let $(M^4,g,k,E,B)$ be a smooth, complete, bi-axially symmetric, maximal initial data set for the 5-dimensional minimal supergravity equations satisfying $\mu_{SG}\geq 0$
and $J_{SG}(\eta_{(l)})=0$, $l=1,2$ and with two ends, one designated asymptotically flat and the other either asymptotically flat or
asymptotically cylindrical.
If $M^{4}$ is diffeomorphic to $\mathbb{R}^{4}\setminus\{0\}$ and admits a global system of Brill coordinates then
\begin{equation}\label{maininequality}
m\geq \frac{27\pi}{8}\frac{\left(\mathcal{J}_1+\mathcal{J}_2\right)^2}{\left(2m+\sqrt{3}|Q|\right)^2}
+\sqrt{3}|Q|.
\end{equation}
Moreover if $ab+q\neq 0$, then equality holds if and only if $(M^4,g,k,E,B)$ is isometric to the canonical slice of an extreme charged Myers-Perry spacetime.
\end{theorem}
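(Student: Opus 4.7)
The proof follows the variational strategy developed by the authors in \cite{alaee2015global} for the vacuum case, but now adapted to the richer harmonic map target of five-dimensional minimal supergravity. My plan is as follows.

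\textbf{Step 1 (Potentials and reduction to the orbit space).} Using the bi-axisymmetry together with the hypothesis $J_{SG}(\eta_{(l)})=0$ and the constraint equations \eqref{Const1}--\eqref{Const3}, I would show that the two momentum 1-forms associated with $\eta_{(1)},\eta_{(2)}$ are closed on $M^4$, yielding global twist potentials $v_1,v_2$. A parallel argument for the self-sourced Maxwell equation produces an electric potential $\chi$ and a magnetic potential $\psi$. Together with the two Gram-matrix entries of the Killing fields, this gives eight scalar functions that descend to the orbit space $M^4/U(1)^2\cong\mathbb{R}^3$ and provide coordinates on the coset $G_{2(2)}/SO(4)$; any admissible datum therefore induces a map $\Psi:\mathbb{R}^3\setminus\Gamma\to G_{2(2)}/SO(4)$ with controlled singular behaviour on the axis set $\Gamma$.

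\textbf{Step 2 (Mass functional lower bound).} Working in the global Brill chart supplied by the hypothesis, I would rewrite $m$ as an integral over $\mathbb{R}^3$ of the Brill integrand (involving the conformal factor and axis data). Invoking the maximal slice condition $\tr_g k=0$, the Hamiltonian constraint, the sign condition $\mu_{SG}\geq 0$, and the expressions of $k$, $E$, $B$ in terms of the potentials of Step 1, a sequence of integrations by parts—on the model of \cite{alaee2015global}—reduces the identity for $m$ to an inequality of the form
\begin{equation*}
m \;\geq\; \mathcal{M}(\Psi) \;+\; \sqrt{3}|Q|,
\end{equation*}
where $\mathcal{M}(\Psi)$ is, up to explicit boundary contributions, the renormalized Dirichlet energy of $\Psi$ for the $G_{2(2)}/SO(4)$ metric. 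The $\sqrt{3}|Q|$ term arises as the boundary/topological piece picked up when extracting the Maxwell contribution, and would be computed using \eqref{totalcharge} and the asymptotics of $\chi$.

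\textbf{Step 3 (Harmonic map minimization and explicit evaluation).} Since $G_{2(2)}/SO(4)$ is a Riemannian symmetric space of noncompact type, its sectional curvature is nonpositive, so the Dirichlet energy is convex along geodesic homotopies in the target. Coupling this with the prescribed axis and asymptotic data (which encode $\mathcal{J}_1,\mathcal{J}_2,Q$), one obtains that $\mathcal{M}(\Psi)\geq \mathcal{M}(\Psi_{\EM})$, where $\Psi_{\EM}$ is the harmonic map corresponding to the extreme charged Myers-Perry slice with the same conserved quantities. Evaluating $\mathcal{M}(\Psi_{\EM})$ on this explicit solution, with parameters $(a,b,q)$ determined by \eqref{relations}, yields exactly
\begin{equation*}
\mathcal{M}(\Psi_{\EM}) \;=\; \frac{27\pi}{8}\,\frac{(\mathcal{J}_1+\mathcal{J}_2)^2}{(2m+\sqrt{3}|Q|)^2},
\end{equation*}
which combined with Step 2 gives \eqref{maininequality}. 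For rigidity, equality forces $\mu_{SG}\equiv 0$, all integration-by-parts pointwise identities to hold, and $\Psi=\Psi_{\EM}$ as maps; under the non-degeneracy assumption $ab+q\neq 0$ the extreme charged Myers-Perry critical point is isolated (the BMPV direction is excluded), so the Brill data can be uniquely reconstructed from $\Psi$, showing that $(M^4,g,k,E,B)$ is isometric to the canonical slice of extreme charged Myers-Perry.

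\textbf{Main obstacle.} I expect the principal difficulty to lie in Step 3, namely proving the sharp comparison $\mathcal{M}(\Psi)\geq \mathcal{M}(\Psi_{\EM})$. The target $G_{2(2)}/SO(4)$ is eight-dimensional and the required convexity, while in principle guaranteed by nonpositive curvature, must be implemented at the level of the \emph{renormalized} energy: the naive Dirichlet integral diverges due to the prescribed axis singularities of $\Psi$, so the convexity inequality needs to be applied to a finite difference after subtracting a common reference energy adapted to the axis data. This requires delicate control of the asymptotics of $\Psi$ both at the axes and at the two ends (asymptotically flat/cylindrical), together with an explicit computation of the boundary terms that survive the renormalization procedure—terms whose precise algebraic form is ultimately responsible for the curious coefficient $27\pi/8$ and the appearance of $(2m+\sqrt 3|Q|)^2$ in the denominator via the relations \eqref{relations}.
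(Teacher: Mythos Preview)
Your overall architecture---potentials, a Brill-coordinate mass functional bounded below by a renormalized harmonic energy into $G_{2(2)}/SO(4)$, then convexity along geodesics to compare with the extreme charged Myers-Perry map---is exactly the route taken in the paper, and you have correctly located the main technical difficulty in implementing convexity for the renormalized (axis-singular) energy. However, the way you assemble the pieces in Steps 2 and 3 contains a structural error. The term $\sqrt{3}|Q|$ does \emph{not} separate off as a boundary contribution in the mass identity; the Brill computation yields simply $m\ge \mathcal{M}(\Psi)$, with no charge term at this stage. Consequently, your displayed formula $\mathcal{M}(\Psi_{\EM})=\tfrac{27\pi}{8}(\mathcal{J}_1+\mathcal{J}_2)^2/(2m+\sqrt{3}|Q|)^2$ cannot be right: the left side depends only on $(\mathcal{J}_1,\mathcal{J}_2,Q)$, while you have the mass $m$ of the \emph{general} data on the right.

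What actually happens is that the minimizer value $\mathcal{M}(\Psi_0)$ equals the ADM mass of the extreme charged Myers-Perry slice with the given $(\mathcal{J}_1,\mathcal{J}_2,Q)$, and this mass satisfies the \emph{implicit} identity
\[
\mathcal{M}(\Psi_0)\;=\;\frac{27\pi}{8}\,\frac{(\mathcal{J}_1+\mathcal{J}_2)^2}{\bigl(2\mathcal{M}(\Psi_0)+\sqrt{3}|Q|\bigr)^2}\;+\;\sqrt{3}|Q|.
\]
From $m\ge \mathcal{M}(\Psi)\ge \mathcal{M}(\Psi_0)$ one then passes to \eqref{maininequality} by the elementary observation that $x\mapsto \tfrac{27\pi}{8}(\mathcal{J}_1+\mathcal{J}_2)^2/(2x+\sqrt{3}|Q|)^2+\sqrt{3}|Q|$ is decreasing. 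This monotonicity step is short but essential, and it is missing from your outline. (Two smaller inaccuracies: there are \emph{two} magnetic potentials $\psi^1,\psi^2$ rather than one, and the hypothesis $ab+q\neq 0$ is not about excluding BMPV---BMPV sits inside the extreme CMP family---but about ensuring the comparison map $\Psi_0$ comes from a genuine black hole rather than a naked singularity, so that the required asymptotics hold.)
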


Brill coordinates, defined in Section \ref{secini}, are a system of cylindrical coordinates in which the metric on the orbit space $M^4/U(1)^2$ takes an isothermal form. They played an indispensable role in the proofs of the $D=3+1$ inequality \eqref{4dinequality}, and were later shown to always exist \cite{chrusciel2008masspositivity,KhuriSokolowsky} as long as the axisymmetric initial data set is simply connected. In the $D=4+1$ case, we strongly suspect that generalizations of \cite{chrusciel2008masspositivity,KhuriSokolowsky} also hold, so that in Theorem \ref{MainTheorem} the hypotheses concerning the diffeomorphism type and existence of Brill coordinates may be replaced with the assumption of simple connectivity or another similar condition.

The proof of Theorem \ref{MainTheorem} also yields a slightly different inequality (with corresponding rigidity statement), which in some circumstances produces an improved lower bound for the mass, namely
\begin{equation}\label{maininequality1}
m\geq \frac{27\pi}{8}\frac{\left(|\mathcal{J}_1|+|\mathcal{J}_2|\right)^2}{\left(2m+\sqrt{3}Q\right)^2}
+\sqrt{3}Q.
\end{equation}
As mentioned above $m^2\neq\frac{3}{4}Q^2$, and so the denominator on the right-hand side does not vanish. It should be pointed out that this inequality reduces to \eqref{vacuumineq} when $Q=0$, and that \eqref{maininequality} does not necessarily have this property. However \eqref{maininequality} implies the so called BPS bound in supergravity, which in our conventions reads
\begin{equation}\label{BPSbound}
m \geq \sqrt{3} |Q|.
\end{equation}
The BPS bound has previously been established \cite{Gibbons:1993xt} using completely different methods, specifically the spinorial approach developed in Witten's proof of the positive mass theorem \cite{witten1981new}. In particular, this bound is known to hold without any symmetry assumptions or restrictions on $M^4$ apart from the existence of a spin structure.  Solutions saturating \eqref{BPSbound} must be `supersymmetric', that is, they admit Killing spinor fields; note that supersymmetric black holes are necessarily extreme. It turns out that the spinor proof of the BPS bound has not yielded an associated rigidity statement as in Theorem \ref{MainTheorem}, and indeed there are distinct families of solutions which saturate the bound. Thus, our result, which does treat the case of equality may be viewed as a refinement of the BPS bound in the setting of bi-axisymmetry. If \eqref{BPSbound} is saturated, then $\mathcal{J}_{1}=-\mathcal{J}_{2}$ and there are two cases to consider. When $\mathcal{J}_{1}=\mathcal{J}_{2}= 0$ the initial data must arise from an extreme Reissner-Nordstr\"{o}m spacetime, while if $\mathcal{J}_{l}\neq 0$ the initial data are a special subclass of the extreme charged Myers-Perry solutions in which the two angular momenta differ by a sign. These latter spacetimes form a two-parameter family of supersymmetric solutions known as the BMPV black holes \cite{Breckenridge:1996is}. It follows that we obtain a new characterization of these solutions.

\begin{cor}
Under the hypotheses of Theorem \ref{MainTheorem}, the BPS bound \eqref{BPSbound} holds and is saturated if and only if the initial data set is isometric to the canonical slice of an extreme Reissner-Nordstr\"{o}m spacetime (vanishing angular momentum) or BMPV black hole (nonvanishing angular momentum).
\end{cor}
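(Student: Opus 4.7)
The plan is to derive the corollary directly from Theorem \ref{MainTheorem} with very little additional work. The BPS bound \eqref{BPSbound} is read off immediately from \eqref{maininequality}: its right-hand side is the sum of the manifestly nonnegative ``kinetic'' term $\tfrac{27\pi}{8}(\mathcal{J}_1+\mathcal{J}_2)^2/(2m+\sqrt{3}|Q|)^2$ and $\sqrt{3}|Q|$, so discarding the first term yields $m \geq \sqrt{3}|Q|$.

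Next I would analyze the saturation case. If $m=\sqrt{3}|Q|$, then \eqref{maininequality} becomes an equality and the kinetic term must vanish, simultaneously forcing $\mathcal{J}_1+\mathcal{J}_2=0$ and equality in the main inequality. The rigidity portion of Theorem \ref{MainTheorem} then identifies the data as the canonical slice of an extreme charged Myers-Perry spacetime with $\mathcal{J}_1=-\mathcal{J}_2$, provided $ab+q\neq 0$. It remains only to pin down which subfamily arises. When $\mathcal{J}_1=\mathcal{J}_2=0$, the relations \eqref{relations} together with $m^2 \neq \tfrac{3}{4}Q^2$ force $a=b=0$, so $ab+q=q$ is proportional to $Q$, which is nonzero because the two-end hypothesis yields $m>0$ and BPS saturation then forces $Q\neq 0$; the extreme charged Myers-Perry solution with vanishing rotation parameters is by definition the extreme Reissner-Nordstr\"{o}m solution. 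When $\mathcal{J}_1=-\mathcal{J}_2\neq 0$, the relations give $(a+b)\bigl(\tfrac{2m}{3}+\tfrac{Q}{\sqrt{3}}\bigr)=0$; the second factor is bounded below by $\tfrac{2m}{3}-\tfrac{|Q|}{\sqrt{3}}>0$ under BPS saturation, hence $b=-a\neq 0$, and the resulting solution is by definition a BMPV black hole.

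The principal technical obstacle is verifying $ab+q\neq 0$ in the BMPV case, where $ab+q=-a^2+q$. Vanishing of this quantity corresponds to the over-rotating BMPV limit in which the horizon degenerates into a naked singularity, and this regime is precluded by the smoothness, completeness, and two-end topology $\mathbb{R}^4\setminus\{0\}$ assumed in Theorem \ref{MainTheorem}. Once this is ruled out in both cases, the rigidity argument applies and the corollary follows. I do not expect to need any new analytic machinery beyond what was used to prove Theorem \ref{MainTheorem} itself; the entire argument is algebraic manipulation of the relations \eqref{relations} combined with the already-established rigidity.
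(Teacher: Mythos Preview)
Your approach mirrors the paper's: the corollary there is not given a separate proof but is presented as an immediate consequence of the discussion preceding it, and your derivation of the BPS bound from \eqref{maininequality} together with the identification of the saturation cases via $\mathcal{J}_1+\mathcal{J}_2=0$ follows that discussion exactly. Your algebraic verification that $a=b=0$ (hence $ab+q=q\neq 0$) in the Reissner--Nordstr\"{o}m case, and that $b=-a\neq 0$ in the BMPV case, is correct and more explicit than what the paper writes.

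There is one genuine flaw, though: your argument that $ab+q\neq 0$ in the BMPV case is circular. The parameters $a$, $b$, $q$ are defined purely algebraically from the data's physical charges $(m,\mathcal{J}_1,\mathcal{J}_2,Q)$ via \eqref{relations}; at this stage the initial data set is \emph{not} yet known to be a charged Myers--Perry slice. Consequently, the smoothness, completeness, and two-end hypotheses on the given data say nothing about whether the extreme Myers--Perry solution with those same parameter values happens to be singular, so you cannot invoke the over-rotating BMPV pathology to exclude $q=a^2$. The paper does not resolve this edge case either---the rigidity clause of Theorem~\ref{MainTheorem} explicitly carries the hypothesis $ab+q\neq 0$, and the corollary's phrase ``under the hypotheses of Theorem~\ref{MainTheorem}'' should be read as inheriting that restriction. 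So your proposal is at the same level of rigor as the paper's own treatment; just do not claim to have independently ruled out $ab+q=0$ by the argument you give.
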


There is another known and important class of solutions with vanishing angular momenta that saturate \eqref{BPSbound}, namely the supersymmetric multi-black hole spacetimes \cite{Gibbons:1994vm} which generalize the Majumdar-Papapetrou solutions of the 4D Einstein-Maxwell equations. Their associated initial data are not covered in our analysis, and would require a strengthening of our results to the situation when $M^4$ is diffeomorphic to $\mathbb{R}^4$ with multiple points removed. Such an analysis should be possible, and has already been carried out in the $D=3+1$ case \cite{chrusciel2008mass,khuri2015positive}. Another direction for possible improvement of
Theorem \ref{MainTheorem} would be to remove the maximal assumption $\tr_{g} k=0$. Again, progress has already been made in the $D=3+1$ case \cite{cha2014deformations,cha2015deformations}, and perhaps similar methods can be applied in the current setting.

This paper is organized as follows. Section \ref{secsugra} introduces 5-dimensional minimal supergravity, and summarizes the appropriate initial data constraint equations. Section \ref{secini} discusses in detail the hypotheses imposed upon, and consequences for, the initial data, and Section \ref{secpot} establishes the existence of potentials and properties of the charges.
In Section \ref{massfun} we derive a lower bound for the mass in terms of a functional  related to the Dirichlet energy of a map from $\mathbb{R}^{3}\rightarrow G_{2(2)}/SO(4)$. Sections \ref{dirichlet}, \ref{seccut}, and \ref{secproof} are then
dedicated to proving that the extreme charged Myers-Perry harmonic map realizes the absolute minimum of the mass functional. Lastly an appendix is included to record, among other things, important properties of the charged Myers-Perry black holes.

\section{Five Dimensional Minimal Supergravity}
\label{secsugra}

In this section the relevant concepts of 5D minimal supergravity will be presented. In particular we derive the constraint equations satisfied by initial data. The bosonic field content of this theory consists of a spacetime metric $\tilde{g}_{ab}$ and a closed 2-form Maxwell field $F_{ab}$. It will be assumed that the spacetime $M^5$ possesses no nontrivial 2-cycles, so that $dF=0$ implies the existence of a globally defined vector potential $F = dA$.  The action \cite{Tomizawa:2009ua} is that of Einstein-Maxwell theory together with a Chern-Simons term, and is given by
\begin{equation}\label{sugraaction}
S = \int_{M^5} \tilde{R} \star_5 1 - \frac{1}{2} F \wedge \star_5 F - \frac{1}{3\sqrt{3}} F \wedge F \wedge A \;.
\end{equation}
where $\tilde{R}$ is the scalar curvature of $\tilde{g}$ and $\star_5$ denotes the spacetime Hodge star operation. The field equations are
\begin{equation}\label{FIELDeqns}
\tilde{R}_{ab}-\frac{1}{2}\tilde{R}\tilde{g}_{ab}
=\frac{1}{2}F_{ac}F_{b}^{\phantom{b}c}-\frac{1}{8}|F|^{2}\tilde{g}_{ab},
\end{equation}
\begin{equation}\label{FIELDeqns1}
d \star_5 F + \frac{1}{\sqrt{3}} F \wedge F=0,\text{ }\text{ }\text{ }\text{ }\text{ }
dF=0,
\end{equation}
where $\tilde{R}_{ab}$ denotes the Ricci tensor. Note that in contrast to pure Einstein-Maxwell theory, $d \star_5 F \neq 0$. It will be convenient to define
$H = \star_5 F$.  With this the field equations may be rewritten as
\begin{equation}\label{Fieldeqns1}
\tilde{R}_{ab}-\frac{1}{2}\tilde{R}\tilde{g}_{ab}
= \frac{1}{8}H_{acd}H_b^{\phantom{b}cd} + \frac{1}{4}F_{ac}F_b^{\phantom{b}c},
\end{equation}
\begin{equation}\label{fieldeqns2}
\tilde{\nabla}^b F_{ba}+ \frac{1}{2\sqrt{3}} F^{bc} H_{abc} = 0 ,\text{ }\text{ }\text{ }\text{ }\text{ }
\tilde{\nabla}^a H_{abc}=0,
\end{equation}
where $\tilde{\nabla}$ is the metric connection associated to $\tilde{g}$. Before proceeding further, we mention that throughout this section and the next, there will be numerous computations involving differential forms; the relevant conventions and useful formulae are recorded in Appendix \ref{app4}.

Let $M^4$ be a spacelike hypersurface with unit normal $n$ and induced Riemannian metric $g_{ab}=\tilde{g}_{ab}+n_a n_b$. The constraint equations associated with this surface are the nondynamical equations of \eqref{Fieldeqns1} and \eqref{fieldeqns2}, and are found by contracting each of the three sets of equations with the normal $n$. Thus, from the Einstein equations \eqref{Fieldeqns1} we obtain the following constraints from the Gauss-Codazzi relations
\begin{equation}
R + (\operatorname{Tr}_g k)^2 - |k|_{g}^2 = 2 T_{ab}n^a n^b,\text{ }\text{ }\text{ }\text{ }\text{ }
\nabla^i\left (k_{ij} - (\operatorname{Tr}_gk) g_{ij}\right) = T_{aj}n^a
\end{equation}
where $R$ and $\nabla$ are the scalar curvature and metric connection of $g$, $k$ is the extrinsic curvature or second fundamental form of $M^4$, $T$ is the stress-energy tensor given by the right-hand side of \eqref{Fieldeqns1}, and the indices $i$ and $j$ represent directions tangential to $M^4$. The electric field 1-form and magnetic field 2-form may be extracted from the field strength tensor in the usual manner
\begin{equation}\label{EandB}
E = \iota_n F, \qquad B = \iota_n \star_5 F = \iota_n H,
\end{equation}
that is $E_b = n^a F_{ab}$ and $B_{ab} = n^c H_{cab}$.  Observe that by construction $E$ and $B$ are spatial, $n_a E^a = n_a B^{ab} = 0$.  We then have
\begin{equation}
T_{ab}n^a n^b = \frac{1}{4}E_aE^a + \frac{1}{8}B_{ab}B^{ab}= \frac{1}{4}|E|_g^2
+\frac{1}{8} |B|_g^2,
\end{equation}
so that the scalar constraint becomes
\begin{equation}\label{constraint1}
R + (\operatorname{Tr}_g k)^2 - |k|_{g}^2 = \frac{1}{2}|E|_g^2 + \frac{1}{4} |B|_g^2
\end{equation}
Moreover the entire Maxwell field may be expressed in terms of the electric and magnetic fields
\begin{equation}
F = -(n \wedge E) + \star_5 (n \wedge B)\;, \qquad H = -\star_5(n \wedge E) - (n \wedge B),
\end{equation}
or equivalently
\begin{equation}\label{E0}
F_{ab} = -2n_{[a}E_{b]} + \frac{1}{2} \epsilon_{abcde}n^c B^{de} \;, \qquad
H_{abc} = -3n_{[a}B_{bc]} - \epsilon_{abcde}n^d E^e,
\end{equation}
with $\epsilon_{abcde}$ the volume form for $\tilde{g}$.
A calculation now shows that
\begin{equation}
T_{j a}n^a = \frac{1}{4}n^a \epsilon_{aj cde}E^c B^{de} =\frac{1}{4} \epsilon_{j cde}E^c B^{de},
\end{equation}
where $\epsilon_{\beta cde}$ represents the volume form of $g$ \footnote{Here we use the convention that the pullback to $M^4$ of the spacetime volume form satisfies $\iota_n \text{Vol}(\tilde{g}) = \text{Vol} (g)$.}. It follows that the momentum constraint is
\begin{equation}\label{constraint2}
\nabla^i \left(k_{ij} - (\operatorname{Tr}_g k) g_{ij}\right) = \frac{1}{2} \star(E \wedge B)_j,
\end{equation}
in which $\star$ is the Hodge star operation on the slice.

Now consider the constraints arising from the Maxwell equations.  First contract the second equation of \eqref{fieldeqns2} with the normal to obtain
\begin{equation}\label{E1}
0=n^b \tilde{\nabla}^a H_{abc} = -(\tilde{\nabla}^{a}n^{b})H_{abc}
+\tilde{\nabla}^{a}(n^{b}H_{abc}).
\end{equation}
Since $H$ is antisymmetric $n^b n^{d}n_{l} \tilde{\nabla}^{l}H_{dbc}=0$, and so the sum over $a$ in \eqref{E1} need only be performed for directions $i$ tangential to $M^4$. It follows that $\tilde{\nabla}^{i}n^{b}$ represents the second fundamental form $k$ which is symmetric, and this implies that $(\tilde{\nabla}^{i}n^{b})H_{i bc}=0$. Then using \eqref{E0} produces the desired constraint
\begin{equation}\label{constraint3}
0=\tilde{\nabla}^{a}B_{ac}=\nabla^{i}B_{i c}-n^{a}n^{b}\tilde{\nabla}_{a}B_{bc}
-n_{c}n^{b}\tilde{\nabla}^{a}B_{ab}=\nabla^{\mathfrak{i}}B_{i c}=(\operatorname{div}B)_{c}.
\end{equation}
Similarly, contract the first equation of \eqref{fieldeqns2} with the normal to obtain
\begin{equation}\label{constraint4}
\operatorname{div}E = \nabla^i E_i=\frac{1}{2\sqrt{3}}F^{bc}B_{bc}
 = \frac{1}{4\sqrt{3}}n^a \epsilon_{abcde}B^{bc}B^{de} = \frac{1}{4\sqrt{3}}\epsilon_{bcde}B^{bc}B^{de}=\frac{1}{\sqrt{3}}\star (B \wedge B).
\end{equation}

Equations \eqref{constraint1}, \eqref{constraint2}, \eqref{constraint3}, and \eqref{constraint4} comprise the full set of constraint equations for the pure minimal supergravity theory. However for the purposes of this paper the presence of addition matter fields will be taken into account, as long as the additional matter is neutral and hence does not source the Maxwell field. It then becomes convenient to separate out contributions from the Maxwell field to the energy and momentum densities. We thus rewrite the constraint equations as
\begin{equation}\label{Const1}
16\pi\mu_{SG} = R+(\tr_{g}k)^{2}-|k|_{g}^{2}
-\frac{1}{2}|E|_g^2 - \frac{1}{4} |B|_g^2,
\end{equation}
\begin{equation}\label{Const2}
8\pi J_{SG} = \operatorname{div}_{g}(k-(\tr_{g}k)g)-\frac{1}{2} \star(E \wedge B),
\end{equation}
\begin{equation}\label{Const3}
\operatorname{div}_{g}E =\frac{1}{\sqrt{3}}\star (B \wedge B),\text{ }\text{ }\text{ }\text{ }\text{ }\text{ }\operatorname{div}_{g}B=0,
\end{equation}
where $\mu_{SG}$ and $J_{SG}$ are, respectively, the energy and momentum densities of the nonelectromagnetic matter fields. The equations \eqref{Const3} may be interpreted as stating that charged matter is not present.

\section{The Initial Data}
\label{secini}

An initial data set $(M^4,g,k,E,B)$ for the 5-dimensional minimal supergravity equations
consists of a Riemannian manifold $M^4$, with metric $g$, a symmetric 2-tensor $k$ denoting the second fundamental form, a 1-form and 2-form $E$ and $B$ representing the electric and magnetic fields, respectively, all of which satisfy the constraint equations \eqref{Const1}, \eqref{Const2}, and \eqref{Const3}. The initial data set is assumed to possess a $U(1)^2$ symmetry generated by two Killing fields $\eta_{(l)}$, $l=1,2$, that is
\begin{equation}\label{Killing}
\mathfrak{L}_{\eta_{(l)}}g=\mathfrak{L}_{\eta_{(l)}}k
=\mathfrak{L}_{\eta_{(l)}}\mu_{SG}=\mathfrak{L}_{\eta_{(l)}}J_{SG}
=\mathfrak{L}_{\eta_{(l)}}E=\mathfrak{L}_{\eta_{(l)}}B=0,
\end{equation}
where $\mathfrak{L}_{\eta_{(l)}}$ denotes Lie differentiation. In order to incorporate the presence of a black hole, the manifold $M^4$ will have two ends, one asymptotically flat and the other either asymptotically flat or asymptotically cylindrical.
We will also postulate that
$M^4$ has two ends, with one designated end being asymptotically flat, and the other being either asymptotically
flat or asymptotically cylindrical. A region $M^{4}_{\text{end}}\subset M^4$ diffeomorphic to $\mathbb{R}^{4}\setminus\text{Ball}$ is called asymptotically flat, if there exist coordinates such that the following fall-off conditions hold
\begin{equation}\label{AF1}
g_{ij}=\delta_{ij}+O_{1}(r^{-1-\kappa}),\text{ }\text{ }\text{ }\text{ }\text{ }
k_{ij}=O(r^{-2-\kappa}),\text{
}\text{ }\text{ }\text{ }
\text{ }E_{i}=O(r^{-2-\kappa}),\text{ }\text{ }\text{ }\text{ }\text{ }
B_{ij}=O(r^{-2-\kappa}),
\end{equation}
\begin{equation}\label{AF2}
\mu_{SG}\in L^{1}(M^{4}_{\text{end}}),\text{ }\text{ }\text{ }\text{ }\text{ }
J_{SG}^{i}\in L^{1}(M^{4}_{\text{end}}),\text{ }\text{ }\text{ }\text{ }\text{ }
\operatorname{div}_{g}E-\frac{1}{\sqrt{3}}\star(B\wedge B)\in L^{1}(M^{4}_{\text{end}}),
\end{equation}
for some $\kappa>0$.
These asymptotics guarantee that the ADM energy and linear momentum, as well as the total electric charge are all well-defined. Due to the simple topology of the initial data, in particular the lack of nontrivial 2-cycles, the total magnetic charge always vanishes.

The asymptotics for cylindrical ends are most easily described in Brill coordinates, which we now describe. A basic hypothesis of Theorem \ref{MainTheorem} is the existence of a Brill coordinate system \cite{alaee2014thesis,Alaeeremarks2015}, that is, a global set of cylindrical coordinates $(\rho,z,\phi^{1},\phi^{2})$ in which the metric takes the form
\begin{equation}\label{BrillMetric}
g=\frac{e^{2U+2\alpha}}{2\sqrt{\rho^2+z^2}}\left(d\rho^2+d z^2\right)+e^{2U}\lambda_{ij}\left(d\phi^i+A^i_l d y^l\right)\left(d\phi^j+A^j_l d y^l\right),
\end{equation}
for some functions $U$, $\alpha$, $A_{l}^{i}$, and a symmetric positive definite matrix $\lambda=(\lambda_{ij})$ with $\det\lambda=\rho^{2}$, $i,j,l=1,2$, $(y^1,y^2)=(\rho,z)$. All of the quantities involved satisfy the asymptotics \eqref{Asym1}-\eqref{Asym12}, and are independent of $(\phi^{1},\phi^{2})$, which are the coordinates for the $U(1)^2$ generators $\eta_{(l)} = \partial_{\phi^l}$, $l=1,2$. Furthermore, the values of the coordinate functions are restricted to the ranges $\rho\in[0,\infty)$, $z\in \mathbb{R}$, and $\phi^i\in [0,2\pi]$, $i=1,2$. Brill coordinates may also be expressed in polar form through the transformation
\begin{equation}\label{pcoordinates}
\rho=\frac{1}{2}r^{2}\sin(2\theta),\text{ }\text{ }\text{ }\text{ }\text{ }\text{ }\text{ }
z=\frac{1}{2}r^{2}\cos(2\theta),\text{ }\text{ }\text{ }\text{ }\text{ }\text{ }\text{ }
r^{2}=2\sqrt{\rho^{2}+z^{2}},
\end{equation}
where $r\in [0,\infty)$, $\theta\in[0,\pi/2]$. For instance, the flat metric on $\mathbb{R}^{4}$ is given in these two sets of coordinates by
\begin{equation}\label{FlatMetric}
\delta_4= \frac{d\rho^2 + d z^2}{2\sqrt{\rho^2 + z^2} } + \sigma_{ij}d\phi^{i} d\phi^{j}=
d r^2 + r^2d\theta^2 +r^{2}\left(\sin^{2}\theta (d\phi^1)^{2}+\cos^{2}\theta (d\phi^2)^{2}\right),
\end{equation}
where $\sigma_{ij}$ is defined by the second equality.

There are three different asymptotic regimes of interest, namely near infinity, the origin, and the axes $\Gamma_{\pm}=\{\rho=0, \pm z>0\}$. Consider first the asymptotics near infinity as $r\rightarrow\infty$. In this region the initial data set is asymptotically flat, which motivates the requirements
\begin{equation}\label{Asym1}
U=O_{1}(r^{-1-\kappa}),\text{ }\text{ }\text{ }\text{ }\text{ }
\alpha=O_1(r^{-1-\kappa}),\text{ }\text{ }\text{ }\text{ }\text{ }
A_{\rho}^i=\rho O_{1}(r^{-5-\kappa}),
\text{ }\text{ }\text{ }\text{ }\text{ }
A_z^i=O_{1}(r^{-3-\kappa}),
\end{equation}
\begin{equation}\label{Asym2}
\lambda_{ii}=\left(1+(-1)^{i}c_{0}r^{-1-\kappa}+O_{1}(r^{-2-\kappa})\right)\sigma_{ii},\text{ }\text{ }\text{ }\text{ }\text{ }\text{ }\text{ }\lambda_{12}=\rho^2 O_1(r^{-5-\kappa}),
\end{equation}
\begin{equation}\label{Asym3}
|k|_{g}=O(r^{-2-\kappa}),\text{ }\text{ }\text{ }\text{ }\text{ }\text{ }\text{ }
|E|_{g}=O(r^{-2-\kappa}),\text{ }\text{ }\text{ }\text{ }\text{ }\text{ }\text{ }
|B|_{g}=O(r^{-2-\kappa}),
\end{equation}
where $c_{0}$ is a function of $\theta$. For the asymptotics as $r\rightarrow 0$ we require, in the asymptotically flat case
\begin{equation}\label{Asym4}
U=-2\log r+O_{1}(1),\text{ }\text{ }\text{ }\text{ }\text{ }
\alpha=O_1(r^{1+\kappa}),\text{ }\text{ }\text{ }\text{ }\text{ }
A_{\rho}^i=\rho O_{1}(r^{1+\kappa}),
\text{ }\text{ }\text{ }\text{ }\text{ }
A_z^i=O_{1}(r^{3+\kappa}),
\end{equation}
\begin{equation}\label{Asym5}
\lambda_{ii}=\left(1+(-1)^{i}c_{1}r^{1+\kappa}+O_{1}(r^{2+\kappa})\right)\sigma_{ii},\text{ }\text{ }\text{ }\text{ }\text{ }\text{ }\text{ }\lambda_{12}=\rho^2 O_1(r^{-\frac{1}{2}+\kappa}),
\end{equation}
\begin{equation}\label{Asym6}
|k|_{g}=O(r^{2+\kappa}),\text{ }\text{ }\text{ }\text{ }\text{ }\text{ }\text{ }
|E|_{g}=O(r^{2+\kappa}),\text{ }\text{ }\text{ }\text{ }\text{ }\text{ }\text{ }
|B|_{g}=O(r^{2+\kappa}),
\end{equation}
where $c_{1}$ is a function of $\theta$, and in the asymptotically cylindrical case
\begin{equation}\label{Asym7}
U=-\log r+O_{1}(1),\text{ }\text{ }\text{ }\text{ }\text{ }\text{ }
\alpha=O_{1}(1),\text{ }\text{ }\text{ }\text{ }\text{ }\text{ }A_{\rho}^i=\rho O_{1}(r^{1+\kappa}),
\text{ }\text{ }\text{ }\text{ }\text{ }\text{ }
A_z^i=O_{1}(r^{3+\kappa}),
\end{equation}
\begin{equation}\label{Asym8}
\lambda_{ij}=r^{2}\tilde{\sigma}_{ij}+O_{1}(r^{2+\kappa}),\text{ }\text{ }\text{ }\text{ }\text{ }\text{ }\text{ }|k|_{g}=O(r^{2+\kappa}),\text{ }\text{ }\text{ }\text{ }\text{ }\text{ }
|E|_{g}=O(r^{2+\kappa}),\text{ }\text{ }\text{ }\text{ }\text{ }\text{ }
|B|_{g}=O(r^{2+\kappa}),
\end{equation}
where $\tilde{\sigma}$ is a positive definite metric on the 2-torus depending only on $\theta$. Lastly, the asymptotics near the axes as $\rho\rightarrow 0$ are required to satisfy
\begin{equation}\label{Asym10}
U=O_{1}(1),\text{ }\text{ }\text{ }\text{ }\text{ }\text{ }
\alpha=O_1(1),\text{ }\text{ }\text{ }\text{ }\text{ }\text{ }
A_{\rho}^i= O_{1}(\rho),
\text{ }\text{ }\text{ }\text{ }\text{ }\text{ }
A_z^i=O_{1}(1),\text{ }\text{ }\text{ }\text{ }\text{ }\text{ }
|k|_{g},|E|_{g},|B|_{g}=O(1),
\end{equation}
\begin{equation}\label{Asym12}
\lambda_{11},\lambda_{12}=O(\rho^{2}),\text{ }\text{ }\text{ }\text{ }\text{ }
\lambda_{22}=O(1)
\text{ }\text{ }\text{ on }\text{ }\text{ }\Gamma_{+},\text{ }\text{ }\text{ }\text{ }\text{ }
\lambda_{22},\lambda_{12}=O(\rho^{2}),\text{ }\text{ }\text{ }\text{ }\text{ }
\lambda_{11}=O(1)
\text{ }\text{ }\text{ on }\text{ }\text{ }\Gamma_{-}.
\end{equation}

It is shown in \cite{alaee2015global} that the Brill coordinate asymptotics \eqref{Asym1}-\eqref{Asym6}, associated with asymptotically flat regions, are consistent  with the asymptotics \eqref{AF1} and \eqref{AF2} used in the definition of asymptotically flat ends. Finally, we mention that the asymptotics \eqref{Asym10} and \eqref{Asym12} are not sufficient to guarantee regularity of the geometry at the axes, that is, the absence of conical singularities. For this, a compatibility condition \cite{alaee2015global} is needed
\begin{equation}\label{ConeCondition}
\alpha(0,z)=\frac{1}{2}\log\left(|z|\partial_{\rho}^{2}\lambda_{ii}(0,z)\right)=:\alpha_{\pm}(z)\text{ }\text{ }\text{ }\text{ on }\text{ }\text{ }\text{ }\Gamma_{\pm},
\end{equation}
where $i=1,2$ corresponds to $\Gamma_{+},\Gamma_{-}$, respectively. Thus, under the assumptions of Theorem \ref{MainTheorem} in which the geometry is smooth, \eqref{ConeCondition} holds.

\section{Potentials and Charges}
\label{secpot}

The $U(1)^2$ symmetry, together with the lack of charged matter and the vanishing of the momentum density in Killing directions, $J_{SG}(\eta_{(l)})=0$ for $l=1,2$, guarantees that potentials exist for portions of the electric and magnetic fields as well as for parts of the second fundamental form $k$. Moreover, these potentials encode the total charge and angular momentum of the data. In this section we will establish the global existence of such potentials and give their relationship with the charges. While this will be carried out here from the `initial data point of view', we note that the same constructions may also be accomplished from the `spacetime perspective' as is demonstrated in Appendix \ref{app3}.

We begin with the magnetic field. Observe that by \eqref{Const3}, \eqref{Killing}, and Cartan's formula
\begin{equation}\label{Bequation}
d \left(\iota_{\eta_{(i)}} \star B\right) = \mathfrak{L}_{\eta_{(i)}}\star B
-\iota_{\eta_{(i)}} d \star B  = 0.
\end{equation}
Since $H_1(M^4)$ is trivial, there exists a globally defined potential function such that
\begin{equation}\label{Bpotential}
d\psi^i  = \iota_{\eta_{(i)}} \star B.
\end{equation}
We emphasize that the index $i$ is a label here, and not a tensor index that is raised and lowered with the metric. Observe further that $d\psi^i (\eta_{(l)}) = 0$.
To see this use \eqref{Killing} and \eqref{Bequation} to compute
\begin{equation}\label{starBconst}
d \left(\iota_{\eta_{(1)}} \iota_{\eta_{(2)}} \star B\right) =
\mathfrak{L}_{\eta_{(1)}}\iota_{\eta_{(2)}} \star B
-\iota_{\eta_{(1)}}d\left(\iota_{\eta_{(2)}} \star B\right)= 0.
\end{equation}
It follows that the function $\iota_{\eta_{(1)}} \iota_{\eta_{(2)}} \star B$ is constant.
Moreover, \eqref{Asym12} implies that $|\eta_{(1)}|_{g}=0$ on $\Gamma_{+}$ and $|\eta_{(2)}|_{g}=0$ on $\Gamma_{-}$, so that this constant is zero. Hence
\begin{equation}
\mathfrak{L}_{\eta_{(l)}} \psi^i=0,\text{ }\text{ }\text{ }\text{ }\text{ }i,l=1,2,
\end{equation}
showing that the potentials must be invariant under the $U(1)^2$ action.

Consider next the electric field. Similar calculations as those used above with the magnetic field, combined with the electric field constraint \eqref{Const3}, produce
\begin{equation}
d\left( \iota_{\eta_{(2)}} \iota_{\eta_{(1)}}  \star E \right)  =  \iota_{\eta_{(2)}} \iota_{\eta_{(1)}}  d \star E =   \iota_{\eta_{(2)}} \iota_{\eta_{(1)}} \left(\frac{-1}{\sqrt{3}}B \wedge B\right).
\end{equation}
Then noting the identity $B \wedge B = \star B \wedge \star B$ for any 2-form $B$ on a 4-manifold, and using \eqref{Bpotential} yields
\begin{equation}
d\left( \iota_{\eta_{(2)}} \iota_{\eta_{(1)}}   \star E\right)  =  -\frac{1}{\sqrt{3}} \iota_{\eta_{(2)}} \iota_{\eta_{(1)}} (\star B \wedge \star B) = \frac{2}{\sqrt{3}} d\psi^1 \wedge d\psi^2 = \frac{1}{\sqrt{3}} d \left(\psi^1 d\psi^2-\psi^2 d \psi^1\right).
\end{equation}
Thus, there exists a globally defined potential function with
\begin{equation}\label{Epotential}
d\chi =  \iota_{\eta_{(2)}} \iota_{\eta_{(1)}}  \star E - \frac{1}{\sqrt{3}} \left( \psi^1 d\psi^2  -\psi^2 d \psi^1 \right).
\end{equation}
Moreover it immediately follows that this potential is invariant under the $U(1)^2$ symmetry
\begin{equation}
\mathfrak{L}_{\eta_{(l)}}\chi=0, \text{ }\text{ }\text{ }\text{ }\text{ }l=1,2.
\end{equation}

Finally we demonstrate the existence of charged twist potentials for the second fundamental form $k$, which encode the angular momentum contained in the initial data.  Define
\begin{equation}
\mathcal{P}^l =2\star\left(p(\eta_{(l)}) \wedge \eta_{(1)} \wedge \eta_{(2)}\right), \qquad p = k - (\operatorname{Tr}_{g} k) g,
\end{equation}
where again $l=1,2$ here is not a tensor index but is rather a label.
It can be shown \cite{alaee2015global} that
\begin{equation}
d\mathcal{P}^l = -2\iota_{\eta_{(1)}} \iota_{\eta_{(2)}} d \star p(\eta_{(l)}) = -2\iota_{\eta_{(1)}} \iota_{\eta_{(2)}}\star \left(\star d \star p(\eta_{(l)})\right).
\end{equation}
Now from the constraint equation \eqref{Const2}, the hypothesis $J_{SG}(\eta_{(l)})=0$, and the fact that $\eta_{(l)}$ is a Killing field, we have
\begin{equation}\label{12345}
-\star d \star p(\eta_{(l)})  = \operatorname{div}_{g}p(\eta_{(l)})   = 8\pi J_{SG}(\eta_{(l)})+\frac{1}{2}\iota_{\eta_{(l)}} \star (E \wedge B)
=\frac{1}{2}\iota_{\eta_{(l)}} \star (E \wedge B)
\end{equation}
and hence
\begin{equation}
d\mathcal{P}^l = \iota_{\eta_{(1)}} \iota_{\eta_{(2)}}\star \left(\iota_{\eta_{(l)}} \star (E \wedge B)\right).
\end{equation}
Let us now compute the right-hand side in terms of the electromagnetic potentials derived above. Observe that
\begin{equation}\label{123456}
\iota_{\eta_{(l)}} \star (E \wedge B) = \frac{1}{2}\eta^{i}_{(l)}\epsilon_{ijns}
E^{j}B^{ns} = E^{j} (\iota_{\eta_{(l)}} \star B)_{j} = d\psi^l(E)
\end{equation}
and
\begin{equation}\label{1234567}
\star d\psi^l(E) = d\psi^l(E) \epsilon = \star E \wedge d\psi^l,
\end{equation}
so that
\begin{align}
\begin{split}
d\mathcal{P}^l =&   \iota_{\eta_{(1)}} \iota_{\eta_{(2)}} \left(\star E \wedge d\psi^l\right) \\
=&  \left(\iota_{\eta_{(1)}} \iota_{\eta_{(2)}} \star E \right) \wedge d\psi^l \\
=&  d\psi^{l} \wedge  \left( d\chi + \frac{1}{\sqrt{3}}\left(\psi^1 d\psi^2 - \psi^2 d\psi^1\right) \right) \\
=&  d\left[ \psi^l \left( d\chi + \frac{1}{3\sqrt{3}}\left(\psi^1 d\psi^2 - \psi^2 d\psi^1\right) \right)\right].
\end{split}
\end{align}
Therefore, for $l=1,2$, a globally defined potential function exists such that
\begin{equation}\label{chargedtwistpotentials}
d\zeta^l = \mathcal{P}^l -  \psi^l \left( d\chi + \frac{1}{3\sqrt{3}}\left(\psi^1 d\psi^2 - \psi^2 d\psi^1\right) \right),
\end{equation}
and it is clear that these potentials are also $U(1)^2$ invariant
\begin{equation}
\mathfrak{L}_{\eta_{(i)}} \zeta^l=0,\text{ }\text{ }\text{ }\text{ }\text{ }i,l=1,2.
\end{equation}

Having constructed a total of five potential functions, one for the electric field and two each for the magnetic field and second fundamental form, we will now examine exactly which components of the initial data are determined by the potentials. In order to do this it will be convenient to introduce the following frame field associated with Brill coordinates
\begin{equation}
e_1= e^{-U-\alpha+\log r}\left(\partial_\rho - A^{i}_{\rho} \partial_{\phi^i}\right), \text{ }\text{ }\text{ }e_2 = e^{-U-\alpha+\log r}\left(\partial_z - A^{i}_{z} \partial_{\phi^i}\right),\text{ }\text{ }\text{ }
 e_{i+2} = e^{-U} \partial_{\phi^i},\text{ }\text{ }i=1,2,
\end{equation}
with dual co-frame
\begin{equation}
\theta^1= e^{U+\alpha-\log r}d\rho,\text{ } \text{ } \text{ } \text{ } \text{ } \text{ }
\theta^2= e^{U+\alpha-\log r}dz,\text{ }\text{ }\text{ }\text{ }\text{ }\text{ }
\theta^{i+2}=e^{U}\left( d\phi^i+A^i_l d y^l\right),\text{ }\text{ }\text{ }\text{ }i=1,2,
\end{equation}
in which the metric takes the form
\begin{equation}
g=(\delta_{2})_{ln}\theta^{l}\theta^{n}+\lambda_{ij}\theta^{i+2}\theta^{j+2}.
\end{equation}
Consider first the magnetic field. In index notation \eqref{Bpotential} becomes
\begin{equation}\label{1.11}
d\psi^{i} = -\frac{1}{2}\epsilon_{jlnt}\eta_{(i)}^{l} B^{nt}
\theta^{j} \\
= -\frac{1}{2}e^U \epsilon_{j(i+2)nt}B^{nt} \theta^{j}.
\end{equation}
Since $\psi^i$ is invariant under the $U(1)^2$ symmetry, the indices $n$ and $t$ for $B$ can only take the values 1 and 2. Thus
\begin{equation}
d\psi^{i} = e^{2U + \alpha - \log r} \epsilon_i^{\phantom{i}l}\left(B_{2(l+2)} d\rho - B_{1 (l+2)}dz\right),
\end{equation}
where $\epsilon_{ij}$ is the volume form associated to the metric $\lambda_{ij}$. Then applying $\epsilon^i_{\phantom{i}j}$ to both sides yields
\begin{align}\label{Bcomponents}
\begin{split}
B_{1(j+2)}=B(e_1,e_{j+2}) =& -e^{-2U - \alpha + \log r}\epsilon^i_{\phantom{i}j}\partial_z \psi^{i},\\
B_{2(j+2)}=B(e_2,e_{j+2}) =& e^{-2U - \alpha + \log r}\epsilon^i_{\phantom{i}j}\partial_\rho \psi^{i}.
\end{split}
\end{align}
Furthermore, the condition $\star B(\eta_{(i)},\eta_{(j)}) =0$ implies that
$B_{12}=0$. Turn now to the electric field. In the frame basis
\begin{equation}
  \iota_{\eta_{(2)}} \iota_{\eta_{(1)}}  \star E = e^{2U} \epsilon_{34ij}E^j \theta^i = e^{2U} \rho (E^2 \theta^1 - E^1 \theta^2),
\end{equation}
from which we obtain
\begin{align}\label{Ecomponents}
\begin{split}
  E_1=E(e_{1}) =& -\frac{e^{-3U - \alpha + \log r}}{\rho} \left(\partial_z \chi + \frac{1}{\sqrt{3}}(\psi^1 \partial_z \psi^2 - \psi^2 \partial_z \psi^1)\right),\\
  E_2=E(e_{2})  =& \frac{e^{-3U - \alpha + \log r}}{\rho} \left(\partial_\rho \chi + \frac{1}{\sqrt{3}}(\psi^1 \partial_\rho \psi^2 - \psi^2 \partial_\rho \psi^1)\right).
\end{split}
\end{align}
Similar considerations applied to \eqref{chargedtwistpotentials} produce expressions for certain components of the second fundamental form
\begin{align}\label{kcomponents}
\begin{split}
k_{2(i+2)} = k(e_2,e_{i+2}) =& \frac{e^{-4U - \alpha + \log r}}{2\rho}\left[ \partial_\rho \zeta^i + \psi^{i}\left(\partial_\rho \chi  + \frac{1}{3\sqrt{3}}(\psi^1\partial_\rho \psi^2 - \psi^2 \partial_\rho \psi^1)\right)\right], \\
k_{1(i+2)} = k(e_1,e_{i+2}) =& -\frac{e^{-4U - \alpha + \log r}}{2\rho}\left[ \partial_z \zeta^i + \psi^{i}\left(\partial_z \chi  + \frac{1}{3\sqrt{3}}(\psi^1\partial_z \psi^2 - \psi^2 \partial_z \psi^1)\right)\right].
\end{split}
\end{align}

It will now be described how the potentials encode the total charge and angular momentum.
First consider the electric charge. The definition \eqref{totalcharge} of total charge may be motivated by the constraint equation \eqref{Const3}
\begin{equation}
d \star E = - \frac{1}{\sqrt{3}}(B \wedge B) = -\frac{1}{\sqrt{3}} (\star B \wedge \star B).
\end{equation}
Since $H_{2}(M^{4})$ is trivial and $d\star B =0$ there exists a globally defined vector potential such that $\star B = d\vec{A}$. It follows that
\begin{equation}
d \left( \star E + \frac{1}{\sqrt{3}} \vec{A} \wedge \star B \right) =0.
\end{equation}
This suggests the following definition of total charge contained within a 3-cycle $\mathcal{S}^{3}$:
\begin{equation}\label{CHARGE}
\tilde{Q}(\mathcal{S}^3) = \frac{1}{16\pi} \int_{\mathcal{S}^{3}} \left(\star E + \frac{1}{\sqrt{3}} \vec{A} \wedge \star B\right).
\end{equation}
Via Stoke's theorem we find that $\tilde{Q}(\mathcal{S}^{3}_{1})=\tilde{Q}(\mathcal{S}^{3}_{2})$ for any two homologous 3-cycles, and thus this definition yields conservation of charge. The total charge is then given by $\tilde{Q}=\lim_{r\rightarrow\infty}\tilde{Q}(S^{3}_{r})$, where $S^{3}_{r}$ are coordinate spheres in the asymptotically flat end; see \cite{Marolf:2000cb} for different notions of charge. Although this appears to differ from the classical notion of total charge $Q$ given in \eqref{totalcharge}, the two actually agree $\tilde{Q}=Q$, since according to the asymptotics of Appendix \ref{app1} the extra term $\vec{A} \wedge \star B$ decays sufficiently fast in the limit so as not to yield a contribution.
Note also that even though the expression \eqref{CHARGE} involves the vector potential, it is still gauge invariant. To see this,
consider the gauge transformation $\vec{A}\mapsto \vec{A}+df$ and observe that
\begin{equation}
\int_{S^{3}_{r}}df\wedge \star B=\int_{S^{3}_{r}}\left(d(f\star B)-fd\star B\right)=0,
\end{equation}
since $d\star B=0$ on $M^4$ and $d$ commutes with pullback. In order to relate the charge to the electric potential $\chi$, use Stokes' theorem to find
\begin{align}
\begin{split}
\tilde{Q} =& \lim_{r \to 0} \frac{1}{16\pi} \int_{\partial B(r)} \left(\star E + \frac{1}{\sqrt{3}} \vec{A} \wedge \star B\right) \\
=&  \lim_{r \to 0} \frac{1}{16\pi} \int_{\partial B(r)} \left( E_i  - \frac{1}{\sqrt{3}}\star( \vec{A} \wedge \star B)_i\right)\nu^i dV \\
=& \lim_{r \to 0} \frac{1}{32\pi} \int_{\partial B(1)} \left( E_i  - \frac{1}{\sqrt{3}}\star( \vec{A} \wedge \star B)_i\right)\nu^i  e^{3U + \alpha} r^3 \sin 2\theta \; d\theta \; d\phi^1 \; d\phi^2,
\end{split}
\end{align}
where $\nu$ is the unit normal pointing towards spatial infinity and $B(r)$ is a coordinate ball of radius $r$. From \eqref{Bpotential} it follows that
\begin{equation}
\star B = d\phi^i \wedge d \psi^i = -d\left( \psi^i d\phi^i\right)\text{ }\text{ }\text{ }\text{ }\text{ }\text{ } \Rightarrow
\text{ }\text{ }\text{ }\text{ }\text{ }\text{ }\vec{A} = -\psi^i d\phi^i,
\end{equation}
and so
\begin{equation}
\star (\vec{A} \wedge \star B) = \frac{re^{-3U - \alpha}}{\rho}\left[ \left(\psi^1 \partial_\rho \psi^2  - \psi^2 \partial_\rho \psi^1\right) \theta^2-\left(\psi^1 \partial_z \psi^2  - \psi^2 \partial_z \psi^1\right) \theta^1\right].
\end{equation}
Combining this with \eqref{Ecomponents} produces
\begin{equation}\label{asdfg}
E_1 - \frac{1}{\sqrt{3}}\star(\vec{A} \wedge \star B)_1 =-\frac{e^{-3U - \alpha}r}{\rho} \partial_z \chi, \qquad
E_2 - \frac{1}{\sqrt{3}}\star(\vec{A} \wedge \star B)_2 = \frac{e^{-3U - \alpha}r}{\rho} \partial_\rho \chi,
\end{equation}
and thus
\begin{equation}\label{ChargePotential}
Q =\tilde{Q}= -\lim_{r \to 0} \frac{1}{32\pi} \int_{\partial B(1)}  2 \partial_\theta \chi \; d\theta \; d\phi^1 \; d\phi^2 = \frac{\pi}{4} \left[ \chi(\Gamma_+) - \chi(\Gamma_-)\right].
\end{equation}
The notation $\chi(\Gamma_{\pm})$ suggests that the potential $\chi$ is constant on the axes $\Gamma_{\pm}$, and indeed this is the case in light of \eqref{asdfg}.

None of the results of this paper take into account magnetic charge, for the following reason. The most natural way it seems to define a magnetic charge would be to integrate $\star B$ over a 2-cycle. However since $H_{2}(M^4)=0$, all 2-cycles bound a 3-domain, and since $\star B$ is closed, we find that this integral is zero. This is consistent with there being no magnetic charge from the constraint equations \eqref{Const3}. Thus, in the current setting magnetic charge vanishes. If, on the other hand, the initial data has more complex topology, then each nontrivial homology class in $H_2(M^4)$ yields a well-defined magnetic charge by integrating $\star B$ over a representative. The total magnetic charge may then be defined by summing these `local' charges over all homology classes.

We now discuss angular momentum. Combining \eqref{12345}, \eqref{123456}, and \eqref{1234567} produces
\begin{equation}
d\star p(\eta_{(l)})=-\frac{1}{2}\star E\wedge d\psi^{l}.
\end{equation}
From \eqref{Epotential} we have
\begin{equation}\label{starE}
\star E = d\phi^1 \wedge d\phi^2 \wedge \left( d\chi + \frac{1}{\sqrt{3}}(\psi^1 d \psi^2 - \psi^2 d\psi^1)\right) + \overline{\omega},
\end{equation}
where $\overline{\omega}$ is a closed 3-form with the property that $i_{\eta_{(2)}} i_{\eta_{(1)}}  \overline{\omega}= 0$. It follows that this form has the structure
\begin{equation}\label{,,,}
\overline{\omega} = \overline{\omega}_{ijn} dy^i \wedge dy^j \wedge d\phi^n
\end{equation}
where $y^1=\rho$ and $y^2 = z$, and so $\overline{\omega} \wedge d\psi^l =0$.  Therefore
\begin{equation}
\star E \wedge d\psi^l = -d\phi^1 \wedge d\phi^2 \wedge d\left[ \psi^l \left(d\chi + \frac{1}{3\sqrt{3}}\left(\psi^1 d\psi^2  - \psi^2 d\psi^1\right)\right) \right],
\end{equation}
and as a consequence
\begin{equation}
d\left[\star p(\eta_{(l)})-\frac{1}{2}d\phi^{1}\wedge d\phi^{2}\wedge
\psi^l \left(d\chi + \frac{1}{3\sqrt{3}}\left(\psi^1 d\psi^2  - \psi^2 d\psi^1\right)\right)\right]=0.
\end{equation}
In analogy with charge, and in similarity to the $D=3+1$ case \cite{dain2013lower}, this suggests the following definition of total angular momenta contained within a 3-cycle $\mathcal{S}^{3}$:
\begin{equation}\label{newam}
\tilde{\mathcal{J}}_{l}(\mathcal{S}^3) = \frac{1}{8\pi} \int_{\mathcal{S}^{3}}\left[\star p(\eta_{(l)})-\frac{1}{2}d\phi^{1}\wedge d\phi^{2}\wedge
\psi^l \left(d\chi + \frac{1}{3\sqrt{3}}\left(\psi^1 d\psi^2  - \psi^2 d\psi^1\right)\right)\right],\text{ }\text{ }\text{ }\text{ }\text{ }l=1,2.
\end{equation}
By Stoke's theorem we then find that $\tilde{\mathcal{J}}_{l}(\mathcal{S}^{3}_{1})=\tilde{\mathcal{J}}_{l}(\mathcal{S}^{3}_{2})$ for any two homologous 3-cycles, yielding conservation of angular momentum. The total angular momentum is given by $\tilde{\mathcal{J}}_{l}=\lim_{r\rightarrow\infty}\tilde{\mathcal{J}}_{l}(S^{3}_{r})$. Although this definition of total angular momentum appears to differ from \eqref{admam}, the ADM definition
\begin{equation}
\mathcal{J}_{l}=\frac{1}{8\pi}\int_{S^{3}_{\infty}}\star p(\eta_{(l)}),
\text{ }\text{ }\text{ }\text{ }\text{ }l=1,2,
\end{equation}
the two actually agree $\tilde{\mathcal{J}}_{l}=\mathcal{J}_{l}$. This is due to the fact that the asymptotics (Appendix \ref{app1}) satisfied by the electromagnetic potentials appearing in \eqref{newam}, guarantee that the corresponding integral vanishes in the limit as $r\rightarrow\infty$. Furthermore, using \eqref{kcomponents} we may relate the total angular momenta to the twist potentials as follows
\begin{align}
\begin{split}
\mathcal{J}_{l}=\tilde{\mathcal{J}}_{l}=&
\lim_{r\rightarrow 0}\frac{1}{8\pi} \int_{\partial B(r)}\left[\star p(\eta_{(l)})-\frac{1}{2}d\phi^{1}\wedge d\phi^{2}\wedge
\psi^l \left(d\chi + \frac{1}{3\sqrt{3}}\left(\psi^1 d\psi^2  - \psi^2 d\psi^1\right)\right)\right]\\
=&\lim_{r\rightarrow 0}\frac{1}{8\pi}
\int_{\partial B(r)}\left[k(\partial_{\phi^{l}},\nu)dV-\frac{1}{2}d\phi^{1}\wedge d\phi^{2}\wedge
\psi^l \left(d\chi + \frac{1}{3\sqrt{3}}\left(\psi^1 d\psi^2  - \psi^2 d\psi^1\right)\right)\right]\\
=&\lim_{r\rightarrow0}\frac{1}{16\pi}\int_{\partial B(1)}\partial_{\theta}\zeta^{l} d\theta d\phi^{1}d\phi^{2}\\
=&\frac{\pi}{4}[\zeta^{l}(\Gamma_{-})-\zeta^{l}(\Gamma_{+})].
\end{split}
\end{align}
In similarity to the electric charge, the notation $\zeta^{l}(\Gamma_{\pm})$ suggests that the potentials $\zeta^{l}$ are constant on the axes $\Gamma_{\pm}$, and indeed this is implied by combining \eqref{Ecomponents}, \eqref{kcomponents}, and \eqref{asdfg}.

Lastly, we remark that the integrand in \eqref{newam} constructed from the scalar potentials, may be rewritten in terms of the physical fields $E$, $B$, and $\vec{A}$. To see this, observe that $\vec{A}(\eta_{(l)}) = -\psi^l$ and
\begin{equation}
\vec{A} \wedge \star B = -  d\phi^1 \wedge d\phi^2 \wedge (\psi^1 d\psi^2 - \psi^2 d\psi^1).
\end{equation}
Using this and \eqref{starE} produces
\begin{equation}
d\phi^1 \wedge d\phi^2 \wedge d\chi = \star E + \frac{1}{\sqrt{3}} \vec{A} \wedge \star B - \overline{\omega}.
\end{equation}
Recall that the closed 3-form $\overline{\omega}$ satisfies \eqref{,,,}, and therefore vanishes when pulled back to any coordinate sphere $r=const$. In particular, this implies that it integrates to zero on any 3-cycle and is therefore exact. We now have
\begin{equation}\label{newam1}
\tilde{\mathcal{J}}_{l} = \frac{1}{8\pi} \int_{S_{\infty}^{3}}\left[\star p(\eta_{(l)})+\frac{1}{2}\vec{A}(\eta_{(l)})\left(\star E  + \frac{2}{3\sqrt{3}}(\vec{A} \wedge \star B)\right)\right],\text{ }\text{ }\text{ }\text{ }\text{ }l=1,2.
\end{equation}

\section{The Mass Functional}
\label{massfun}

A key step in the proof of the main theorem is to relate the ADM mass to the energy of a certain harmonic map, described in detail in the next section. Heuristically the ADM mass arises as the boundary term obtained from integrating the scalar curvature by parts. Thus, a lower bound for the mass may be achieved by estimating the scalar curvature from below. By virtue of the energy condition $\mu_{SG}\geq 0$, such a lower bound may be achieved in terms of the potentials constructed in the previous section, and these together these potentials will form a large part of the harmonic map data; the remaining part of the harmonic map data will come from the metric.

We begin with the observation that \eqref{Const1} and the maximality condition $\operatorname{Tr}_{g}k=0$ imply
\begin{equation}\label{sc}
R=16\pi \mu_{SG}+|k|_{g}^{2}+\frac{1}{2}|E|_{g}^{2}+\frac{1}{4}|B|_{g}^{2}.
\end{equation}
In order to estimate the squared terms on the right-hand side, it will be convenient to adopt the notation
\begin{equation}
\Theta^{i}=\nabla\zeta^i+ \psi^{i} \left( \nabla\chi + \frac{1}{3\sqrt{3}}(\psi^1 \nabla \psi^2 - \psi^2 \nabla \psi^1) \right),\text{ }\text{ }\text{ }\text{ }\text{ }i=1,2,
\end{equation}
\begin{equation}
\Upsilon=\nabla \chi + \frac{1}{\sqrt{3}}(\psi^1 \nabla \psi^2 - \psi^2 \nabla \psi^1),
\end{equation}
and
\begin{equation}\label{d}
\delta_{3}=r^{2}\left(d r^{2}+r^{2}d\theta^{2}\right)+\frac{r^{4}\sin^{2}(2\theta)}{4} d\phi^{2}
=d\rho^{2}+d z^{2}+\rho^{2} d\phi^{2}.
\end{equation}
The metric $\delta_{3}$ is flat on $\mathbb{R}^{3}$ and involves a new auxiliary variable $\phi\in[0,2\pi]$ on which no quantities have a dependence. The reason for introducing this metric is to simplify expressions within the mass functional.
Now using \eqref{Bcomponents}, \eqref{Ecomponents}, \eqref{kcomponents} produces
\begin{equation}
|k|_g^2 =  \frac{e^{-8U - 2\alpha + 2 \log r}}{2\rho^2} \Theta^T \lambda^{-1}\Theta
+\delta_{2}^{ij}\delta_{2}^{ln}k(e_i,e_l)k(e_{j},e_{n}) +\lambda^{ij}\lambda^{ln}k(e_{i+2},e_{l+2})k(e_{j+2},e_{n+2}),
\end{equation}
\begin{equation}
|E|_g^2 = \frac{e^{-6U - 2\alpha + 2\log r}}{\rho^2}|\Upsilon|^2 + \lambda^{ij}E_i E_j,
\end{equation}
and
\begin{equation}
|B|_g^2 = 2e^{-4U -2\alpha + 2 \log r}\nabla \psi^T \lambda^{-1} \nabla \psi+
\lambda^{ij}\lambda^{ln}B_{il}B_{jn},
\end{equation}
where the norm $|\cdot|$ is taken with respect to $\delta_{3}$. Here $\Theta^T=(\Theta^1,\Theta^2)$, $\psi^T=(\psi^1,\psi^2)$, and the upper index $T$ represents the transpose operation so that
\begin{equation}
\Theta^T \lambda^{-1}\Theta=\sum_{i,j=1,2}\lambda^{ij}\delta_{3}(\Theta^{i},\Theta^{j}),
\end{equation}
with a similar expression for $\nabla \psi^T \lambda^{-1} \nabla \psi$. It follows that \eqref{sc} becomes
\begin{align}\label{sc1}
\begin{split}
R=& 16\pi\mu_{SG}+ \frac{e^{-8U - 2\alpha + 2 \log r}}{2\rho^2} \Theta^T \lambda^{-1}\Theta +\frac{e^{-6U - 2\alpha + 2\log r}}{2\rho^2}|\Upsilon|^2
+\frac{e^{-4U -2\alpha + 2 \log r}}{2}\nabla \psi^T \lambda^{-1} \nabla \psi\\
&+\delta_{2}^{ij}\delta_{2}^{ln}k(e_i,e_l)k(e_{j},e_{n}) +\lambda^{ij}\lambda^{ln}k(e_{i+2},e_{l+2})k(e_{j+2},e_{n+2})
+\frac{1}{2}\lambda^{ij}E_i E_j
+\frac{1}{4}\lambda^{ij}\lambda^{ln}B_{il}B_{jn}.
\end{split}
\end{align}

The scalar curvature may also be computed directly from the components of the metric $g$. It is here that the existence of Brill coordinates plays a significant role, namely as shown in \cite{Alaeeremarks2015} we have
\begin{align}\label{scalarcurvature}
\begin{split}
e^{2U+2\alpha-2\log r}R=&-6\Delta U-2\Delta_{\rho,z}\alpha-6|\nabla U|^2
+\frac{\det\nabla\lambda}{2\rho^{2}}\\
&-\frac{1}{4}e^{-2\alpha+2\log r}\lambda_{ij}(A_{\rho,z}^{i}-A_{z,\rho}^{i})
(A_{\rho,z}^{j}-A_{z,\rho}^{j}),
\end{split}
\end{align}
where $\Delta$ is the Euclidean Laplacian with respect to $\delta_{3}$
and $\Delta_{\rho,z}$ is the Euclidean Laplacian with respect to
$\delta_{2}=d\rho^{2}+dz^{2}$ on the orbit space, and
\begin{equation}
\det\nabla\lambda
=\delta_{3}(\nabla\lambda_{11},\nabla\lambda_{22})-|\nabla \lambda_{12}|^2.
\end{equation}
An expression for the mass \cite{Alaeeremarks2015} is obtained by integrating this formula by parts
\begin{align}\label{mass scalar}
\begin{split}
m=&\frac{1}{8}\int_{\mathbb{R}^{3}}\left(e^{2U+2\alpha-2\log r}R
+6|\nabla U|^{2}-\frac{\det\nabla\lambda}{2\rho^{2}}\right)dx\\
&+\frac{1}{32}\int_{\mathbb{R}^{3}}
e^{-2\alpha+2\log r}\lambda_{ij}(A_{\rho,z}^{i}-A_{z,\rho}^{i})
(A_{\rho,z}^{j}-A_{z,\rho}^{j})dx
+\frac{\pi}{2}\sum_{\varsigma=\pm}\int_{\Gamma_{\varsigma}}\alpha_{\varsigma}dz,
\end{split}
\end{align}
where the volume form
\begin{equation}
dx=\frac{1}{2}r^{5}\sin(2\theta)dr \wedge d\theta \wedge d\phi
=\rho d\rho \wedge dz \wedge d\phi
\end{equation}
arises from $\delta_{3}$. By combining \eqref{sc1} and \eqref{mass scalar} we obtain
\begin{align}\label{POI}
\begin{split}
m=&\mathcal{M}+\frac{1}{8}\int_{\R^3}e^{2U+2\alpha-2\log r}\left(16\pi\mu_{SG}
+\frac{1}{2}\lambda^{ij}E_i E_j +  \frac{1}{4}\lambda^{ij}\lambda^{ln}B_{il}B_{jn} \right)dx\\
&+\frac{1}{8}\int_{\R^3}e^{2U+2\alpha-2\log r}\left(\delta_{2}^{ij}\delta_{2}^{ln}k(e_i,e_l)k(e_{j},e_{n}) +\lambda^{ij}\lambda^{ln}k(e_{i+2},e_{l+2})k(e_{j+2},e_{n+2})\right)dx\\
&+\frac{1}{32}\int_{\R^3}
e^{-2\alpha+2\log r}\lambda_{ij}(A^i_{\rho,z}-A^i_{z,\rho})(A^j_{\rho,z}-A^j_{z,\rho})dx,
\end{split}
\end{align}
where the mass functional is given by
\begin{align}\label{massfunctional}
\begin{split}
	\mathcal{M}
	=&\frac{1}{8}\int_{\mathbb{R}^{3}}
	\left(6|\nabla U|^{2}-\frac{\det\nabla\lambda}{2\rho^{2}}+\frac{e^{-6U}}{2\rho^{2}}\Theta^{T}
	\lambda^{-1}\Theta\right)dx\\
	 &+\frac{1}{8}\int_{\mathbb{R}^{3}}\left(\frac{e^{-4U}}{2\rho^2}|\Upsilon|^2
+\frac{e^{-2U}}{2}\nabla \psi^{T}
	\lambda^{-1}\nabla \psi\right)dx
+\frac{\pi}{2}\sum_{\varsigma=\pm}\int_{\Gamma_{\varsigma}}\alpha_{\varsigma}dz.
\end{split}
\end{align}

It turns out that the mass function $\mathcal{M}$ may be expressed as a sum of squares, and related to a harmonic energy. In order to see this it is necessary to perform a change of variables $(\lambda_{11},\lambda_{22},\lambda_{12})\rightarrow (V,W)$ where
\begin{equation}\label{variablesvw}
	 V=\frac{1}{2}\log\left(\frac{\lambda_{11}\cos^{2}\theta}{\lambda_{22}\sin^{2}\theta}\right),
	\text{ }\text{ }\text{ }\text{ }\text{ }\text{ }W=\sinh^{-1}\left(\frac{\lambda_{12}}{\rho}\right),
\end{equation}
with inverse
\begin{equation}\label{inversevariables}
	\lambda_{11}=\left(\sqrt{\rho^2+z^2}-z\right)e^{V}\cosh W, \text{ }\text{ }\text{ }\text{ }\text{ }
	\lambda_{22}=\left(\sqrt{\rho^2+z^2}+z\right)e^{-V}\cosh W,\text{ }\text{ }\text{ }\text{ }\text{ }
	\lambda_{12}=\rho\sinh W.
\end{equation}
A computation then shows that
\begin{equation}\label{fff}
	-\frac{\det\nabla\lambda}{\rho^2}=|\nabla V|^2+|\nabla W|^2+\sinh^{2}W\abs{\nabla\left(V+h_2\right)}^2+2\delta_{3}(\nabla h_2,\nabla V),
\end{equation}
where
\begin{equation}\label{HARMONICFUN}
	h_{1}=\frac{1}{2}\log\rho,\text{ }\text{ }\text{ }\text{ }\text{ }\text{ }\text{ }\text{ }\text{ }h_{2}=\frac{1}{2}\log\left(\frac{\sqrt{\rho^{2}+z^{2}}-z}{
		\sqrt{\rho^2+z^{2}}+z}\right)=\log(\tan\theta),
\end{equation}
are harmonic functions on $(\mathbb{R}^{3}\setminus\Gamma,\delta_{3})$. Next observe that the last term of \eqref{fff} may be related to the boundary integral in \eqref{massfunctional} by
\begin{align}
	\begin{split}
		\frac{1}{8}\int_{\mathbb{R}^{3}}\delta_{3}(\nabla h_{2},\nabla V)dx
		=&-\lim_{\varepsilon\rightarrow 0}\frac{1}{8}\int_{\rho=\varepsilon}V\partial_{\rho}h_{2}\\
		 =&\frac{\pi}{4}\left(\int_{\Gamma_{-}}Vdz-\int_{\Gamma_{+}}Vdz\right)\\
		 =&-\frac{\pi}{2}\sum_{\varsigma=\pm}\int_{\Gamma_{\varsigma}}\alpha_{\varsigma}dz,
	\end{split}
\end{align}
where we have used
\begin{equation}\label{formulaV}
	V=2\alpha_{+}\text{ }\text{ }\text{ on }\text{ }\text{ }\Gamma_{+},\text{ }\text{ }\text{ }\text{ }\text{ }\text{ } V=-2\alpha_{-}\text{ }\text{ }\text{ on }\text{ }\text{ }\Gamma_{-},
\end{equation}
which follows from \eqref{ConeCondition} and \eqref{inversevariables}.
Putting this altogether yields the desired expression for the mass functional
\begin{align}\label{massfunc}
\begin{split}
16\mathcal{M}=&\int_{\mathbb{R}^{3}}12|\nabla U|^{2}+|\nabla V|^{2}+|\nabla W|^{2}
+\sinh^{2}W|\nabla (V+h_{2})|^{2}
+\frac{e^{-6h_{1}-6U-h_{2}-V}}{\cosh W}|\Theta^{1}|^{2}dx\\
&+\int_{\mathbb{R}^{3}}
e^{-6h_{1}-6U+h_{2}+V}\cosh W
\left|e^{-h_{2}-V}\tanh W\Theta^{1}-\Theta^{2}\right|^{2}
+\frac{e^{-2h_{1}-2U-h_{2}-V}}{\cosh W}|\nabla\psi^{1}|^{2}
dx\\
&+\int_{\mathbb{R}^{3}}e^{-2h_{1}-2U+h_{2}+V}\cosh W|e^{-h_{2}-V}\tanh W \nabla\psi^{1}-\nabla\psi^{2}|^{2}+
e^{-4h_{1}-4U}
|\Upsilon|^{2}dx.
\end{split}
\end{align}

\section{The Dirichlet Energy and Global Minimization}
\label{dirichlet}

Dimensional reduction of 5-dimensional minimal supergravity from five to three dimensions results in three dimensional gravity coupled to a nonlinear sigma model \cite{Compere,Mizoguchi}, which is invariant under the the exceptional Lie group $G_{2(2)}$. This is the noncompact real form of $G_{2}$, where the first 2 indicates the rank and the 2 inside parentheses indicates the character. The target space for the nonlinear sigma model is $G_{2(2)}/SO(4)\cong\mathbb{R}^{8}$ \cite{Yokota}, and from the dimensional reduction it comes equipped with a complete Riemannian metric of nonpositive curvature given by
\begin{equation}\label{1}
ds^{2}=\frac{1}{2}\text{Tr}\left[\left(\Phi^{-1}d\Phi\right)^2\right],
\end{equation}
where $\Phi$ is a $7\times 7$ matrix coset representative defined in (3.4) of \cite{Tomizawa:2009ua} (and denoted by $M$ in this reference). A calculation shows that
\begin{equation}\label{2}
\text{Tr}\left[\left(\Phi^{-1}d\Phi\right)^{2}\right]=
\text{Tr}\left[\left(\Lambda^{-1}d\Lambda\right)^2\right]+\left(d\log\det\Lambda\right)^2\nonumber\\
	+2d\psi^T\Lambda^{-1}d\psi
+\frac{2}{\det\Lambda}\Theta^T\Lambda^{-1}\Theta
+\frac{2}{\det\Lambda}\Upsilon^2
\end{equation}
where $\Lambda_{ij}=e^{2U}\lambda_{ij}$. By parameterizing the target space with respect to the variables $u$, $v$, $w$, $\zeta^1$, $\zeta^2$, $\chi$, $\psi^1$, and $\psi^2$ where $u=U+h_1$, $v=V+h_2$, and $w=W$ we obtain
\begin{align}\label{3}
\begin{split}
ds^2=&12du^{2}
		+\cosh^{2}w dv^{2}
		+dw^{2}+\frac{e^{-6u-v}}{\cosh w}(\Theta^{1})^{2}
+e^{-6u+v}\cosh w (e^{-v}\tanh w\Theta^{1}-\Theta^{2})^{2}\\
&+\frac{e^{-2u-v}}{\cosh w}(d\psi^{1})^{2}
+e^{-2u+v}\cosh w(e^{-v}\tanh w d\psi^{1}-d\psi^{2})^{2}
+e^{-4u}\Upsilon^{2}.
\end{split}
\end{align}
It follows that the harmonic energy, in a domain $\Omega\subset\mathbb{R}^{3}$, of a map $\tilde{\Psi}=(u,v,w,\zeta^{1},\zeta^{2},\chi,\psi^{1},\psi^{2}):\mathbb{R}^{3}\rightarrow
G_{2(2)}/SO(4)$ takes the form
\begin{align}\label{4}
\begin{split}
E_{\Omega}(\tilde{\Psi})=&\int_{\Omega}12|\nabla u|^{2}
		+\cosh^{2}w|\nabla v|^{2}
		+|\nabla w|^{2}+\frac{e^{-6u-v}}{\cosh w}|\Theta^{1}|^{2}
+e^{-6u+v}\cosh w |e^{-v}\tanh w\Theta^{1}-\Theta^{2}|^{2}dx\\
&+\int_{\Omega}\frac{e^{-2u-v}}{\cosh w}|\nabla\psi^{1}|^{2}
+e^{-2u+v}\cosh w|e^{-v}\tanh w \nabla\psi^{1}-\nabla\psi^{2}|^{2}
+e^{-4u}\left|\Upsilon\right|^{2}dx.
\end{split}
\end{align}
This harmonic energy is related to the mass functional \eqref{massfunc} through
an integration by parts. In particular, on a domain
$\Omega$ which does not intersect the axes $\Gamma$ we have
\begin{equation}\label{5}
\mathcal{I}_{\Omega}(\Psi)=E_{\Omega}(\tilde{\Psi})
-\int_{\partial\Omega}12(h_{1}+2U)\partial_{\nu}h_{1}
-\int_{\partial\Omega}(h_{2}+2V)\partial_{\nu}h_{2},
\end{equation}
where $\mathcal{I}=\mathcal{I}_{\mathbb{R}^{3}}=16\mathcal{M}$, $\Psi=(U,V,W,\zeta^{1},\zeta^{2},\chi,\psi^{1},\psi^{2})$, and
$\nu$ denotes the unit outer normal to the boundary $\partial\Omega$. Note that since $E$ and $\mathcal{I}$ agree up to boundary integrals, they must have the same critical points.
The functional $\mathcal{I}$ is referred to as the reduced energy, since it is a regularization in the sense that the two infinite terms $\int|\nabla h_{1}|^{2}$ and $\int \cosh^{2}W|\nabla h_{2}|^{2}$, which appear in $E$, have been eliminated.

The purpose of the remainder of the paper is to establish a lower bound for the mass functional, and to compute its value. A critical point and natural candidate minimizer is the map $\Psi_{0}=(U_{0},V_{0},W_{0},\zeta^{1}_{0},\zeta^{2}_{0},\chi_{0},
\psi_{0}^{1},\psi_{0}^{2})$, the renormalization of the extreme charged Myers-Perry harmonic map
$\tilde{\Psi}_{0}=(u_{0},v_{0},w_{0},\zeta^{1}_{0},\zeta^{2}_{0},
\chi_{0},\psi_{0}^{1},\psi_{0}^{2})$ described in Appendix \ref{app2}. In order to establish this map as the global minimizer, we will employ the basic observation that
the target space $G_{2(2)}/SO(4)$ is nonpositively curved, and hence the harmonic energy is convex along geodesic deformations. The use of energy convexity to minimize mass related functionals was introduced in \cite{schoen2013convexity} and applied in \cite{alaee2015global}, \cite{khuri2015positive}. The difficulty here is that the harmonic map $\tilde{\Psi}_{0}$ is singular along the axes, and so it is not clear that convexity of the harmonic energy is inherited by the reduced energy. This difficulty is typically overcome by adopting a cut-and-paste procedure, whereby portions of a given map near the axes and the designated asymptotically flat end, are replaced by corresponding parts of the candidate minimizer with a Lipschitz transition between different regions. In more detail, define $\Omega_{\delta,\varepsilon}=\{\delta< r<2/\delta;
\rho>\varepsilon\}$ and $\mathcal{A}_{\delta,\varepsilon}=B_{2/\delta}\setminus
\Omega_{\delta,\varepsilon}$, where $\delta,\varepsilon>0$ are
small parameters and $B_{2/\delta}$ is the ball of radius $2/\delta$ centered at the origin.
Given a map $\Psi$, let $\Psi_{\delta,\varepsilon}$ be the resulting map obtained from the cut-and-paste procedure so that its components satisfy
\begin{align}\label{54}
\begin{split}
&\text{ }\text{ }\text{ }\text{ }\text{ }\text{ }\text{ }\text{ }\text{ }\text{ }\text{ }\text{ }\text{ }\text{ }\text{ }\text{ }\text{ }\text{ }\text{ }\text{ }\text{ }\text{ }\text{ }\text{ }\text{ }\text{ }\text{ }\text{ }\text{ }\text{ }\text{ }\text{ }\text{ }\text{ }\text{ }\text{ }\text{ }\text{ }\text{ }\operatorname{supp}(U_{\delta,\varepsilon}-U_{0})\subset B_{2/\delta},\\
&\operatorname{supp}(V_{\delta,\varepsilon}-V_{0},W_{\delta,\varepsilon}-W_{0},
\zeta_{\delta,\varepsilon}^{1}-\zeta^{1}_{0},\zeta_{\delta,\varepsilon}^{2}-\zeta^{2}_{0},
\chi_{\delta,\varepsilon}-\chi_{0},
\psi_{\delta,\varepsilon}^{1}-\psi^{1}_{0},\psi_{\delta,\varepsilon}^{2}-\psi^{2}_{0})\subset \Omega_{\delta,\varepsilon}.
\end{split}
\end{align}
Consider now a geodesic $\tilde{\Psi}^{t}_{\delta,\varepsilon}$ in $G_{2(2)}/SO(4)$, $t\in[0,1]$, connecting
$\tilde{\Psi}^{1}_{\delta,\varepsilon}=\tilde{\Psi}_{\delta,\varepsilon}$ to $\tilde{\Psi}^{0}_{\delta,\varepsilon}=\tilde{\Psi}_{0}$. Then $\Psi^{t}_{\delta,\varepsilon}\equiv\Psi_{0}$ outside $B_{2/\delta}$ and
\begin{equation}
(V^{t}_{\delta,\varepsilon},W^{t}_{\delta,\varepsilon},\zeta^{1,t}_{\delta,\varepsilon},
\zeta^{2,t}_{\delta,\varepsilon},\chi^{t}_{\delta,\varepsilon},\psi^{1,t}_{\delta,\varepsilon},
\psi^{2,t}_{\delta,\varepsilon})
\equiv(V_{0},W_{0},\zeta^{1}_{0},\zeta^{2}_{0},\chi_{0},\psi^{1}_{0},\psi^{2}_{0})\text{ }\text{ }\text{ }\text{ on }\text{ }\text{ }\text{ }\mathcal{A}_{\delta,\varepsilon},
\end{equation}
so that in particular $U^{t}_{\delta,\varepsilon}=U_{0}+t(U_{\delta,\varepsilon}-U_{0})$ and $V^{t}_{\delta,\varepsilon}=V_{0}$ on these
regions. It is the simple linear dynamics of $U^{t}_{\delta,\varepsilon}$ and constancy of
$V^{t}_{\delta,\varepsilon}$ (in $t$) which guarantee that the boundary terms of \eqref{5} do not obstruct the induced convexity of the renormalized harmonic energy, so that
\begin{equation}\label{55}
\frac{d^{2}}{dt^{2}}\mathcal{I}(\Psi^{t}_{\delta,\varepsilon})
\geq 2\int_{\mathbb{R}^{3}}
|\nabla\operatorname{dist}_{G_{2(2)}/SO(4)}(\Psi_{\delta,\varepsilon},\Psi_{0})|^{2}dx.
\end{equation}
Moreover, as $\Psi_{0}$ is a critical point we have
\begin{equation}\label{56}
\frac{d}{dt}\mathcal{I}(\Psi^{t}_{\delta,\varepsilon})|_{t=0}=0.
\end{equation}
Hence, by integrating \eqref{55} a gap lower bound \eqref{53} is achieved after applying the Gagliardo-Nirenberg-Sobolev inequality and letting $\delta,\varepsilon\rightarrow 0$. The next sections will be dedicated to verifying each of the steps above to prove the following result.

\begin{theorem}\label{minimum}
Suppose that $\Psi=(U,V,W,\zeta^{1},\zeta^{2},\chi,\psi^1,\psi^2)$ is smooth and satisfies the asymptotics \eqref{FALLOFF1}-\eqref{FALLOFF4.1'}
with $\zeta^{i}|_{\Gamma}=\zeta^{i}_{0}|_{\Gamma}$, $i=1,2$, and $\chi|_{\Gamma}=\chi_{0}|_{\Gamma}$, then there exists a constant $C>0$ such that
\begin{equation}\label{53}
\mathcal{I}(\Psi)-\mathcal{I}(\Psi_{0})
\geq C\left(\int_{\mathbb{R}^{3}}
\operatorname{dist}_{G_{2(2)}/SO(4)}^{6}(\Psi,\Psi_{0})dx
\right)^{\frac{1}{3}}.
\end{equation}
\end{theorem}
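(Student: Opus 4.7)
The plan is to follow the energy-convexity strategy introduced in \cite{schoen2013convexity} and adapted in \cite{alaee2015global}, now with the nonpositively curved target $G_{2(2)}/SO(4)$ in place of $SL(3,\R)/SO(3)$. The principal difficulty is that the reduced functional $\mathcal{I}$ differs from the genuine Dirichlet energy $E$ only by the boundary correction in \eqref{5} involving the singular harmonic functions $h_1$ and $h_2$, and the candidate minimizer $\Psi_0$ is itself singular along the axes $\Gamma$ and at infinity. The cut-and-paste regularization is precisely the mechanism that transfers the convexity of $E$ onto the renormalized functional $\mathcal{I}$.

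First I would construct the approximation $\Psi_{\delta,\varepsilon}$ satisfying \eqref{54}, interpolating between $\Psi$ on $\Omega_{\delta,\varepsilon}$ and $\Psi_0$ on $\mathcal{A}_{\delta,\varepsilon}$ by Lipschitz transitions in a thin collar of $\partial \Omega_{\delta,\varepsilon}$; the hypothesis $\zeta^i|_\Gamma = \zeta^i_0|_\Gamma$, $\chi|_\Gamma = \chi_0|_\Gamma$ ensures that such transitions can be executed with controllable error near the axes. Next I would form the pointwise target geodesic $\tilde{\Psi}^t_{\delta,\varepsilon}$, $t \in [0,1]$, from $\tilde{\Psi}_0$ to $\tilde{\Psi}_{\delta,\varepsilon}$. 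Because every component of $\Psi^t_{\delta,\varepsilon} - \Psi_0$ other than $U^t_{\delta,\varepsilon}$ is supported in $\Omega_{\delta,\varepsilon}$, and $U^t_{\delta,\varepsilon} = U_0 + t(U_{\delta,\varepsilon}-U_0)$ is affine in $t$, the two boundary integrals in \eqref{5} are affine in $t$ and so have vanishing second derivative. The standard second variation formula together with nonpositive sectional curvature of the target then yields \eqref{55}, while criticality \eqref{56} follows from the fact that $\Psi_0$ arises from the stationary extreme charged Myers-Perry solution and therefore satisfies the Euler-Lagrange equations of $E$. Integrating \eqref{55} twice over $t\in[0,1]$ produces
\begin{equation*}
\mathcal{I}(\Psi_{\delta,\varepsilon}) - \mathcal{I}(\Psi_0) \geq \int_{\R^3} |\nabla \dist_{G_{2(2)}/SO(4)}(\Psi_{\delta,\varepsilon},\Psi_0)|^2 \, dx,
\end{equation*}
and the Gagliardo-Nirenberg-Sobolev inequality on $\R^3$ then converts this $\dot H^1$ bound on the distance function into the $L^6$ bound appearing in \eqref{53}.

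The final step is to pass to the limit $\delta,\varepsilon \to 0$, showing that $\mathcal{I}(\Psi_{\delta,\varepsilon}) \to \mathcal{I}(\Psi)$ and that the $L^6$ distance integral converges to the one built from $\Psi$ itself. I expect this limit to be the main technical obstacle: the asymptotics \eqref{FALLOFF1}-\eqref{FALLOFF4.1'} must be strong enough that all energy contributions from the axial collar $\{\rho < \varepsilon\}$ and the far region $\{r > 2/\delta\}$ — including the cross-terms coming from the nontrivial coupling between $v$, $w$ and the potentials in the target metric \eqref{3} — tend to zero. The common axial values of $\zeta^i$ and $\chi$ assumed in the hypothesis are precisely what keeps the transition-region gradients integrable near $\Gamma$, so that the simultaneous axial and asymptotic limits can be taken without generating uncontrolled contributions.
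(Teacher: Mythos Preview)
Your proposal is correct and follows essentially the same route as the paper: cut-and-paste approximation $\Psi_{\delta,\varepsilon}$ satisfying \eqref{54}, geodesic deformation in $G_{2(2)}/SO(4)$, convexity of the Dirichlet energy together with affine behavior of the boundary terms in \eqref{5} to obtain \eqref{55}, harmonicity of $\tilde\Psi_0$ for \eqref{56}, Gagliardo--Nirenberg--Sobolev, and finally the limits $\varepsilon,\delta\to 0$. One point to make explicit: relation \eqref{5} only holds on domains disjoint from $\Gamma$, so on $\mathcal{A}_{\delta,\varepsilon}$ the second variation of $\mathcal{I}$ must be computed directly rather than via the boundary-term trick; the paper does exactly this, and the fact you already isolated---that on $\mathcal{A}_{\delta,\varepsilon}$ only $U^t$ moves, and linearly---is precisely what makes that direct computation tractable (each nontrivial term in $\mathcal{I}$ picks up a factor $e^{-cU^t}$, whose second $t$-derivative is nonnegative).
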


\section{The Cut-and-Paste Argument}
\label{seccut}

As has already been described, we intend to replace a given map $\Psi$ with a new map $\Psi_{\delta,\varepsilon}$ which essentially agrees with the renormalized extreme Myers-Perry harmonic map $\Psi_{0}$ on certain asymptotic regimes. In this section we describe this construction in detail, and show that the new maps may be used to approximate the original in the context of the reduced energy.
In order to carry this out, $\Psi$ must satisfy the appropriate asymptotics which will be recorded below; the asymptotics for $\Psi_{0}$ will also be stated. In the expressions for the asymptotics,
it may appear that certain derivatives have extra fall-off than is expected. This is due to the fact that the vector norms employed are taken with respect to the flat metric $\delta_{3}$ in the cylindrical and nonstandard polar coordinates \eqref{pcoordinates}.

In what follows, $\kappa>0$ is a fixed small parameter. Let us consider the asymptotically flat end first. We require that as $r\rightarrow\infty$ the following decay occurs
\begin{equation}\label{FALLOFF1}
U,V=O(r^{-1-\kappa}),\text{ }\text{ }\text{ }W=\sqrt{\rho} O(r^{-2-\kappa}),\text{ }\text{ }\text{ }\psi^{1}=\sqrt{\sin\theta}O(r^{-\kappa}),\text{ }\text{ }\text{ }\psi^{2}=\sqrt{\cos\theta}O(r^{-\kappa}),
\end{equation}
\begin{equation}\label{FALLOFF2}
|\nabla U|=O(r^{-3-\kappa}),\text{ }\text{ }\text{ }\text{ }|\nabla V|=O(r^{-3-\kappa}),\text{ }\text{ }\text{ }\text{ }|\nabla W|=\rho^{-\frac{1}{2}} O(r^{-2-\kappa}),
\end{equation}
\begin{equation}\label{FALLOFF3}
|\nabla\psi^{1}|=\sqrt{\sin\theta}O(r^{-2-\kappa}),\text{ }\text{ }\text{ }\text{ }|\nabla\psi^{2}|=\sqrt{\cos\theta}O(r^{-2-\kappa}),\text{ }\text{ }\text{ }
\end{equation}
\begin{equation}\label{FALLOFF3'}
|\nabla\chi|=\rho O(r^{-3-\kappa}),\text{ }\text{ }\text{ }\text{ }\text{ }\text{ }\text{ }|\nabla \zeta^{1}|=\rho\sqrt{\sin\theta} O(r^{-2-\kappa}),\text{ }\text{ }\text{ }\text{ }\text{ }\text{ }\text{ }|\nabla \zeta^{2}|=\rho\sqrt{\cos\theta} O(r^{-2-\kappa}).
\end{equation}
Consider now the nondesignated end, in which the asymptotics are broken up into two cases. In the asymptotically flat case, as $r\rightarrow 0$, we require
\begin{equation}\label{FALLOFF1.0}
(U+2\log r),V=O(1),\text{ }\text{ }\text{ } W=\sqrt{\rho}O(r^{-1}),\text{ }\text{ }\text{ }\psi^{1}=\sqrt{\sin\theta}O(r^{-1}),\text{ }\text{ }\text{ }
\psi^{2}=\sqrt{\cos\theta}O(r^{-1}),
\end{equation}
\begin{equation}\label{FALLOFF2.0}
|\nabla U|=O(r^{-2}),\text{ }\text{ }\text{ }\text{ }\text{ }|\nabla V|=O(r^{-2}),\text{ }\text{ }\text{ }\text{ }\text{ }|\nabla W|=\rho^{-\frac{1}{2}} O(r^{-1}),
\end{equation}
\begin{equation}\label{FALLOFF4.0}
|\nabla\psi^{1}|=\sqrt{\sin\theta}O(r^{-3}),\text{ }\text{ }\text{ }\text{ }|\nabla\psi^{2}|=\sqrt{\cos\theta}O(r^{-3}),\text{ }\text{ }\text{ }\text{ }
\end{equation}
\begin{equation}\label{FALLOFF4.0'}
|\nabla\chi|=\rho O(r^{-7+\kappa}),\text{ }\text{ }\text{ }\text{ }\text{ }\text{ }\text{ }|\nabla \zeta^{1}|=\rho\sqrt{\sin\theta} O(r^{-8+\kappa}),\text{ }\text{ }\text{ }\text{ }\text{ }\text{ }\text{ }|\nabla \zeta^{2}|=\rho\sqrt{\cos\theta} O(r^{-8+\kappa}).
\end{equation}
In the asymptotically cylindrical case, as $r\rightarrow 0$, we require
\begin{equation}\label{FALLOFF5.0}
(U+\log r),V=O(1),\text{ }\text{ }\text{ } W=\sqrt{\rho}O(r^{-1}),\text{ }\text{ }\text{ }\psi^{1}=\sqrt{\sin\theta}O(1),\text{ }\text{ }\text{ }
\psi^{2}=\sqrt{\cos\theta}O(1),
\end{equation}
\begin{equation}\label{FALLOFF6.0}
|\nabla U|=O(r^{-2}),\text{ }\text{ }\text{ }\text{ }\text{ }|\nabla V|=O(r^{-2}),\text{ }\text{ }\text{ }\text{ }\text{ }|\nabla W|=\rho^{-\frac{1}{2}} O(r^{-1}),
\end{equation}
\begin{equation}\label{FALLOFF7.0}
|\nabla\psi^{1}|=\sqrt{\sin\theta}O(r^{-2}),\text{ }\text{ }\text{ }\text{ }|\nabla\psi^{2}|=\sqrt{\cos\theta}O(r^{-2}),\text{ }\text{ }\text{ }\text{ }
\end{equation}
\begin{equation}\label{FALLOFF7.0'}
|\nabla\chi|=\rho O(r^{-5+\kappa}),\text{ }\text{ }\text{ }\text{ }\text{ }\text{ }\text{ }
|\nabla \zeta^{1}|=\rho\sqrt{\sin\theta} O(r^{-5+\kappa}),\text{ }\text{ }\text{ }\text{ }\text{ }\text{ }\text{ }|\nabla \zeta^{2}|=\rho\sqrt{\cos\theta} O(r^{-5+\kappa}).
\end{equation}
Moreover the asymptotics near the axis, that is, as $\rho\rightarrow 0$ with $\delta\leq r\leq \frac{2}{\delta}$, are required to satisfy
\begin{equation}\label{FALLOFF1.1}
U,V=O(1),\text{ }\text{ }\text{ } W=O(\sqrt{\rho}),\text{ }\text{ }\text{ }\psi^{1}=\sqrt{\sin\theta}O(1),\text{ }\text{ }\text{ }
\psi^{2}=\sqrt{\cos\theta}O(1),
\end{equation}
\begin{equation}\label{FALLOFF2.1}
|\nabla U|=O(1),\text{ }\text{ }\text{ }\text{ }\text{ }|\nabla V|=O(1),\text{ }\text{ }\text{ }\text{ }\text{ }|\nabla W|=O(\rho^{-\frac{1}{2}}),
\end{equation}
\begin{equation}\label{FALLOFF4.1}
|\nabla\psi^{1}|=\sqrt{\sin\theta}O(1),\text{ }\text{ }\text{ }\text{ }|\nabla\psi^{2}|=\sqrt{\cos\theta}O(1),\text{ }\text{ }\text{ }\text{ }
\end{equation}
\begin{equation}\label{FALLOFF4.1'}
|\nabla\chi|=O(\rho),\text{ }\text{ }\text{ }\text{ }\text{ }\text{ }\text{ }|\nabla \zeta^{1}|=\sqrt{\sin\theta}O(\rho),\text{ }\text{ }\text{ }\text{ }\text{ }\text{ }\text{ }|\nabla \zeta^{2}|=\sqrt{\cos\theta}O(\rho).
\end{equation}
It should be observed that these asymptotics guarantee a finite reduced energy, and are satisfied by the extreme and non-extreme charged Myers-Perry harmonic maps.

In Appendix \ref{app2} the extreme charged Myers-Perry map $\Psi_{0}$ is described in detail, and from this the asymptotics may be derived. In the designated asymptotically flat end as $r\rightarrow\infty$ we find
\begin{equation}\label{FALLOFF4}
U_{0},V_{0}=O(r^{-2}),\text{ }\text{ }\text{ }W_{0}=\rho O(r^{-6}),\text{ }\text{ }\text{ }\psi_{0}^{1}=\sin^{2}\theta O(r^{-2}),\text{ }\text{ }\text{ }\psi_{0}^{2}=\cos^{2}\theta O(r^{-2}),
\end{equation}
\begin{equation}\label{FALLOFF5}
|\nabla U_{0}|=O(r^{-4}),\text{ }\text{ }\text{ }\text{ }|\nabla V_{0}|=O(r^{-4}),\text{ }\text{ }\text{ }\text{ }|\nabla W_{0}|=O(r^{-6}),
\end{equation}
\begin{equation}\label{FALLOFF6}
|\nabla\psi_{0}^{1}|=\sin\theta O(r^{-4}),\text{ }\text{ }\text{ }\text{ }|\nabla\psi_{0}^{2}|=\cos\theta O(r^{-4}),\text{ }\text{ }\text{ }
\end{equation}
\begin{equation}\label{FALLOFF6'}
|\nabla\chi_{0}|=\rho O(r^{-4}),\text{ }\text{ }\text{ }\text{ }\text{ }\text{ }\text{ }|\nabla \zeta_{0}^{1}|=\rho\sin^{2}\theta O(r^{-4}),\text{ }\text{ }\text{ }\text{ }\text{ }\text{ }\text{ }|\nabla \zeta_{0}^{2}|=\rho\cos^{2}\theta O(r^{-4}).
\end{equation}
In the nondesignated end, as $r\rightarrow 0$, the geometry is asymptotically cylindrical and the asymptotics are given by
\begin{equation}\label{FALLOFF8.0}
(U_{0}+\log r),V_{0}=O(1),\text{ }\text{ }\text{ } W_{0}=\rho O(r^{-2}),\text{ }\text{ }\text{ }\psi_{0}^{1}=\sin^{2}\theta O(1),\text{ }\text{ }\text{ }
\psi_{0}^{2}=\cos^{2}\theta O(1),
\end{equation}
\begin{equation}\label{FALLOFF9.0}
|\nabla U_{0}|=O(r^{-2}),\text{ }\text{ }\text{ }\text{ }\text{ }|\nabla V_{0}|=O(r^{-2}),\text{ }\text{ }\text{ }\text{ }\text{ }|\nabla W_{0}|=O(r^{-2}),
\end{equation}
\begin{equation}\label{FALLOFF10.0}
|\nabla\psi_{0}^{1}|=\sin\theta O(r^{-2}),\text{ }\text{ }\text{ }\text{ }|\nabla\psi_{0}^{2}|=\cos\theta O(r^{-2}),\text{ }\text{ }\text{ }\text{ }
\end{equation}
\begin{equation}\label{FALLOFF10.0'}
|\nabla\chi_{0}|=\rho O(r^{-4}),\text{ }\text{ }\text{ }\text{ }\text{ }\text{ }
|\nabla \zeta_{0}^{1}|=\rho\sin^{2}\theta O(r^{-4}),\text{ }\text{ }\text{ }\text{ }\text{ }\text{ }\text{ }\text{ }|\nabla \zeta_{0}^{2}|=\rho\cos^{2}\theta O(r^{-4}).
\end{equation}
Furthermore when $\rho\rightarrow 0$ with $\delta\leq r\leq\frac{2}{\delta}$ we have
\begin{equation}\label{FALLOFF8.1}
U_{0},V_{0}=O(1),\text{ }\text{ }\text{ } W_{0}=O(\rho),\text{ }\text{ }\text{ }\psi_{0}^{1}=\sin^{2}\theta O(1),\text{ }\text{ }\text{ }
\psi_{0}^{2}=\cos^{2}\theta O(1),
\end{equation}
\begin{equation}\label{FALLOFF9.1}
|\nabla U_{0}|=O(1),\text{ }\text{ }\text{ }\text{ }\text{ }|\nabla V_{0}|=O(1),\text{ }\text{ }\text{ }\text{ }\text{ }|\nabla W_{0}|=O(1),
\end{equation}
\begin{equation}\label{FALLOFF10.1}
|\nabla\psi_{0}^{1}|=\sin\theta O(1),\text{ }\text{ }\text{ }\text{ }|\nabla\psi_{0}^{2}|=\cos\theta O(1),\text{ }\text{ }\text{ }\text{ }
\end{equation}
\begin{equation}\label{FALLOFF10.1'}
|\nabla\chi_{0}|=O(\rho),\text{ }\text{ }\text{ }\text{ }\text{ }\text{ }|\nabla \zeta_{0}^{1}|=\sin^{2}\theta O(\rho),\text{ }\text{ }\text{ }\text{ }\text{ }\text{ }\text{ }\text{ }|\nabla \zeta_{0}^{2}|=\cos^{2}\theta O(\rho).
\end{equation}

Construction of the approximating maps $\Psi_{\delta,\varepsilon}$, satisfying \eqref{54}, is accomplished with a three step cut-and-paste procedure inspired from \cite{schoen2013convexity},
and utilizes the following cut-off functions tailored to the three asymptotic regimes
\begin{equation}
\overline{\varphi}_{\delta}=\begin{cases}
1 & \text{ if $r\leq\frac{1}{\delta}$,} \\
|\nabla\overline{\varphi}_{\delta}|\leq 2\delta^2 &
\text{ if $\frac{1}{\delta}<r<\frac{2}{\delta}$,} \\
0 & \text{ if $r\geq\frac{2}{\delta}$,} \\
\end{cases}
\end{equation}
\begin{equation}
\varphi_{\delta}=\begin{cases}
0 & \text{ if $r\leq\delta$,} \\
|\nabla\varphi_{\delta}|\leq \frac{2}{\delta^{2}} &
\text{ if $\delta<r<2\delta$,} \\
1 & \text{ if $r\geq2\delta$,} \\
\end{cases}
\end{equation}
and
\begin{equation}
\phi_{\varepsilon}=\begin{cases}
0 & \text{ if $\rho\leq\varepsilon$,} \\
\frac{\log(\rho/\varepsilon)}{\log(\sqrt{\varepsilon}/
\varepsilon)} &
\text{ if $\varepsilon<\rho<\sqrt{\varepsilon}$,} \\
1 & \text{ if $\rho\geq\sqrt{\varepsilon}$.} \\
\end{cases}
\end{equation}
These functions should be Lipschitz and take values in the interval $[0,1]$.

\begin{lemma}
Set
\begin{equation}
\overline{F}_{\delta}(\Psi)=\Psi_{0}
+\overline{\varphi}_{\delta}(\Psi-\Psi_{0})=:
(\overline{U}_{\delta},\overline{V}_{\delta},\overline{W}_{\delta},
\overline{\zeta}_{\delta}^{1},\overline{\zeta}_{\delta}^{2},
\overline{\chi}_{\delta},\overline{\psi}^{1}_{\delta},\overline{\psi}^{2}_{\delta}),
\end{equation}
so that $\overline{F}_{\delta}(\Psi)=\Psi_{0}$ on $\mathbb{R}^{3}\setminus B_{2/\delta}$. Then
$\lim_{\delta\rightarrow 0}\mathcal{I}(\overline{F}_{\delta}(\Psi))=\mathcal{I}(\Psi).$
\end{lemma}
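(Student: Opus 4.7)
The plan is to write $\mathcal{I}(\overline{F}_\delta(\Psi))-\mathcal{I}(\Psi)$ as a sum of two pieces supported on the ``tail'' $B_\delta = \{r\geq 2/\delta\}$ and the ``transition annulus'' $A_\delta=\{1/\delta\leq r\leq 2/\delta\}$, and to show that each piece tends to $0$ as $\delta\to 0$. Denote by $\mathcal{L}(\Psi)$ the Lagrangian density appearing in the integrand of \eqref{massfunc}, so that $16\mathcal{M}=\mathcal{I}=\int_{\R^3}\mathcal{L}(\Psi)\,dx$. Since $\overline{F}_\delta(\Psi)\equiv\Psi$ on $\{r\leq 1/\delta\}$ and $\overline{F}_\delta(\Psi)\equiv\Psi_0$ on $B_\delta$, the difference to control splits as
\begin{equation*}
\mathcal{I}(\overline{F}_\delta(\Psi))-\mathcal{I}(\Psi)
=\int_{B_\delta}\bigl[\mathcal{L}(\Psi_0)-\mathcal{L}(\Psi)\bigr]dx
+\int_{A_\delta}\bigl[\mathcal{L}(\overline{F}_\delta(\Psi))-\mathcal{L}(\Psi)\bigr]dx.
\end{equation*}

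For the tail integral on $B_\delta$, I first observe that $\mathcal{I}(\Psi)$ and $\mathcal{I}(\Psi_0)$ are both finite; indeed the asymptotics \eqref{FALLOFF1}--\eqref{FALLOFF4.1'} were recorded precisely so that each of the eight summands in \eqref{massfunc} is integrable over $\R^3$ (the same for the analogous asymptotics of $\Psi_0$ recorded in \eqref{FALLOFF4}--\eqref{FALLOFF10.1'}). Absolute continuity of the integral then yields $\int_{B_\delta}\mathcal{L}(\Psi)\,dx\to 0$ and $\int_{B_\delta}\mathcal{L}(\Psi_0)\,dx\to 0$ as $\delta\to 0$, so this contribution vanishes in the limit.

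The more delicate contribution comes from the transition annulus $A_\delta$. On this region $\overline{F}_\delta(\Psi)-\Psi_0=\overline{\varphi}_\delta(\Psi-\Psi_0)$, whence
\begin{equation*}
\bigl|\nabla\overline{F}_\delta(\Psi)\bigr|\leq |\nabla\Psi_0|+|\nabla(\Psi-\Psi_0)|+|\nabla\overline{\varphi}_\delta|\,|\Psi-\Psi_0|.
\end{equation*}
I will then bound each term in $\mathcal{L}(\overline{F}_\delta(\Psi))$ by its value on $\Psi$, on $\Psi_0$, and on the "crossed" error contribution $|\nabla\overline{\varphi}_\delta|^2|\Psi-\Psi_0|^2$. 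The first two pieces integrate to the tails $\int_{A_\delta}\mathcal{L}(\Psi)$ and $\int_{A_\delta}\mathcal{L}(\Psi_0)$, which vanish in the limit by the argument of the previous paragraph. The crossed term is controlled by a direct computation: on $A_\delta$ one has $r\sim 1/\delta$, $|\nabla\overline{\varphi}_\delta|\leq 2\delta^2$, the decay rates \eqref{FALLOFF1} and \eqref{FALLOFF4} give pointwise bounds such as $|U-U_0|$, $|V-V_0|=O(\delta^{1+\kappa})$ together with the analogous (axis-adapted) bounds for $W,\psi^i,\zeta^i,\chi$, while $\mathrm{vol}(A_\delta)=O(\delta^{-6})$ in the metric $\delta_3$ (since $\sqrt{\rho^2+z^2}\sim r^2\sim\delta^{-2}$). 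A typical error contribution is therefore of order $\delta^{4}\cdot\delta^{2+2\kappa}\cdot\delta^{-6}=\delta^{2\kappa}\to 0$.

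The main obstacle is not the linear gradient terms $|\nabla U|^2,|\nabla V|^2,|\nabla W|^2$, which are straightforward, but rather the nonlinear summands involving $\sinh^2 W|\nabla(V+h_2)|^2$ and exponential prefactors like $e^{-6h_1-6U\mp h_2\mp V}(\cosh W)^{\pm 1}$ multiplying $|\Theta^i|^2,|\nabla\psi^i|^2,|\Upsilon|^2$. I will handle these by noting that in $A_\delta\cup B_\delta$ the exponents $U,V$, and $W$ are all small (with $\cosh W,\,e^{\pm 2U},\,e^{\pm 2V}$ uniformly bounded above and below there), so the prefactors are uniformly equivalent to those of $\Psi_0$. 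Moreover $h_1$ and $h_2$ are common to both maps and cancel in any pointwise comparison. Consequently, each nonlinear term is dominated by a constant multiple of the corresponding $\Psi$- and $\Psi_0$-densities plus the same universal error $|\nabla\overline{\varphi}_\delta|^2|\Psi-\Psi_0|^2$, and the argument of the previous paragraph applies uniformly. Summing the eight estimates and sending $\delta\to 0$ yields $\lim_{\delta\to 0}\mathcal{I}(\overline{F}_\delta(\Psi))=\mathcal{I}(\Psi)$, as required.
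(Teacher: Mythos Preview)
Your approach is essentially the same as the paper's: decompose into the inner region $\{r\le 1/\delta\}$, the transition annulus $\{1/\delta<r<2/\delta\}$, and the outer tail $\{r\ge 2/\delta\}$, dispose of the first and third by finiteness of $\mathcal{I}(\Psi)$ and $\mathcal{I}(\Psi_0)$, and estimate the annulus directly. The paper carries out the annulus step by breaking $\mathcal{I}$ into its nine summands $I_1,\dots,I_9$ and computing the decay of each explicitly from the asymptotics \eqref{FALLOFF1}--\eqref{FALLOFF3'} and \eqref{FALLOFF4}--\eqref{FALLOFF6'}.

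The one place where your sketch is too quick is the treatment of the nonlinear summands involving $\overline{\Theta}^i_\delta$ and $\overline{\Upsilon}_\delta$. Because these are built from products such as $\overline{\psi}^1_\delta\,\nabla\overline{\chi}_\delta$ and $\overline{\psi}^1_\delta\,\overline{\psi}^2_\delta\,\nabla\overline{\psi}^1_\delta$, the interpolated quantity is \emph{not} bounded by $|\Theta^i|+|\Theta^i_0|+|\nabla\overline{\varphi}_\delta|\,|\Psi-\Psi_0|$: genuine cross terms like $\psi^1_0\,\nabla\psi^2$ appear, and these are not pointwise dominated by the $\Psi$- or $\Psi_0$-density alone. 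The paper handles this by expanding $|\overline{\Theta}^i_\delta|$ and $|\overline{\Upsilon}_\delta|$ into all constituent pieces and bounding each using the individual component asymptotics (together with the integrated bounds $|\zeta^i-\zeta^i_0|=\rho^2 O(r^{-2-\kappa})$, $|\chi-\chi_0|=\rho^2 O(r^{-3-\kappa})$ obtained from the axis values), arriving at e.g.\ $|\overline{\Theta}^1_\delta|=\sqrt{\rho}\sin\theta\,O(r^{-1-\kappa})$. Once these componentwise bounds are in hand, your ``densities plus universal error'' heuristic becomes rigorous and the $I_5$--$I_9$ integrals tend to zero exactly as you describe.
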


\begin{proof}
Observe that
\begin{equation}
\mathcal{I}(\overline{F}_{\delta}(\Psi))
=\mathcal{I}_{r\leq\frac{1}{\delta}}(\overline{F}_{\delta}(\Psi))
+\mathcal{I}_{\frac{1}{\delta}< r<\frac{2}{\delta}}(\overline{F}_{\delta}(\Psi))
+\mathcal{I}_{r\geq\frac{2}{\delta}}(\overline{F}_{\delta}(\Psi)),
\end{equation}
and by the dominated convergence theorem (DCT) $\mathcal{I}_{r\leq\frac{1}{\delta}}(\overline{F}_{\delta}(\Psi))
\rightarrow \mathcal{I}(\Psi)$. Furthermore $\Psi_{0}$ has finite reduced
energy, which implies that $\mathcal{I}_{r\geq\frac{2}{\delta}}(\overline{F}_{\delta}(\Psi))\rightarrow 0$. Now write
\begin{align}
\begin{split}
\mathcal{I}_{\frac{1}{\delta}< r<\frac{2}{\delta}}(\overline{F}_{\delta}(\Psi))
=&\underbrace{\int_{\frac{1}{\delta}<r<\frac{2}{\delta}}
12|\nabla \overline{U}_{\delta}|^{2}}_{I_{1}}
+\underbrace{\int_{\frac{1}{\delta}<r<\frac{2}{\delta}}
|\nabla\overline{V}_{\delta}|^{2}}_{I_{2}}
+\underbrace{\int_{\frac{1}{\delta}<r<\frac{2}{\delta}}
|\nabla\overline{W}_{\delta}|^{2}}_{I_{3}}\\
&+\underbrace{\int_{\frac{1}{\delta}<r<\frac{2}{\delta}}
\sinh^{2}\overline{W}_{\delta}|\nabla(\overline{V}_{\delta}+h_{2})|^{2}}_{I_{4}}
+\underbrace{\int_{\frac{1}{\delta}<r<\frac{2}{\delta}}
\frac{\cos\theta}{\rho^{3}\sin\theta}
\frac{e^{-\overline{V}_{\delta}-6\overline{U}_{\delta}}}{\cosh\overline{W}_{\delta}}
|\overline{\Theta}^{1}_{\delta}|^{2}}_{I_{5}}\\
&+\underbrace{\int_{\frac{1}{\delta}<r<\frac{2}{\delta}}
\frac{\sin\theta}{\rho^{3}\cos\theta}
e^{\overline{V}_{\delta}-6\overline{U}_{\delta}}\cosh\overline{W}_{\delta}
|\overline{\Theta}^{2}_{\delta}-e^{-\overline{V}_{\delta}}\cot\theta \tanh\overline{W}_{\delta}
\overline{\Theta}^{1}_{\delta}|^{2}}_{I_{6}}\\
&+\underbrace{\int_{\frac{1}{\delta}<r<\frac{2}{\delta}}
\frac{\sin\theta}{\rho\cos\theta}
e^{\overline{V}_{\delta}-2\overline{U}_{\delta}}\cosh\overline{W}_{\delta}
|\nabla\overline{\psi}^{2}_{\delta}-e^{-\overline{V}_{\delta}}\cot\theta \tanh\overline{W}_{\delta}
\nabla\overline{\psi}^{1}_{\delta}|^{2}}_{I_{7}}\\
&+\underbrace{\int_{\frac{1}{\delta}<r<\frac{2}{\delta}}
\frac{\cos\theta}{\rho\sin\theta}
\frac{e^{-\overline{V}_{\delta}-2\overline{U}_{\delta}}}{\cosh\overline{W}_{\delta}}
|\nabla\overline{\psi}^{1}_{\delta}|^{2}}_{I_{8}}
+\underbrace{\int_{\frac{1}{\delta}<r<\frac{2}{\delta}}
\rho^{-2}e^{-4\overline{U}_{\delta}}|\overline{\Upsilon}_{\delta}|^{2}
}_{I_{9}}.
\end{split}
\end{align}
A direct computation shows that
\begin{equation}
I_{1}\leq C\int_{0}^{\frac{\pi}{2}}\int_{\frac{1}{\delta}}^{\frac{2}{\delta}}
\left(\underbrace{|\nabla U|^{2}}_{O(r^{-6-2\kappa})}+\underbrace{|\nabla U_{0}|^{2}}_{O(r^{-8})}
+\underbrace{(U-U_{0})^{2}}_{O(r^{-2-2\kappa})}
\underbrace{|\nabla\overline{\varphi}_{\delta}|^{2}}_{ O(\delta^{4})}\right)r^{5}\sin(2\theta)dr d\theta\rightarrow 0,
\end{equation}
and similar considerations yield $I_{2}\rightarrow 0$ as well as $I_{3}\rightarrow 0$.
Moreover, using
$\sinh\overline{W}_{\delta}=\sqrt{\rho}O(r^{-2-\kappa})$ produces
\begin{equation}
I_{4}\leq \int_{0}^{\frac{\pi}{2}}\int_{\frac{1}{\delta}}^{\frac{2}{\delta}}
\rho O(r^{-4-2\kappa})\left(\underbrace{|\nabla V|^{2}}_{O(r^{-6-2\kappa})}+
\underbrace{|\nabla V_{0}|^{2}}_{O(r^{-8})}
+\underbrace{|\nabla h_{2}|^{2}}_{O(\rho^{-2})}
+\underbrace{(V-V_{0})^{2}}_{O(r^{-2-2\kappa})}\underbrace{|\nabla\overline{\varphi}_{\delta}|^{2}}_{ O(\delta^{4})}\right)r^{5}\sin(2\theta)dr d\theta\rightarrow 0.
\end{equation}

Next observe that \eqref{FALLOFF3}, \eqref{FALLOFF3'}, and \eqref{FALLOFF6'} together with $(\chi-\chi_{0})|_{\Gamma}=0$, $\psi^{i}|_{\Gamma_{i}}=0$, and $(\zeta^{i}-\zeta_{0}^{i})|_{\Gamma}=0$, $i=1,2$ give rise to the following estimates for $r\in[\frac{1}{\delta},\frac{2}{\delta}]$:
\begin{equation}\label{002.1}
|(\chi-\chi_{0})(\rho,z,\phi)|
\leq\int_{0}^{\rho}|\partial_{\rho}(\chi-\chi_{0})
(\tilde{\rho},z,\phi)|d\tilde{\rho}
=\rho^{2}O(r^{-3-\kappa}),
\end{equation}
\begin{equation}\label{002.2}
|\psi^{1}(\rho,z,\phi)|
\leq\int_{0}^{\rho}|\partial_{\rho}\psi^{1}
(\tilde{\rho},z,\phi)|d\tilde{\rho}
=\sin^{3/2}\theta O(r^{-\kappa}),
\end{equation}
\begin{equation}\label{002.3}
|\psi^{2}(\rho,z,\phi)|
\leq\int_{0}^{\rho}|\partial_{\rho}\psi^{2}
(\tilde{\rho},z,\phi)|d\tilde{\rho}
=\cos^{3/2}\theta O(r^{-\kappa}),
\end{equation}
\begin{equation}\label{002}
|(\zeta^{i}-\zeta^{i}_{0})(\rho,z,\phi)|
\leq\int_{0}^{\rho}|\partial_{\rho}(\zeta^{i}-\zeta^{i}_{0})
(\tilde{\rho},z,\phi)|d\tilde{\rho}
=\rho^{2}O(r^{-2-\kappa}).
\end{equation}
From this we find that
\begin{equation}
|\nabla\overline{\psi}^{1}_{\delta}|
\leq\underbrace{|\nabla\psi^{1}|}_{\sqrt{\sin\theta}O(r^{-2-\kappa})}
+\underbrace{|\nabla\psi_{0}^{1}|}_{\sin\theta O(r^{-4})}
+\underbrace{|\psi^{1}-\psi^{1}_{0}|}_{\sin^{3/2}\theta O(r^{-\kappa})}\underbrace{|\nabla\overline{\varphi}_{\delta}|}_{O(\delta^{2})}
=\sqrt{\sin\theta}O(r^{-2-\kappa}),
\end{equation}
and similarly
\begin{equation}
|\nabla\overline{\psi}^{2}_{\delta}|
=\sqrt{\cos\theta}O(r^{-2-\kappa}),
\end{equation}
as well as
\begin{align}
\begin{split}
|\overline{\Upsilon}_{\delta}|\leq& C\left(\underbrace{|\nabla\chi|}_{\rho O(r^{-3-\kappa})}+\underbrace{|\nabla\chi_{0}|}_{\rho O(r^{-4})}
+\underbrace{|\chi-\chi_{0}|}_{\rho^{2} O(r^{-3-\kappa})}\underbrace{|\nabla
\overline{\varphi}_{\delta}|}_{O(\delta^{2})}
+\underbrace{(|\psi^{1}|+|\psi^{1}_{0}|)}_{\sin^{3/2}\theta O(r^{-\kappa})}
\underbrace{(|\nabla\psi^{2}|
+|\nabla\psi^{2}_{0}|)}_{\sqrt{\cos\theta}O(r^{-2-\kappa})}\right.\\
&
+\left.\underbrace{(|\psi^{2}|+|\psi^{2}_{0}|)}_{\cos^{3/2}\theta O(r^{-\kappa})}
\underbrace{(|\nabla\psi^{1}|
+|\nabla\psi^{1}_{0}|)}_{\sqrt{\sin\theta}O(r^{-2-\kappa})}
+\underbrace{(|\psi^{1}|+|\psi^{1}_{0}|)}_{\sin^{3/2}\theta O(r^{-\kappa})}
\underbrace{(|\psi^{2}|+|\psi^{2}_{0}|)}_{\cos^{3/2}\theta O(r^{-\kappa})}
\underbrace{|\nabla\overline{\varphi}_{\delta}|}_{O(\delta^{2})}\right)\\
=& \sqrt{\rho} O(r^{-2-\kappa}),
\end{split}
\end{align}
\begin{align}
\begin{split}
|\overline{\Theta}_{\delta}^{1}|\leq& C\left[\underbrace{|\nabla\zeta^{1}|}_{\rho \sqrt{\sin\theta}O(r^{-2-\kappa})}+\underbrace{|\nabla\zeta^{1}_{0}|}_{\rho \sin^{2}\theta O(r^{-4})}
+\underbrace{|\zeta^{1}-\zeta^{1}_{0}|}_{\rho^{2} O(r^{-2-\kappa})}\underbrace{|\nabla
\overline{\varphi}_{\delta}|}_{O(\delta^{2})}\right.\\
&+\underbrace{(|\psi^{1}|+|\psi^{1}_{0}|)}_{\sin^{3/2}\theta O(r^{-\kappa})}\left(
\underbrace{|\nabla\chi|}_{\rho O(r^{-3-\kappa})}+\underbrace{|\nabla\chi_{0}|}_{\rho O(r^{-4})}
+\underbrace{|\chi-\chi_{0}|}_{\rho^{2} O(r^{-3-\kappa})}\underbrace{|\nabla
\overline{\varphi}_{\delta}|}_{O(\delta^{2})}
+\underbrace{(|\psi^{1}|+|\psi^{1}_{0}|)}_{\sin^{3/2}\theta O(r^{-\kappa})}
\underbrace{(|\nabla\psi^{2}|
+|\nabla\psi^{2}_{0}|)}_{\sqrt{\cos\theta}O(r^{-2-\kappa})}\right.\\
&+\left.\left.
\underbrace{(|\psi^{2}|+|\psi^{2}_{0}|)}_{\cos^{3/2}\theta O(r^{-\kappa})}
\underbrace{(|\nabla\psi^{1}|
+|\nabla\psi^{1}_{0}|)}_{\sqrt{\sin\theta}O(r^{-2-\kappa})}
+\underbrace{(|\psi^{1}|+|\psi^{1}_{0}|)}_{\sin^{3/2}\theta O(r^{-\kappa})}
\underbrace{(|\psi^{2}|+|\psi^{2}_{0}|)}_{\cos^{3/2}\theta O(r^{-\kappa})}
\underbrace{|\nabla\overline{\varphi}_{\delta}|}_{O(\delta^{2})}\right)\right]\\
=& \sqrt{\rho}\sin\theta O(r^{-1-\kappa}),
\end{split}
\end{align}
and
\begin{equation}
|\overline{\Theta}^{2}_{\delta}|= \sqrt{\rho} \cos\theta O(r^{-1-\kappa}).
\end{equation}
Therefore
\begin{equation}
I_{5}\leq C\int_{0}^{\frac{\pi}{2}}\int_{\frac{1}{\delta}}^{\frac{2}{\delta}}
\frac{\cos\theta}{\rho^{3}\sin\theta}\underbrace{|\overline{\Theta}_{\delta}^{1}|^{2}}
_{\rho\sin^{2}\theta O(r^{-2-2\kappa})}r^{5}\sin(2\theta)dr d\theta\rightarrow 0,
\end{equation}
\begin{equation}
I_{8}\leq C\int_{0}^{\frac{\pi}{2}}\int_{\frac{1}{\delta}}^{\frac{2}{\delta}}
\frac{\cos\theta}{\rho \sin\theta}\underbrace{|\nabla\overline{\psi}_{\delta}^{1}|^{2}}
_{\sin\theta O(r^{-4-2\kappa})}r^{5}\sin(2\theta)dr d\theta\rightarrow 0,
\end{equation}
\begin{equation}
I_{9}\leq C\int_{0}^{\frac{\pi}{2}}\int_{\frac{1}{\delta}}^{\frac{2}{\delta}}
\rho^{-2}\underbrace{e^{-4\overline{U}_{\delta}}}_{O(1)}
\underbrace{|\overline{\Upsilon}_{\delta}|^{2}}
_{\rho O(r^{-4-2\kappa})}r^{5}\sin(2\theta)dr d\theta\rightarrow 0.
\end{equation}
It may be shown in an analogous way that $I_{6}$ and $I_{7}$ also converge to zero.
\end{proof}

In the next step of the cut-and-paste argument we
consider small balls centered at the origin.

\begin{lemma}
Set
\begin{equation}
F_{\delta}(\Psi)=
(U,V_{\delta},W_{\delta}
,\zeta^{1}_{\delta},\zeta^{2}_{\delta},\chi_{\delta},\psi^{1}_{\delta},
\psi^{2}_{\delta})
\end{equation}
with
\begin{align}
\begin{split}
(V_{\delta},W_{\delta}
,\zeta^{1}_{\delta},\zeta^{2}_{\delta},\chi_{\delta},
\psi^{1}_{\delta},\psi^{2}_{\delta})=&(V_0,W_0,\zeta^{1}_0,\zeta^{2}_0,\chi_{0},
\psi_{0}^{1},\psi_{0}^{2})\\
&
+\varphi_{\delta}(V-V_0,W-W_0,\zeta^{1}-\zeta^{1}_0,\zeta^{2}-\zeta^{2}_0,
\chi-\chi_{0},\psi^{1}-\psi^{1}_{0},\psi^{2}-\psi^{2}_{0}),
\end{split}
\end{align}
so that except for the first component $F_{\delta}(\Psi)$ agrees with $\Psi_{0}$ on $B_{\delta}$.
Then $\lim_{\delta\rightarrow 0}\mathcal{I}(F_{\delta}(\Psi))=\mathcal{I}(\Psi)$, and this also holds in the case that $\Psi\equiv\Psi_{0}$ outside of $B_{2/\delta}$.
\end{lemma}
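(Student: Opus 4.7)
The plan is to split $\mathcal{I}(F_{\delta}(\Psi))$ into contributions from three regions: the inner ball $B_{\delta}=\{r\le\delta\}$, the transition annulus $A_{\delta}=\{\delta<r<2\delta\}$, and the exterior $E_{\delta}=\{r\ge 2\delta\}$. On $E_{\delta}$ the cutoff $\varphi_{\delta}\equiv 1$, so $F_{\delta}(\Psi)\equiv\Psi$ and the dominated convergence theorem yields $\mathcal{I}_{E_{\delta}}(F_{\delta}(\Psi))\to\mathcal{I}(\Psi)$. It therefore suffices to prove that the contributions from $B_{\delta}$ and $A_{\delta}$ vanish as $\delta\to 0$. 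The argument will closely mirror the preceding lemma, but now with the near-origin asymptotics (in both the asymptotically flat and the asymptotically cylindrical case) replacing the asymptotic decay at infinity.

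On $B_{\delta}$ all seven non-$U$ components of $F_{\delta}(\Psi)$ equal those of $\Psi_{0}$, so $\Theta^{i}_{\delta}=\Theta^{i}_{0}$, $\Upsilon_{\delta}=\Upsilon_{0}$, $W_{\delta}=W_{0}$, and $V_{\delta}=V_{0}$. The $U$-independent terms $|\nabla V_{0}|^{2}$, $|\nabla W_{0}|^{2}$, and $\sinh^{2}W_{0}|\nabla(V_{0}+h_{2})|^{2}$ are integrable near the origin by the Myers-Perry asymptotics, while $|\nabla U|^{2}$ is integrable by the corresponding asymptotics for $\Psi$; all four contributions vanish on $B_{\delta}$ by DCT. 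For the $U$-weighted pieces, I would use the pointwise estimate $U-U_{0}=O(1)$ (which follows since either $U+2\log r$ or $U+\log r$ is $O(1)$, and $U_{0}+\log r=O(1)$) to dominate $e^{-2U},e^{-4U},e^{-6U}$ by a constant multiple of the analogous exponentials of $U_{0}$; then the finiteness of $\mathcal{I}(\Psi_{0})$ together with DCT gives vanishing of these terms on $B_{\delta}$ as well.

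The main work is on the transition annulus $A_{\delta}$, where $|\nabla\varphi_{\delta}|\le 2\delta^{-2}$. I would expand $\mathcal{I}_{A_{\delta}}(F_{\delta}(\Psi))$ as a sum of nine terms $I_{1},\ldots,I_{9}$ with exactly the structure displayed in the previous lemma, and estimate each term separately. The basic tool is the triangle inequality bound
\[
|\nabla V_{\delta}|\le|\nabla V|+|\nabla V_{0}|+|V-V_{0}|\,|\nabla\varphi_{\delta}|,
\]
and analogous bounds for $W_{\delta}$, $\chi_{\delta}$, $\psi^{i}_{\delta}$, $\zeta^{i}_{\delta}$. To control the undifferentiated differences I would integrate along radial segments from the axes, using the hypotheses $\chi|_{\Gamma}=\chi_{0}|_{\Gamma}$, $\zeta^{i}|_{\Gamma}=\zeta^{i}_{0}|_{\Gamma}$, $\psi^{1}|_{\Gamma_{1}}=0=\psi^{1}_{0}|_{\Gamma_{1}}$, and $\psi^{2}|_{\Gamma_{2}}=0=\psi^{2}_{0}|_{\Gamma_{2}}$, combined with the near-origin gradient asymptotics, to obtain pointwise bounds of the form $|\chi-\chi_{0}|=\rho^{2}O(r^{-\alpha})$, $|\zeta^{i}-\zeta^{i}_{0}|=\rho^{2}O(r^{-\beta})$, and $|\psi^{i}-\psi^{i}_{0}|=\rho^{1/2}\cdot(\text{angular factor})\cdot O(r^{-\gamma})$ for appropriate exponents depending on whether the nondesignated end is asymptotically flat or cylindrical. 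Each of the nine terms is then reduced to an explicit integral over $r\in[\delta,2\delta]$ and $\theta\in[0,\pi/2]$, weighted by $r^{5}\sin(2\theta)$, and shown to be $O(\delta^{\kappa})$.

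The main obstacle is the asymptotically cylindrical regime. There the exponentials $e^{-6U}$ can grow like $r^{-6}$ (rather than the benign decay available at infinity), so the positive powers of $r$ needed for convergence must come entirely from the stated $\kappa$-margins in the decay of $\Theta^{i}$, $\Upsilon$, and $\nabla\psi^{i}$, combined with the $r^{5}$ volume factor and with careful use of the $|\nabla\varphi_{\delta}|^{2}\le 4\delta^{-4}$ bound (which costs two powers of $\delta$). The delicate bookkeeping is to verify, term by term, that every such contribution retains a strictly positive power of $\delta$; the final claim $\mathcal{I}(F_{\delta}(\Psi))\to\mathcal{I}(\Psi)$ then follows by adding the three regional estimates. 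The parenthetical observation in the statement, that the conclusion persists when $\Psi\equiv\Psi_{0}$ outside $B_{2/\delta}$, is automatic since the cut-and-paste in the present lemma is supported in $B_{2\delta}$, which is disjoint from the region where the previous lemma modified $\Psi$ once $\delta$ is small.
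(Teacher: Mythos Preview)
Your proposal follows the paper's proof essentially line for line: the same three-region decomposition, the same use of $e^{-U}\le Ce^{-U_0}$ on $B_\delta$ to dominate by the finite reduced energy of $\Psi_0$, and the same nine-term expansion on the annulus with pointwise difference bounds obtained by integrating from the axes. Two small slips to fix: in the asymptotically flat case $U-U_0=-\log r+O(1)$ is \emph{not} $O(1)$, but it is bounded below, and that one-sided bound is all you need for $e^{-kU}\le Ce^{-kU_0}$; and $e^{-6U}$ does not grow like $r^{-6}$ in the cylindrical regime but rather decays like $r^{6}$ (and like $r^{12}$ in the AF regime), which only helps.
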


\begin{proof}
Observe that
\begin{equation}\label{080}
\mathcal{I}(F_{\delta}(\Psi))
=\mathcal{I}_{r\leq\delta}(F_{\delta}(\Psi))
+\mathcal{I}_{\delta< r<2\delta}(F_{\delta}(\Psi))
+\mathcal{I}_{r\geq2\delta}(F_{\delta}(\Psi)),
\end{equation}
and by the dominated convergence theorem
$\mathcal{I}_{r\geq2\delta}(F_{\delta}(\Psi))
=\mathcal{I}_{r\geq2\delta}(\Psi)
\rightarrow \mathcal{I}(\Psi)$.
Furthermore
\begin{align}
\begin{split}
\mathcal{I}_{r\leq\delta}(F_{\delta}(\Psi))&=\int_{r\leq\delta}12|\nabla U|^{2}
+|\nabla V_{0}|^{2}+|\nabla W_{0}|^{2}
+\sinh^{2}W_{0}|\nabla(V_{0}+h_{2})|^{2}+\frac{e^{-6h_{1}-6U-h_{2}-V_{0}}}{\cosh W_{0}}|\Theta_{0}^{1}|^{2}\\
&+\int_{r\leq\delta}
e^{-6h_{1}-6U+h_{2}+V_{0}}\cosh W_{0}
\left|\Theta_{0}^{2}-e^{-h_{2}-V_{0}}\tanh W_{0}\Theta_{0}^{1}\right|^{2}
+\frac{e^{-2h_{1}-2U-h_{2}-h_{2}-V_{0}}}{\cosh W_{0}}|\nabla\psi^{1}_{0}|^{2}\\
&+\int_{r\leq\delta}e^{-2h_{1}-2U+h_{2}+V_{0}}\cosh W_{0}|\nabla\psi_{0}^{2}
-e^{-h_{2}-V_{0}}\tanh W_{0}\nabla\psi_{0}^{1}|^{2}
+e^{-4h_{1}-4U}|\Upsilon_{0}|^{2},
\end{split}
\end{align}
where all but the first term on the right-hand side may be estimated
by the reduced energy of $\Psi_{0}$ (and therefore converge to zero), since
$e^{-U}\leq Ce^{-U_{0}}$ near the origin;
the first term also converges to zero by the DCT.
Now write
\begin{align}\label{070}
\begin{split}
\mathcal{I}_{\delta< r<2\delta}(F_{\delta}(\Psi))
=&\underbrace{\int_{\delta<r<2\delta}
12|\nabla U|^{2}}_{I_{1}}
+\underbrace{\int_{\delta<r<2\delta}
|\nabla V_{\delta}|^{2}}_{I_{2}}
+\underbrace{\int_{\delta<r<2\delta}
|\nabla W_{\delta}|^{2}}_{I_{3}}\\
&+\underbrace{\int_{\delta<r<2\delta}
\sinh^{2}W_{\delta}|\nabla(V_{\delta}+h_{2})|^{2}}_{I_{4}}
+\underbrace{\int_{\delta<r<2\delta}
\frac{\cos\theta}{\rho^{3}\sin\theta}
\frac{e^{-V_{\delta}-6U}}{\cosh W_{\delta}}
|\Theta^{1}_{\delta}|^{2}}_{I_{5}}\\
&+\underbrace{\int_{\delta<r<2\delta}
\frac{\sin\theta}{\rho^{3}\cos\theta}
e^{V_{\delta}-6U}\cosh W_{\delta}
|\Theta^{2}_{\delta}-e^{-V_{\delta}}\cot\theta \tanh W_{\delta}
\Theta^{1}_{\delta}|^{2}}_{I_{6}}\\
&+\underbrace{\int_{\delta<r<2\delta}
\frac{\sin\theta}{\rho\cos\theta}
e^{V_{\delta}-2U}\cosh W_{\delta}
|\nabla\psi^{2}_{\delta}-e^{-V_{\delta}}\cot\theta \tanh W_{\delta}
\nabla\psi^{1}_{\delta}|^{2}}_{I_{7}}\\
&+\underbrace{\int_{\delta<r<2\delta}
\frac{\cos\theta}{\rho\sin\theta}
\frac{e^{-V_{\delta}-2U}}{\cosh W_{\delta}}
|\nabla\psi^{1}_{\delta}|^{2}}_{I_{8}}
+\underbrace{\int_{\delta<r<2\delta}
\rho^{-2}e^{-4U}|\Upsilon_{\delta}|^{2}
}_{I_{9}}.
\end{split}
\end{align}
We have
\begin{equation}\label{00072}
I_{2}\leq C\int_{0}^{\frac{\pi}{2}}\int_{\delta}^{2\delta}
\left(\underbrace{|\nabla V|^{2}}_{O(r^{-4})}+\underbrace{|\nabla V_{0}|^{2}}_{O(r^{-4})}
+\underbrace{(V-V_{0})^{2}}_{O(1)}\underbrace{|\nabla\varphi_{\delta}|^{2}}_{ O(\delta^{-4})}\right)r^{5}\sin(2\theta)dr d\theta\rightarrow 0,
\end{equation}
and similarly for $I_{1}$ as well as $I_{3}$.
Moreover, using
$\sinh W_{\delta}=\sqrt{\rho}O(r^{-1})$
yields
\begin{equation}
I_{4}\leq C\int_{0}^{\frac{\pi}{2}}\int_{\delta}^{2\delta}
\rho r^{-2}\left(\underbrace{|\nabla V|^{2}}_{O(r^{-4})}+
\underbrace{|\nabla V_{0}|^{2}}_{O(r^{-4})}
+\underbrace{|\nabla h_{2}|^{2}}_{O(\rho^{-2})}
+\underbrace{(V-V_{0})^{2}}_{O(1)}\underbrace{|\nabla\varphi_{\delta}|^{2}}_{ O(\delta^{-4})}\right)r^{5}\sin(2\theta)dr d\theta\rightarrow 0.
\end{equation}

Next observe that \eqref{FALLOFF4.0}, \eqref{FALLOFF4.0'}, \eqref{FALLOFF7.0}, \eqref{FALLOFF7.0'}, and \eqref{FALLOFF10.0'} together with $(\chi-\chi_{0})|_{\Gamma}=0$, $\psi^{i}|_{\Gamma_{i}}=0$, and $(\zeta^{i}-\zeta_{0}^{i})|_{\Gamma}=0$, $i=1,2$ give rise to the following estimates for $r\in[\delta,2\delta]$:
\begin{equation}\label{002.1'}
|(\chi-\chi_{0})(\rho,z,\phi)|
\leq\int_{0}^{\rho}|\partial_{\rho}(\chi-\chi_{0})
(\tilde{\rho},z,\phi)|d\tilde{\rho}
=\begin{cases}
\rho^{2}O(r^{-7+\kappa})\text{ }\text{ in the AF case,}\\
\rho^{2}O(r^{-5+\kappa})\text{ }\text{ in the AC case},
\end{cases}
\end{equation}
\begin{equation}\label{002.2'}
|\psi^{1}(\rho,z,\phi)|
\leq\int_{0}^{\rho}|\partial_{\rho}\psi^{1}
(\tilde{\rho},z,\phi)|d\tilde{\rho}
=\begin{cases}
\sin^{3/2}\theta O(r^{-1})\text{ }\text{ in the AF case,}\\
\sin^{3/2}\theta O(1)\text{ }\text{ in the AC case},
\end{cases}
\end{equation}
\begin{equation}\label{002.3'}
|\psi^{2}(\rho,z,\phi)|
\leq\int_{0}^{\rho}|\partial_{\rho}\psi^{2}
(\tilde{\rho},z,\phi)|d\tilde{\rho}
=\begin{cases}
\cos^{3/2}\theta O(r^{-1})\text{ }\text{ in the AF case,}\\
\cos^{3/2}\theta O(1)\text{ }\text{ in the AC case},
\end{cases}
\end{equation}
\begin{equation}\label{002'}
|(\zeta^{i}-\zeta^{i}_{0})(\rho,z,\phi)|
\leq\int_{0}^{\rho}|\partial_{\rho}(\zeta^{i}-\zeta^{i}_{0})
(\tilde{\rho},z,\phi)|d\tilde{\rho}
=\begin{cases}
\rho^{2}O(r^{-8+\kappa})\text{ }\text{ in the AF case,}\\
\rho^{2}O(r^{-5+\kappa})\text{ }\text{ in the AC case}.
\end{cases}
\end{equation}
From this we find that
\begin{align}
\begin{split}
|\nabla\psi^{1}_{\delta}|
\leq&\underbrace{|\nabla\psi^{1}|}_{\begin{cases}
\sqrt{\sin\theta} O(r^{-3})\text{ }\text{ AF}\\
\sqrt{\sin\theta} O(r^{-2})\text{ }\text{ AC}
\end{cases}}
+\underbrace{|\nabla\psi_{0}^{1}|}_{\sin\theta O(r^{-2})}
+\underbrace{|\psi^{1}-\psi^{1}_{0}|}_{\begin{cases}
\sin^{3/2}\theta O(r^{-1})\text{ }\text{ AF}\\
\sin^{3/2}\theta O(1)\text{ }\text{ AC}
\end{cases}}
\underbrace{|\nabla\varphi_{\delta}|}_{O(\delta^{-2})}\\
=&\begin{cases}
\sqrt{\sin\theta} O(r^{-3})\text{ }\text{ in the AF case}\\
\sqrt{\sin\theta} O(r^{-2})\text{ }\text{ in the AC case}
\end{cases},
\end{split}
\end{align}
and similarly
\begin{equation}
|\nabla\psi^{2}_{\delta}|
=\begin{cases}
\sqrt{\cos\theta} O(r^{-3})\text{ }\text{ in the AF case}\\
\sqrt{\cos\theta} O(r^{-2})\text{ }\text{ in the AC case}
\end{cases},
\end{equation}
as well as
\begin{align}
\begin{split}
|\Upsilon_{\delta}|&\leq C\Big(\underbrace{|\nabla\chi|}_{\begin{cases}
\rho O(r^{-7+\kappa})\text{ }\text{ AF}\\
\rho O(r^{-5+\kappa})\text{ }\text{ AC}
\end{cases}}+\underbrace{|\nabla\chi_{0}|}_{\rho O(r^{-4})}
+\underbrace{|\chi-\chi_{0}|}_{\begin{cases}
\rho^{2} O(r^{-7+\kappa})\text{ }\text{ AF}\\
\rho^{2} O(r^{-5+\kappa})\text{ }\text{ AC}
\end{cases}}
\underbrace{|\nabla
\overline{\varphi}_{\delta}|}_{O(\delta^{-2})}
\\
&+\underbrace{(|\psi^{1}|+|\psi^{1}_{0}|)}_{\begin{cases}
\sin^{3/2} O(r^{-1})\text{ }\text{ AF}\\
\sin^{3/2} O(1)\text{ }\text{ AC}
\end{cases}}
\underbrace{(|\nabla\psi^{2}|+|\nabla\psi^{2}_{0}|)}_{\begin{cases}
\sqrt{\cos\theta} O(r^{-3})\text{ }\text{ AF}\\
\sqrt{\cos\theta} O(r^{-2})\text{ }\text{ AC}
\end{cases}}
+\underbrace{(|\psi^{2}|+|\psi^{2}_{0}|)}_{\begin{cases}
\cos^{3/2} O(r^{-1})\text{ }\text{ AF}\\
\cos^{3/2} O(1)\text{ }\text{ AC}
\end{cases}}
\underbrace{(|\nabla\psi^{1}|+|\nabla\psi^{1}_{0}|)}_{\begin{cases}
\sqrt{\sin\theta} O(r^{-3})\text{ }\text{ AF}\\
\sqrt{\sin\theta} O(r^{-2})\text{ }\text{ AC}
\end{cases}}\\
&+\underbrace{(|\psi^{1}|+|\psi^{1}_{0}|)}_{\begin{cases}
\sin^{3/2} O(r^{-1})\text{ }\text{ AF}\\
\sin^{3/2} O(1)\text{ }\text{ AC}
\end{cases}}\underbrace{(|\psi^{2}|+|\psi^{2}_{0}|)}_{\begin{cases}
\cos^{3/2} O(r^{-1})\text{ }\text{ AF}\\
\cos^{3/2} O(1)\text{ }\text{ AC}
\end{cases}}
\underbrace{|\nabla\varphi_{\delta}|}_{O(\delta^{-2})}\Big)\\
=& \begin{cases}
\sqrt{\rho} O(r^{-6+\kappa})\text{ }\text{ in the AF case}\\
\sqrt{\rho} O(r^{-4+\kappa})\text{ }\text{ in the AC case}
\end{cases},
\end{split}
\end{align}
\begin{align}
\begin{split}
|\Theta_{\delta}^{1}|&\leq C\Big[\underbrace{|\nabla\zeta^{1}|}_{\begin{cases}
\rho\sqrt{\sin\theta} O(r^{-8+\kappa})\text{ }\text{ AF}\\
\rho\sqrt{\sin\theta} O(r^{-5+\kappa})\text{ }\text{ AC}
\end{cases}}+\underbrace{|\nabla\zeta^{1}_{0}|}_{\rho \sin^{2}\theta O(r^{-4})}
+\underbrace{|\zeta^{1}-\zeta^{1}_{0}|}_{\begin{cases}
\rho^{2} O(r^{-8+\kappa})\text{ }\text{ AF}\\
\rho^{2} O(r^{-5+\kappa})\text{ }\text{ AC}
\end{cases}}
\underbrace{|\nabla
\varphi_{\delta}|}_{O(\delta^{-2})}\\
&+\underbrace{(|\psi^{1}|+|\psi^{1}_{0}|)}_{\begin{cases}
\sin^{3/2}\theta O(r^{-1})\text{ }\text{ AF}\\
\sin^{3/2}\theta O(1)\text{ }\text{ AC}
\end{cases}}
\big(\underbrace{|\nabla\chi|}_{\begin{cases}
\rho O(r^{-7+\kappa})\text{ }\text{ AF}\\
\rho O(r^{-5+\kappa})\text{ }\text{ AC}
\end{cases}}+\underbrace{|\nabla\chi_{0}|}_{\rho O(r^{-4})}
+\underbrace{|\chi-\chi_{0}|}_{\begin{cases}
\rho^{2} O(r^{-7+\kappa})\text{ }\text{ AF}\\
\rho^{2} O(r^{-5+\kappa})\text{ }\text{ AC}
\end{cases}}
\underbrace{|\nabla
\overline{\varphi}_{\delta}|}_{O(\delta^{-2})}\\
&+\underbrace{(|\psi^{1}|+|\psi^{1}_{0}|)}_{\begin{cases}
\sin^{3/2} O(r^{-1})\text{ }\text{ AF}\\
\sin^{3/2} O(1)\text{ }\text{ AC}
\end{cases}}
\underbrace{(|\nabla\psi^{2}|+|\nabla\psi^{2}_{0}|)}_{\begin{cases}
\sqrt{\cos\theta} O(r^{-3})\text{ }\text{ AF}\\
\sqrt{\cos\theta} O(r^{-2})\text{ }\text{ AC}
\end{cases}}
+\underbrace{(|\psi^{2}|+|\psi^{2}_{0}|)}_{\begin{cases}
\cos^{3/2} O(r^{-1})\text{ }\text{ AF}\\
\cos^{3/2} O(1)\text{ }\text{ AC}
\end{cases}}
\underbrace{(|\nabla\psi^{1}|+|\nabla\psi^{1}_{0}|)}_{\begin{cases}
\sqrt{\sin\theta} O(r^{-3})\text{ }\text{ AF}\\
\sqrt{\sin\theta} O(r^{-2})\text{ }\text{ AC}
\end{cases}}\\
&+\underbrace{(|\psi^{1}|+|\psi^{1}_{0}|)}_{\begin{cases}
\sin^{3/2} O(r^{-1})\text{ }\text{ AF}\\
\sin^{3/2} O(1)\text{ }\text{ AC}
\end{cases}}\underbrace{(|\psi^{2}|+|\psi^{2}_{0}|)}_{\begin{cases}
\cos^{3/2} O(r^{-1})\text{ }\text{ AF}\\
\cos^{3/2} O(1)\text{ }\text{ AC}
\end{cases}}
\underbrace{|\nabla\varphi_{\delta}|}_{O(\delta^{-2})}\big)\Big]\\
=& \begin{cases}
\sqrt{\rho}\sin\theta O(r^{-7+\kappa})\text{ }\text{ in the AF case}\\
\sqrt{\rho}\sin\theta O(r^{-4+\kappa})\text{ }\text{ in the AC case}
\end{cases},
\end{split}
\end{align}
and
\begin{equation}
|\Theta^{2}_{\delta}|= \begin{cases}
\sqrt{\rho}\cos\theta O(r^{-7+\kappa})\text{ }\text{ in the AF case}\\
\sqrt{\rho}\cos\theta O(r^{-4+\kappa})\text{ }\text{ in the AC case}
\end{cases}.
\end{equation}
Therefore
\begin{equation}
I_{5}\leq C\int_{0}^{\frac{\pi}{2}}\int_{\delta}^{2\delta}
\frac{\cos\theta}{\rho^{3}\sin\theta}\underbrace{\frac{e^{-V_{\delta}-6U}}
{\cosh W_{\delta}}}
_{\begin{cases}
O(r^{12})\text{ }\text{ AF}\\
O(r^{6})\text{ }\text{ AC}
\end{cases}}
\underbrace{|\Theta_{\delta}^{1}|^{2}}
_{\begin{cases}
\rho\sin^{2}\theta O(r^{-14+2\kappa})\text{ }\text{ AF}\\
\rho\sin^{2}\theta O(r^{-8+2\kappa})\text{ }\text{ AC}
\end{cases}}
r^{5}\sin(2\theta)dr d\theta\rightarrow 0,
\end{equation}
\begin{equation}
I_{8}\leq C\int_{0}^{\frac{\pi}{2}}\int_{\delta}^{2\delta}
\frac{\cos\theta}{\rho \sin\theta}\underbrace{\frac{e^{-V_{\delta}-2U}}{\cosh W_{\delta}}}_{\begin{cases}
O(r^{4})\text{ }\text{ AF}\\
O(r^{2})\text{ }\text{ AC}
\end{cases}}
\underbrace{|\nabla\psi_{\delta}^{1}|^{2}}
_{\begin{cases}
\sin\theta O(r^{-6})\text{ }\text{ AF}\\
\sin\theta O(r^{-4})\text{ }\text{ AC}
\end{cases}}r^{5}\sin(2\theta)dr d\theta\rightarrow 0,
\end{equation}
\begin{equation}
I_{9}\leq C\int_{0}^{\frac{\pi}{2}}\int_{\delta}^{2\delta}
\rho^{-2}\underbrace{e^{-4U}}_{\begin{cases}
O(r^{8})\text{ }\text{ AF}\\
O(r^{4})\text{ }\text{ AC}
\end{cases}}
\underbrace{|\Upsilon_{\delta}|^{2}}
_{\begin{cases}
\rho O(r^{-12+2\kappa})\text{ }\text{ AF}\\
\rho O(r^{-8+2\kappa})\text{ }\text{ AC}
\end{cases}}r^{5}\sin(2\theta)dr d\theta\rightarrow 0.
\end{equation}
It may be shown in an analogous way that $I_{6}$ and $I_{7}$ also converge to zero.
\end{proof}

Lastly we treat the asymptotic regimes near the axes $\Gamma$ and away from the origin. For this purpose it will be useful to define the domains
\begin{equation}
\mathcal{C}_{\delta,\varepsilon}=
\{\rho\leq\varepsilon\}\cap
\{\delta\leq r\leq 2/\delta\},\text{ }\text{ }\text{ }\text{ }\text{ }\text{ }\text{ }\mathcal{W}_{\delta,\varepsilon}=
\{\varepsilon\leq \rho\leq\sqrt{\varepsilon}\}\cap
\{\delta\leq r\leq 2/\delta\}.
\end{equation}

\begin{lemma}
Set
\begin{equation}
G_{\varepsilon}(\Psi)=(U,V_{\varepsilon},
W_{\varepsilon},\zeta^{1}_{\varepsilon},\zeta^{2}_{\varepsilon},\chi_{\varepsilon},
\psi^{1}_{\varepsilon},\psi^{2}_{\varepsilon})
\end{equation}
with
\begin{align}
\begin{split}
(V_{\varepsilon},
W_{\varepsilon},\zeta^{1}_{\varepsilon},\zeta^{2}_{\varepsilon},
\chi_{\varepsilon},
\psi^{1}_{\varepsilon},\psi^{2}_{\varepsilon})&=
(V_{0},W_{0},\zeta^{1}_{0},\zeta^{2}_{0},
\chi_{0},
\psi^{1}_{0},\psi^{2}_{0})\\
&
+\phi_{\varepsilon}(V-V_{0},W-W_{0},\zeta^{1}-\zeta^{1}_{0},
\zeta^{2}-\zeta^{2}_{0},\chi-\chi_{0},
\psi^{1}-\psi^{1}_{0},\psi^{2}-\psi^{2}_{0}),
\end{split}
\end{align}
so that except for the first component $G_{\varepsilon}(\Psi)$ coincides with $\Psi_{0}$ when $\rho\leq\varepsilon$.
Fix $\delta>0$ and suppose that except for the first component $\Psi$ agrees with $\Psi_{0}$ on $B_{\delta}$, then
$\lim_{\varepsilon\rightarrow 0}\mathcal{I}(G_{\varepsilon}(\Psi))=\mathcal{I}(\Psi)$. This also holds if
$\Psi\equiv\Psi_{0}$ outside $B_{2/\delta}$.
\end{lemma}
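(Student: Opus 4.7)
The plan is to mimic the decomposition strategy used in the two preceding lemmas but now focused on the axis. Write
\begin{equation}
\mathcal{I}(G_\varepsilon(\Psi))=\mathcal{I}_{\rho\ge\sqrt{\varepsilon}}(G_\varepsilon(\Psi))+\mathcal{I}_{\mathcal{W}_{\delta,\varepsilon}}(G_\varepsilon(\Psi))+\mathcal{I}_{\rho\le\varepsilon}(G_\varepsilon(\Psi)).
\end{equation}
On $\{\rho\ge\sqrt{\varepsilon}\}$ the cutoff satisfies $\phi_\varepsilon\equiv 1$, so $G_\varepsilon(\Psi)\equiv\Psi$ and the first term converges to $\mathcal{I}(\Psi)$ by the dominated convergence theorem (using the hypothesis that outside $B_{2/\delta}$ one has $\Psi\equiv\Psi_0$, or else DCT applied directly). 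It therefore suffices to prove that the other two terms vanish in the limit.

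For the innermost region $\mathcal{C}_{\delta,\varepsilon}=\{\rho\le\varepsilon\}\cap\{\delta\le r\le 2/\delta\}$, the definition of $G_\varepsilon$ gives
\begin{equation}
G_\varepsilon(\Psi)=(U,V_0,W_0,\zeta^1_0,\zeta^2_0,\chi_0,\psi^1_0,\psi^2_0)\qquad\text{on }\mathcal{C}_{\delta,\varepsilon}.
\end{equation}
Since $U=O(1)$ near the axes by \eqref{FALLOFF1.1}, the weights $e^{-6U}$, $e^{-4U}$, $e^{-2U}$ are uniformly bounded on this annular collar, and therefore
\begin{equation}
\mathcal{I}_{\mathcal{C}_{\delta,\varepsilon}}(G_\varepsilon(\Psi))\le C\,\mathcal{I}_{\mathcal{C}_{\delta,\varepsilon}}(\Psi_0)+C\int_{\mathcal{C}_{\delta,\varepsilon}}|\nabla U|^2\,dx.
\end{equation}
Because $\Psi_0$ and $\Psi$ both have finite reduced energy, both pieces vanish as $\varepsilon\to 0$ by DCT, since $\mathcal{C}_{\delta,\varepsilon}$ shrinks onto the axis.

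The main work is bounding $\mathcal{I}_{\mathcal{W}_{\delta,\varepsilon}}(G_\varepsilon(\Psi))\to 0$ on the transition shell $\mathcal{W}_{\delta,\varepsilon}=\{\varepsilon\le\rho\le\sqrt{\varepsilon}\}\cap\{\delta\le r\le 2/\delta\}$. Here one differentiates the interpolation formulas, obtaining schematically $\nabla V_\varepsilon=\nabla V_0+\phi_\varepsilon\nabla(V-V_0)+(V-V_0)\nabla\phi_\varepsilon$, and analogously for the other components. The key cutoff estimate is $|\nabla\phi_\varepsilon|\le C/(\rho|\log\varepsilon|)$, which is precisely why the \emph{logarithmic} choice of $\phi_\varepsilon$ is used: it yields
\begin{equation}
\int_{\mathcal{W}_{\delta,\varepsilon}}|\nabla\phi_\varepsilon|^2\,dx\le C\int_\varepsilon^{\sqrt{\varepsilon}}\frac{\rho\,d\rho}{\rho^2|\log\varepsilon|^2}=\frac{C}{|\log\varepsilon|}\longrightarrow 0.
\end{equation}
For each of the nine terms $I_1,\dots,I_9$ entering $\mathcal{I}_{\mathcal{W}_{\delta,\varepsilon}}$, one inserts the axis asymptotics \eqref{FALLOFF1.1}--\eqref{FALLOFF4.1'} for $\Psi$ and \eqref{FALLOFF8.1}--\eqref{FALLOFF10.1'} for $\Psi_0$, together with the pointwise bounds $|\chi-\chi_0|=O(\rho^2)$, $|\psi^i|=\sin^{3/2}\theta\,O(1)$ (resp.\ cosine), and $|\zeta^i-\zeta^i_0|=O(\rho^2)$, which follow by integrating the axial gradient bounds from $\Gamma$ using $\chi|_\Gamma=\chi_0|_\Gamma$, $\psi^i|_{\Gamma_i}=0$, and $\zeta^i|_\Gamma=\zeta^i_0|_\Gamma$. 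The weights $e^{-6h_1}=\rho^{-3}$ and $e^{\pm h_2}$ (which blow up at $\Gamma_\pm$) are absorbed by the $\sqrt{\sin\theta}$ and $\sqrt{\cos\theta}$ factors already built into $\Theta^i$, $\nabla\psi^i$, and $\Upsilon$.

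The main obstacle is the $h_2$-related terms, in particular $\sinh^2 W_\varepsilon|\nabla(V_\varepsilon+h_2)|^2$ and the Myers-Perry contributions to $I_5, I_6, I_7$, where $|\nabla h_2|\sim 1/\rho$ near $\Gamma$. These are controlled using $\sinh W_\varepsilon=O(\sqrt{\rho})$ so that $\sinh^2 W_\varepsilon|\nabla h_2|^2=O(1/\rho)$ and $\int_{\mathcal{W}_{\delta,\varepsilon}}\rho^{-1}\rho\,d\rho=O(\sqrt{\varepsilon})\to 0$, together with analogous power-counting that confirms each of the remaining terms is $o(1)$. Combining these three paragraphs gives $\mathcal{I}(G_\varepsilon(\Psi))\to\mathcal{I}(\Psi)$, as required, with the second assertion of the lemma (the case $\Psi\equiv\Psi_0$ outside $B_{2/\delta}$) following by the same argument since that hypothesis only \emph{simplifies} the outer-region contribution.
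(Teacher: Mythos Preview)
Your proposal is correct and follows essentially the same approach as the paper: decompose into the outer region where $\phi_\varepsilon\equiv 1$, the near-axis collar $\mathcal{C}_{\delta,\varepsilon}$ where all components except $U$ equal those of $\Psi_0$, and the transition shell $\mathcal{W}_{\delta,\varepsilon}$; then use the logarithmic cutoff estimate $\int_{\mathcal{W}_{\delta,\varepsilon}}|\nabla\phi_\varepsilon|^2\,dx=O(|\log\varepsilon|^{-1})$, the integrated axis bounds $|\chi-\chi_0|,|\zeta^i-\zeta^i_0|=O(\rho^2)$ and $|\psi^i|=O(\sin^{3/2}\theta)$ (resp.\ $\cos^{3/2}\theta$), and the vanishing $\sinh W_\varepsilon=O(\sqrt{\rho})$ to control each of $I_1,\dots,I_9$. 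One bookkeeping point: your three regions $\{\rho\ge\sqrt{\varepsilon}\}$, $\mathcal{W}_{\delta,\varepsilon}$, $\{\rho\le\varepsilon\}$ do not cover all of $\mathbb{R}^3$, since $\mathcal{W}_{\delta,\varepsilon}$ is restricted to $\delta\le r\le 2/\delta$; the paper instead uses $\mathbb{R}^3\setminus(\mathcal{C}_{\delta,\varepsilon}\cup\mathcal{W}_{\delta,\varepsilon})$ as the ``outer'' region and invokes the hypotheses on $B_\delta$ and outside $B_{2/\delta}$ (where $G_\varepsilon(\Psi)=\Psi$) to handle the leftover strips---but this is a trivial fix to your argument.
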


\begin{proof}
Observe that
\begin{equation}
\mathcal{I}(G_{\varepsilon}(\Psi))
=\mathcal{I}_{\mathcal{C}_{\delta,\varepsilon}}(G_{\varepsilon}(\Psi))
+\mathcal{I}_{\mathcal{W}_{\delta,\varepsilon}}(G_{\varepsilon}(\Psi))
+\mathcal{I}_{\mathbb{R}^{3}\setminus(\mathcal{C}_{\delta,\varepsilon}
\cup\mathcal{W}_{\delta,\varepsilon})}(G_{\varepsilon}(\Psi)).
\end{equation}
Using the fact that except for the first component
$\Psi$ agrees with $\Psi_{0}$ on $B_{\delta}$, together with the DCT and finite energy of $\Psi_{0}$, shows
$\mathcal{I}_{\mathbb{R}^{3}\setminus(\mathcal{C}_{\delta,\varepsilon}
\cup\mathcal{W}_{\delta,\varepsilon})}(G_{\varepsilon}(\Psi))
\rightarrow \mathcal{I}(\Psi)$.
Furthermore
\begin{align}
\begin{split}
\mathcal{I}_{\mathcal{C}_{\delta,\varepsilon}}(G_{\varepsilon}(\Psi))
&=\int_{\mathcal{C}_{\delta,\varepsilon}}12|\nabla U|^{2}
+|\nabla V_{0}|^{2}+|\nabla W_{0}|^{2}
+\sinh^{2}W_{0}|\nabla(V_{0}+h_{2})|^{2}+\frac{e^{-6h_{1}-6U-h_{2}-V_{0}}}{\cosh W_{0}}|\Theta_{0}^{1}|^{2}\\
&+\int_{\mathcal{C}_{\delta,\varepsilon}}
e^{-6h_{1}-6U+h_{2}+V_{0}}\cosh W_{0}
\left|\Theta_{0}^{2}-e^{-h_{2}-V_{0}}\tanh W_{0}\Theta_{0}^{1}\right|^{2}
+\frac{e^{-2h_{1}-2U-h_{2}-h_{2}-V_{0}}}{\cosh W_{0}}|\nabla\psi^{1}_{0}|^{2}\\
&+\int_{\mathcal{C}_{\delta,\varepsilon}}e^{-2h_{1}-2U+h_{2}+V_{0}}\cosh W_{0}|\nabla\psi_{0}^{2}
-e^{-h_{2}-V_{0}}\tanh W_{0}\nabla\psi_{0}^{1}|^{2}
+e^{-4h_{1}-4U}|\Upsilon_{0}|^{2},
\end{split}
\end{align}
where all but the first term on the right-hand side may be estimated
by the reduced energy of $\Psi_{0}$ (and therefore converge to zero), since
$e^{-U}\leq Ce^{-U_{0}}$ near the origin;
the first term also converges to zero by the DCT.
Now write
\begin{align}\label{070}
\begin{split}
\mathcal{I}_{\mathcal{W}_{\delta,\varepsilon}}(G_{\varepsilon}(\Psi))
=&\underbrace{\int_{\mathcal{W}_{\delta,\varepsilon}}
12|\nabla U|^{2}}_{I_{1}}
+\underbrace{\int_{\mathcal{W}_{\delta,\varepsilon}}
|\nabla V_{\varepsilon}|^{2}}_{I_{2}}
+\underbrace{\int_{\mathcal{W}_{\delta,\varepsilon}}
|\nabla W_{\varepsilon}|^{2}}_{I_{3}}\\
&+\underbrace{\int_{\mathcal{W}_{\delta,\varepsilon}}
\sinh^{2}W_{\varepsilon}|\nabla(V_{\varepsilon}+h_{2})|^{2}}_{I_{4}}
+\underbrace{\int_{\mathcal{W}_{\delta,\varepsilon}}
\frac{\cos\theta}{\rho^{3}\sin\theta}
\frac{e^{-V_{\varepsilon}-6U}}{\cosh W_{\varepsilon}}
|\Theta^{1}_{\varepsilon}|^{2}}_{I_{5}}\\
&+\underbrace{\int_{\mathcal{W}_{\delta,\varepsilon}}
\frac{\sin\theta}{\rho^{3}\cos\theta}
e^{V_{\varepsilon}-6U}\cosh W_{\varepsilon}
|\Theta^{2}_{\varepsilon}-e^{-V_{\varepsilon}}\cot\theta \tanh W_{\varepsilon}
\Theta^{1}_{\varepsilon}|^{2}}_{I_{6}}\\
&+\underbrace{\int_{\mathcal{W}_{\delta,\varepsilon}}
\frac{\sin\theta}{\rho\cos\theta}
e^{V_{\varepsilon}-2U}\cosh W_{\varepsilon}
|\nabla\psi^{2}_{\varepsilon}-e^{-V_{\varepsilon}}\cot\theta \tanh W_{\varepsilon}
\nabla\psi^{1}_{\varepsilon}|^{2}}_{I_{7}}\\
&+\underbrace{\int_{\mathcal{W}_{\delta,\varepsilon}}
\frac{\cos\theta}{\rho\sin\theta}
\frac{e^{-V_{\varepsilon}-2U}}{\cosh W_{\varepsilon}}
|\nabla\psi^{1}_{\varepsilon}|^{2}}_{I_{8}}
+\underbrace{\int_{\mathcal{W}_{\delta,\varepsilon}}
\rho^{-2}e^{-4U}|\Upsilon_{\varepsilon}|^{2}
}_{I_{9}}.
\end{split}
\end{align}
We have
\begin{equation}
I_{2}\leq C\int_{\delta/2}^{3/\delta}\int_{\varepsilon}^{\sqrt{\varepsilon}}
\left(\underbrace{|\nabla V|^{2}}_{O(1)}+\underbrace{|\nabla V_{0}|^{2}}_{O(1)}
+\underbrace{(V-V_{0})^{2}}_{O(1)}\underbrace{|\nabla\phi_{\varepsilon}|^{2}}_{ O\left((\rho\log\varepsilon)^{-2}\right)}\right)\rho d\rho d|z|\rightarrow 0,
\end{equation}
and similarly for $I_{1}$ and $I_{3}$.
Moreover, using
$\sinh W_{\varepsilon}=O(\sqrt{\rho})$
yields
\begin{equation}
I_{4}\leq C\int_{\delta/2}^{3/\delta}\int_{\varepsilon}^{\sqrt{\varepsilon}}
\rho\left(\underbrace{|\nabla V|^{2}}_{O(1)}+
\underbrace{|\nabla V_{0}|^{2}}_{O(1)}
+\underbrace{|\nabla h_{2}|^{2}}_{O(\rho^{-2})}
+\underbrace{(V-V_{0})^{2}}_{O(1)}\underbrace{|\nabla\phi_{\varepsilon}|^{2}}_{ O\left((\rho\log\varepsilon)^{-2}\right)}\right)\rho d\rho d|z|\rightarrow 0.
\end{equation}

Next observe that \eqref{FALLOFF4.1}, \eqref{FALLOFF4.1'}, and \eqref{FALLOFF10.1'} together with $(\chi-\chi_{0})|_{\Gamma}=0$, $\psi^{i}|_{\Gamma_{i}}=0$, and $(\zeta^{i}-\zeta_{0}^{i})|_{\Gamma}=0$, $i=1,2$ give rise to the following estimates on $\mathcal{W}_{\delta,\varepsilon}$:
\begin{equation}\label{002.1''}
|(\chi-\chi_{0})(\rho,z,\phi)|
\leq\int_{0}^{\rho}|\partial_{\rho}(\chi-\chi_{0})
(\tilde{\rho},z,\phi)|d\tilde{\rho}
=O(\rho^{2}),
\end{equation}
\begin{equation}\label{002.2''}
|\psi^{1}(\rho,z,\phi)|
\leq\int_{0}^{\rho}|\partial_{\rho}\psi^{1}
(\tilde{\rho},z,\phi)|d\tilde{\rho}
=\sin^{3/2}\theta O(1),
\end{equation}
\begin{equation}\label{002.3''}
|\psi^{2}(\rho,z,\phi)|
\leq\int_{0}^{\rho}|\partial_{\rho}\psi^{2}
(\tilde{\rho},z,\phi)|d\tilde{\rho}
=\cos^{3/2}\theta O(1),
\end{equation}
\begin{equation}\label{002''}
|(\zeta^{i}-\zeta^{i}_{0})(\rho,z,\phi)|
\leq\int_{0}^{\rho}|\partial_{\rho}(\zeta^{i}-\zeta^{i}_{0})
(\tilde{\rho},z,\phi)|d\tilde{\rho}
=O(\rho^{2}).
\end{equation}
From this we find that
\begin{equation}
|\nabla\psi^{1}_{\varepsilon}|
\leq\underbrace{|\nabla\psi^{1}|}_{\sqrt{\sin\theta}O(1)}
+\underbrace{|\nabla\psi_{0}^{1}|}_{\sin\theta O(1)}
+\underbrace{|\psi^{1}-\psi^{1}_{0}|}_{\sin^{3/2}\theta O(1)}\underbrace{|\nabla\phi_{\varepsilon}|}_{O(|\rho\log\varepsilon|^{-1})}
=\sqrt{\sin\theta}\left(O(1)+\frac{O(|\log\varepsilon|^{-1})}
{\cos\theta}\right),
\end{equation}
and similarly
\begin{equation}
|\nabla\psi^{2}_{\varepsilon}|
=\sqrt{\cos\theta}\left(O(1)+\frac{O(|\log\varepsilon|^{-1})}
{\sin\theta}\right),
\end{equation}
as well as
\begin{align}
\begin{split}
|\Upsilon_{\varepsilon}|\leq& C\left(\underbrace{|\nabla\chi|}_{O(\rho)}
+\underbrace{|\nabla\chi_{0}|}_{O(\rho)}
+\underbrace{|\chi-\chi_{0}|}_{O(\rho^{2})}\underbrace{|\nabla
\phi_{\varepsilon}|}_{O(|\rho\log\varepsilon|^{-1})}
+\underbrace{(|\psi^{1}|+|\psi^{1}_{0}|)}_{\sin^{3/2}\theta O(1)}
\underbrace{(|\nabla\psi^{2}|+|\nabla\psi^{2}_{0}|)}_{\sqrt{\cos\theta}
O(1)}\right.\\
&
+\left.\underbrace{(|\psi^{2}|+|\psi^{2}_{0}|)}_{\cos^{3/2}\theta O(1)}
\underbrace{(|\nabla\psi^{1}|+|\nabla\psi^{1}_{0}|)}_{\sqrt{\sin\theta}
O(1)}
+\underbrace{(|\psi^{1}|+|\psi^{1}_{0}|)}_{\sin^{3/2}\theta O(1)}\underbrace{(|\psi^{2}|+|\psi^{2}_{0}|)}_{\cos^{3/2}\theta O(1)}
\underbrace{|\nabla\phi_{\varepsilon}|}_{O(|\rho\log\varepsilon|^{-1})}\right)\\
=& O(\sqrt{\rho}),
\end{split}
\end{align}
\begin{align}
\begin{split}
|\Theta_{\varepsilon}^{1}|\leq& C\left[\underbrace{|\nabla\zeta^{1}|}_{ \sqrt{\sin\theta}O(\rho)}+\underbrace{|\nabla\zeta^{1}_{0}|}_{
\sin^{2}\theta O(\rho)}
+\underbrace{|\zeta^{1}-\zeta^{1}_{0}|}_{
O(\rho^{2})}\underbrace{|\nabla
\phi_{\varepsilon}|}_{O(|\rho\log\varepsilon|^{-1})}
+\underbrace{(|\psi^{1}|+|\psi^{1}_{0}|)}_{\sin^{3/2}\theta O(1)}\left(
\underbrace{|\nabla\chi|}_{O(\rho)}
+\underbrace{|\nabla\chi_{0}|}_{O(\rho)}\right.\right.\\
&
+\underbrace{|\chi-\chi_{0}|}_{O(\rho^{2})}\underbrace{|\nabla
\phi_{\varepsilon}|}_{O(|\rho\log\varepsilon|^{-1})}
+\underbrace{(|\psi^{1}|+|\psi^{1}_{0}|)}_{\sin^{3/2}\theta O(1)}
\underbrace{(|\nabla\psi^{2}|+|\nabla\psi^{2}_{0}|)}_{\sqrt{\cos\theta}
O(1)}
+\underbrace{(|\psi^{2}|+|\psi^{2}_{0}|)}_{\cos^{3/2}\theta O(1)}
\underbrace{(|\nabla\psi^{1}|+|\nabla\psi^{1}_{0}|)}_{\sqrt{\sin\theta}
O(1)}\\
&+\left.\left.\underbrace{(|\psi^{1}|+|\psi^{1}_{0}|)}_{\sin^{3/2}\theta
O(1)}\underbrace{(|\psi^{2}|+|\psi^{2}_{0}|)}_{\cos^{3/2}\theta
O(1)}
\underbrace{|\nabla\phi_{\varepsilon}|}_{O(|\rho\log\varepsilon|^{-1})}\right)\right]\\
=& \left(\sqrt{\sin\theta}+|\log\varepsilon|^{-1}
+\frac{\sin\theta}{\sqrt{\cos\theta}}|\log\varepsilon|^{-1}\right)O(\rho),
\end{split}
\end{align}
and
\begin{equation}
|\Theta^{2}_{\varepsilon}|= \left(\sqrt{\cos\theta}+|\log\varepsilon|^{-1}
+\frac{\cos\theta}{\sqrt{\sin\theta}}|\log\varepsilon|^{-1}\right)O(\rho).
\end{equation}
Therefore
\begin{equation}
I_{5}\leq C\int_{\delta/2}^{3/\delta}\int_{\varepsilon}^{\sqrt{\varepsilon}}
\frac{\cos\theta}{\rho^{3}\sin\theta}\underbrace{\frac{e^{-V_{\varepsilon}-6U}}
{\cosh W_{\varepsilon}}}_{O(1)}\underbrace{|\Theta_{\varepsilon}^{1}|^{2}}
_{\left(\sin\theta+(\log\varepsilon)^{-2}
+\frac{\sin^{2}\theta}{\cos\theta}(\log\varepsilon)^{-2}\right)O(\rho^{2})}\rho d\rho d|z|=O(|\log\varepsilon|^{-1})\rightarrow 0,
\end{equation}
\begin{equation}
I_{8}\leq C\int_{\delta/2}^{3/\delta}\int_{\varepsilon}^{\sqrt{\varepsilon}}
\frac{\cos\theta}{\rho \sin\theta}\underbrace{\frac{e^{-V_{\varepsilon}-2U}}
{\cosh W_{\varepsilon}}}_{O(1)}\underbrace{|\nabla\psi_{\varepsilon}^{1}|^{2}}
_{\sin\theta\left(O(1)+\frac{O((\log\varepsilon)^{-2})}{\cos^{2}\theta}\right)}\rho d\rho d|z|=O(|\log\varepsilon|^{-1})\rightarrow 0,
\end{equation}
\begin{equation}
I_{9}\leq C\int_{\delta/2}^{3/\delta}\int_{\varepsilon}^{\sqrt{\varepsilon}}
\rho^{-2}\underbrace{e^{-4U_{\varepsilon}}}_{O(1)}
\underbrace{|\Upsilon_{\varepsilon}|^{2}}
_{O(\rho)}\rho d\rho d|z|=O(\sqrt{\varepsilon})\rightarrow 0.
\end{equation}
It may be shown in an analogous way that $I_{6}$ and $I_{7}$ also converge to zero.
\end{proof}

Consider now the composition of the three cut-and-paste operations defined above, namely
\begin{equation}
\Psi_{\delta,\varepsilon}
=G_{\varepsilon}\left(F_{\delta}\left(
\overline{F}_{\delta}(\Psi)\right)\right).
\end{equation}
Taken together, the previous three lemmas prove the following approximation result.

\begin{prop}\label{Proposition1}
Let $\varepsilon\ll\delta\ll 1$ and suppose that $\Psi$ satisfies the hypotheses of Theorem \ref{minimum}. Then
$\Psi_{\delta,\varepsilon}$ satisfies \eqref{54} and
\begin{equation}\label{106}
\lim_{\delta\rightarrow 0}\lim_{\varepsilon\rightarrow 0}
\mathcal{I}(\Psi_{\delta,\varepsilon})=\mathcal{I}(\Psi).
\end{equation}
\end{prop}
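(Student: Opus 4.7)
The proof is essentially the composition of the three preceding lemmas. First I would verify that $\Psi_{\delta,\varepsilon}$ satisfies the support conditions \eqref{54} by tracing through the three cut-and-paste operations $\overline{F}_\delta$, $F_\delta$, $G_\varepsilon$ in order. Since only $\overline{F}_\delta$ modifies the first component, one has $\operatorname{supp}(U_{\delta,\varepsilon}-U_0)\subset B_{2/\delta}$ immediately. For each of the remaining components, $\overline{F}_\delta$ enforces agreement with $\Psi_0$ outside $B_{2/\delta}$, then $F_\delta$ enforces agreement with $\Psi_0$ on $B_\delta$ (because $\varphi_\delta\equiv 0$ there), and finally $G_\varepsilon$ enforces agreement on $\{\rho\leq\varepsilon\}$. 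These three regions together exhaust $\mathbb{R}^3\setminus\Omega_{\delta,\varepsilon}$, which gives \eqref{54}.

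For the energy convergence, I would fix $\delta>0$ and send $\varepsilon\to 0$ first. Setting $\Phi_\delta:=F_\delta(\overline{F}_\delta(\Psi))$, the support analysis above shows that $\Phi_\delta\equiv\Psi_0$ entirely outside $B_{2/\delta}$, and that all components of $\Phi_\delta$ other than the first agree with those of $\Psi_0$ on $B_\delta$. These are exactly the hypotheses required by the third lemma (in its extended form covering maps equal to $\Psi_0$ outside $B_{2/\delta}$), and hence
\begin{equation*}
\lim_{\varepsilon\to 0}\mathcal{I}(\Psi_{\delta,\varepsilon})=\lim_{\varepsilon\to 0}\mathcal{I}(G_\varepsilon(\Phi_\delta))=\mathcal{I}(\Phi_\delta).
\end{equation*}
It then remains to show $\lim_{\delta\to 0}\mathcal{I}(\Phi_\delta)=\mathcal{I}(\Psi)$. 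I would prove this by applying the second lemma to the input $\overline{F}_\delta(\Psi)$, which also equals $\Psi_0$ outside $B_{2/\delta}$ so that the extension clause of that lemma applies, yielding $\mathcal{I}(\Phi_\delta)-\mathcal{I}(\overline{F}_\delta(\Psi))\to 0$; the first lemma then gives $\mathcal{I}(\overline{F}_\delta(\Psi))\to\mathcal{I}(\Psi)$. Combining these limits produces \eqref{106}.

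The principal subtlety is bookkeeping rather than new analysis: the second lemma is stated for a fixed input, while here it is applied to a $\delta$-dependent family $\overline{F}_\delta(\Psi)$. The extension clauses of the lemmas are tailored precisely to this chaining scenario. In each case the splitting of the integral into the regions $\{r\leq\delta\}$, $\{\delta<r<2\delta\}$, and $\{r\geq 2\delta\}$ still goes through because the contribution away from the transition annulus agrees with either the input or with $\Psi_0$, while the transition annulus contribution vanishes by the fall-off estimates \eqref{FALLOFF1}--\eqref{FALLOFF4.1'} together with the integrated pointwise bounds on $\chi-\chi_0$, $\zeta^i-\zeta^i_0$, and $\psi^i$ obtained by integrating along $\rho$ from the axis. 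The only thing requiring verification is that these asymptotic bounds remain valid at each stage of the composition, which holds because each cut-off operation either preserves or improves the prescribed decay rates.
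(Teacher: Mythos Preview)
Your proposal is correct and follows the same approach as the paper, which simply states that the three lemmas taken together yield the result. Your version is in fact more detailed: you spell out the verification of \eqref{54}, the order of limits, and you correctly flag the one genuine subtlety---that the second lemma must be applied to the $\delta$-dependent input $\overline{F}_\delta(\Psi)$---and note that the extension clauses in the lemma statements are designed exactly for this chaining.
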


\section{Proof of the Gap Bound: Theorem \ref{minimum}}
\label{secproof}

Consider the geodesic in $G_{2(2)}/SO(4)$ connecting $\tilde{\Psi}_{0}$ to
$\tilde{\Psi}_{\delta,\varepsilon}$ as described in section \ref{dirichlet}, and denote it by $\tilde{\Psi}^{t}_{\delta,\varepsilon}$.
Since $\Psi_{\delta,\varepsilon}$ satisfies
\eqref{54}, we have
$U^{t}_{\delta,\varepsilon}=U_{0}+t(U_{\delta,\varepsilon}-U_{0})$ and $V^{t}_{\delta,\varepsilon}=V_{0}$ on $\mathcal{A}_{\delta,\varepsilon}$. Observe that
\begin{equation}
\frac{d^{2}}{dt^{2}}\mathcal{I}(\Psi^{t}_{\delta,\varepsilon})
=\underbrace{\frac{d^{2}}{dt^{2}}\mathcal{I}_{\Omega_{\delta,\varepsilon}}
(\Psi^{t}_{\delta,\varepsilon})}_{I_{1}}+
\underbrace{\frac{d^{2}}{dt^{2}}\mathcal{I}_{\mathcal{A}_{\delta,\varepsilon}}
(\Psi^{t}_{\delta,\varepsilon})}_{I_{2}}.
\end{equation}
Then convexity of the harmonic energy implies
\begin{align}\label{integral0}
\begin{split}
I_{1}=&\frac{d^{2}}{dt^{2}}E_{\Omega_{\delta,\varepsilon}}
(\tilde{\Psi}^{t}_{\delta,\varepsilon})
-\frac{d^{2}}{dt^{2}}\int_{\partial\Omega_{\delta,\varepsilon}
\cap\partial\mathcal{A}_{\delta,\varepsilon}}
12\left[h_{1}+2(U_{0}
+t(U_{\delta,\varepsilon}-U_{0}))\right]\partial_{\nu}h_{1}\\
&-\frac{d^{2}}{dt^{2}}\int_{\partial\Omega_{\delta,\varepsilon}
\cap\partial\mathcal{A}_{\delta,\varepsilon}}(h_{2}+V_{0})\partial_{\nu}h_{2}\\
\geq & 2\int_{\Omega_{\delta,\varepsilon}}
|\nabla\operatorname{dist}_{G_{2(2)}/SO(4)}
(\Psi_{\delta,\varepsilon},\Psi_{0})|^{2},
\end{split}
\end{align}
and the fact that $\operatorname{dist}_{G_{2(2)}/SO(4)}(\Psi_{\delta,\varepsilon},\Psi_{0})
=\sqrt{12}|U_{\delta,\varepsilon}-U_{0}|$ on $\mathcal{A}_{\delta,\varepsilon}$ yields
\begin{align}\label{integral}
\begin{split}
I_{2}=&\int_{\mathcal{A}_{\delta,\varepsilon}}
24|\nabla(U_{\delta,\varepsilon}-U_{0})|^{2}
+36(U_{\delta,\varepsilon}-U_{0})^{2}
\frac{e^{-6h_{1}-6U_{\delta,\varepsilon}^{t}-h_{2}-V_{0}}}{\cosh W_{0}}|\Theta^{1}_{0}|^{2}\\
&+\int_{\mathcal{A}_{\delta,\varepsilon}}36(U_{\delta,\varepsilon}-U_{0})^{2}
e^{-6h_{1}-6U_{\delta,\varepsilon}^{t}+h_{2}+V_{0}}\cosh W_{0}|e^{-h_{2}-V_{0}}
\tanh W_{0}\Theta^{1}_{0}-\Theta_{0}^{2}|^{2}\\
&+\int_{\mathcal{A}_{\delta,\varepsilon}}4(U_{\delta,\varepsilon}-U_{0})^{2}
e^{-2h_{1}-2U^{t}_{\delta,\varepsilon}+h_{2}+V_{0}}
\cosh W_{0}|e^{-h_{2}-V_{0}}\tanh W_{0}\nabla\psi_{0}^{1}
-\nabla\psi_{0}^{2}|^{2}\\
&+\int_{\mathcal{A}_{\delta,\varepsilon}}4(U_{\delta,\varepsilon}-U_{0})^{2}
\left(\frac{e^{-2h_{1}-2U^{t}_{\delta,\varepsilon}-h_{2}-V_{0}}}{\cosh W_{0}}|\nabla\psi^{1}_{0}|^{2}+4e^{-4h_{1}-4U^{t}_{\delta,\varepsilon}}
|\Upsilon_{0}|^{2}\right)\\
\geq &2\int_{\mathcal{A}_{\delta,\varepsilon}}
|\nabla\operatorname{dist}_{G_{2(2)}/SO(4)}
(\Psi_{\delta,\varepsilon},\Psi_{0})|^{2},
\end{split}
\end{align}
as long as interchanging
$\frac{d^{2}}{dt^{2}}$ and the integral in \eqref{integral} is justified. In order to show that this is the case, it is enough to prove that each of the terms in \eqref{integral} is uniformly integrable. First note that $U_{\delta,\varepsilon}$ and $U_{0}$ have square integrable derivatives on $\mathbb{R}^3$, and thus the first term satisfies the desired property. All remaining terms may be treated similarly to the second term, which we now examine. Clearly uniform integrability will hold if
$(U_{\delta,\varepsilon}-U_{0})^{2}e^{-6t(U_{\delta,\varepsilon}-U_{0})}$ is uniformly bounded in
$\mathcal{A}_{\delta,\varepsilon}=\mathcal{C}_{\delta,\varepsilon}
\cup B_{\delta}$, as the entire second term would then be dominated by the reduced energy of $\Psi_{0}$. We have that $U$ and $U_{0}$ are bounded on $\mathcal{C}_{\delta,\varepsilon}$, and
$|U_{\delta,\varepsilon}-U_{0}|\leq C|\log r|$ on $B_{\delta}$. Moreover, since $r^{6t}(\log r)^{2}$ remains uniformly bounded for $0<t_{0}<t\leq 1$, where $t_{0}\neq 0$ is arbitrarily small, the desired result follows away from $t=0$. Therefore, combining \eqref{integral0} and \eqref{integral} establishes
\eqref{55} for $t\in(0,1]$.

The next task at hand is to prove \eqref{56} for $\Psi_{\delta,\varepsilon}$, which will follow
from the harmonic map equations for $\Psi_{0}$ (see Appendix \ref{app2}).
Fix $\varepsilon_{0}<\varepsilon$ and
$\delta_{0}<\delta$ and consider
\begin{equation}
\frac{d}{dt}\mathcal{I}(\Psi^{t}_{\delta,\varepsilon})
=\underbrace{\frac{d}{dt}\mathcal{I}_{\Omega_{\delta_{0},\varepsilon_{0}}}(\Psi^{t}_{\delta,\varepsilon})}_{I_{3}}+
\underbrace{\frac{d}{dt}\mathcal{I}_{\mathcal{A}_{\delta_{0},\varepsilon_{0}}}(\Psi^{t}_{\delta,\varepsilon})}_{I_{4}}.
\end{equation}
Note that we may interchange $t$-derivatives and integration when $t\in(0,1]$, for reasons that are analogous to those outlined above. Thus, applying the harmonic map equations for $\Psi_{0}$, and the fact that $\frac{d}{dt}\Psi_{\delta,\varepsilon}^{t}|_{t=0}=(U_{\delta,\varepsilon}-U_{0})\partial_{u}$
on $\mathcal{A}_{\delta_{0},\varepsilon_{0}}$, implies that for $t$ small
\begin{equation}
I_{3}=O(t)-\int_{\partial B_{\delta_{0}}}24(U_{\delta,\varepsilon}-U_{0})\partial_{\nu}U_{0}
-\int_{\partial\mathcal{C}_{\delta_{0},\varepsilon_{0}}}
24(U_{\delta,\varepsilon}-U_{0})\partial_{\nu}U_{0}.
\end{equation}
Furthermore, since
$U^{t}_{\delta,\varepsilon}=U_{0}+t(U_{\delta,\varepsilon}-U_{0})$
and
\begin{equation}
\frac{d}{dt}V^{t}_{\delta,\varepsilon}=
\frac{d}{dt}W^{t}_{\delta,\varepsilon}
=\frac{d}{dt}\zeta^{1,t}_{\delta,\varepsilon}=\frac{d}{dt}\zeta^{2,t}_{\delta,\varepsilon}
=\frac{d}{dt}\chi^{t}_{\delta,\varepsilon}=\frac{d}{dt}\psi^{1,t}_{\delta,\varepsilon}
=\frac{d}{dt}\psi^{2,t}_{\delta,\varepsilon}=0\text{ }\text{ }\text{ on }\text{ } \mathcal{A}_{\delta_{0},\varepsilon_{0}},
\end{equation}
we have
\begin{align}\label{mnbv}
\begin{split}
I_{4}=&O(t)+\int_{\mathcal{A}_{\delta_{0},\varepsilon_{0}}}
24\nabla U_{0}\cdot\nabla(U_{\delta,\varepsilon}-U_{0})
-6(U_{\delta,\varepsilon}-U_{0})
\frac{e^{-6h_{1}-6U_{\delta,\varepsilon}^{t}-h_{2}-V_{0}}}{\cosh W_{0}}|\Theta^{1}_{0}|^{2}\\
&-\int_{\mathcal{A}_{\delta_{0},\varepsilon_{0}}}6(U_{\delta,\varepsilon}-U_{0})
e^{-6h_{1}-6U_{\delta,\varepsilon}^{t}+h_{2}+V_{0}}\cosh W_{0}|e^{-h_{2}-V_{0}}
\tanh W_{0}\Theta^{1}_{0}-\Theta_{0}^{2}|^{2}\\
&-\int_{\mathcal{A}_{\delta_{0},\varepsilon_{0}}}2(U_{\delta,\varepsilon}-U_{0})
e^{-2h_{1}-2U^{t}_{\delta,\varepsilon}+h_{2}+V_{0}}
\cosh W_{0}|e^{-h_{2}-V_{0}}\tanh W_{0}\nabla\psi_{0}^{1}
-\nabla\psi_{0}^{2}|^{2}\\
&-\int_{\mathcal{A}_{\delta_{0},\varepsilon_{0}}}2(U_{\delta,\varepsilon}-U_{0})
\left(\frac{e^{-2h_{1}-2U^{t}_{\delta,\varepsilon}-h_{2}-V_{0}}}{\cosh W_{0}}|\nabla\psi^{1}_{0}|^{2}+2e^{-4h_{1}-4U^{t}_{\delta,\varepsilon}}
|\Upsilon_{0}|^{2}\right).
\end{split}
\end{align}
Now observe that
\begin{equation}
\left|\int_{\partial B_{\delta_{0}}}\underbrace{(U_{\delta,\varepsilon}-U)}_{O(|\log\delta_{0}|)}
\underbrace{\partial_{\nu}U_{0}}_{O(\delta^{-2}_{0})}\right|
\leq C|\log\delta_{0}|\delta_{0}^{2}\rightarrow 0 \text{ }\text{ }\text{ }\text{ and }\text{ }\text{ }\text{ }\left|\int_{\partial \mathcal{C}_{\delta_{0},\varepsilon_{0}}}\underbrace{(U_{\delta,\varepsilon}-U)}
_{O(1)}\underbrace{\partial_{\nu}U_{0}}_{O(1)}\right|
\leq C\varepsilon_{0}\rightarrow 0
\end{equation}
allow an integration by parts in \eqref{mnbv}, from which we obtain $I_{3}+I_{4}=O(t)$ after using the harmonic map equation for $U_{0}$ together with the fact that $\mathcal{A}_{\delta_{0},\varepsilon_{0}}=B_{\delta_{0}}\cup\mathcal{C}_{\delta_{0},\varepsilon_{0}}$.
It follows that \eqref{56} holds for $\Psi_{\delta,\varepsilon}$.

Now combine \eqref{55} and \eqref{56} to find
\begin{align}
\begin{split}
\mathcal{I}(\Psi_{\delta,\varepsilon})-\mathcal{I}(\Psi_{0})
&\geq 2\int_{\mathbb{R}^{3}}|\nabla\operatorname{dist}_{G_{2(2)}/SO(4)}
(\Psi_{\delta,\varepsilon},\Psi_{0})|^{2}dx\\
&\geq C\left(\int_{\mathbb{R}^{3}}
\operatorname{dist}_{G_{2(2)}/SO(4)}^{6}(\Psi_{\delta,\varepsilon},\Psi_{0})dx
\right)^{\frac{1}{3}},
\end{split}
\end{align}
where the second line arises from the Gagliardo-Nirenberg-Sobolev inequality.
By the triangle
inequality
\begin{align}\label{AAAA}
\begin{split}
&\left(\operatorname{dist}_{G_{2(2)}/SO(4)}(\Psi_{\delta,\varepsilon},\Psi)
-\operatorname{dist}_{G_{2(2)}/SO(4)}(\Psi,\Psi_{0})\right)^{6}
-\operatorname{dist}_{G_{2(2)}/SO(4)}^{6}(\Psi,\Psi_{0})\\
\leq&\operatorname{dist}_{G_{2(2)}/SO(4)}^{6}(\Psi_{\delta,\varepsilon},\Psi_{0})
-\operatorname{dist}_{G_{2(2)}/SO(4)}^{6}(\Psi,\Psi_{0})\\
\leq&\left(\operatorname{dist}_{G_{2(2)}/SO(4)}(\Psi_{\delta,\varepsilon},\Psi)
+\operatorname{dist}_{G_{2(2)}/SO(4)}(\Psi,\Psi_{0})\right)^{6}
-\operatorname{dist}_{G_{2(2)}/SO(4)}^{6}(\Psi,\Psi_{0}).
\end{split}
\end{align}
Therefore if
\begin{equation}\label{LKJ}
\lim_{\delta\rightarrow 0}\lim_{\varepsilon\rightarrow 0}\int_{\mathbb{R}^{3}}
\operatorname{dist}_{G_{2(2)}/SO(4)}^{6}(\Psi_{\delta,\varepsilon},\Psi)dx
=0,
\end{equation}
then the proof of Theorem \ref{minimum} will be complete in light of Proposition \ref{Proposition1}. By the triangle inequality
\begin{align}\label{aaa}
\begin{split}
&\operatorname{dist}_{G_{2(2)}/SO(4)}(\Psi_{\delta,\varepsilon},\Psi)\\
\leq&\operatorname{dist}_{G_{2(2)}/SO(4)}
((U_{\delta,\varepsilon},V_{\delta,\varepsilon},W_{\delta,\varepsilon},
\zeta^{1}_{\delta,\varepsilon},\zeta^{2}_{\delta,\varepsilon},\chi_{\delta,\varepsilon},
\psi^{1}_{\delta,\varepsilon},\psi^{2}_{\delta,\varepsilon}),
(U,V_{\delta,\varepsilon},W_{\delta,\varepsilon},
\zeta^{1}_{\delta,\varepsilon},\zeta^{2}_{\delta,\varepsilon},\chi_{\delta,\varepsilon},
\psi^{1}_{\delta,\varepsilon},\psi^{2}_{\delta,\varepsilon}))\\
&+\operatorname{dist}_{G_{2(2)}/SO(4)}
((U,V_{\delta,\varepsilon},W_{\delta,\varepsilon},
\zeta^{1}_{\delta,\varepsilon},\zeta^{2}_{\delta,\varepsilon},\chi_{\delta,\varepsilon},
\psi^{1}_{\delta,\varepsilon},\psi^{2}_{\delta,\varepsilon}),
(U,V,W_{\delta,\varepsilon},
\zeta^{1}_{\delta,\varepsilon},\zeta^{2}_{\delta,\varepsilon},\chi_{\delta,\varepsilon},
\psi^{1}_{\delta,\varepsilon},\psi^{2}_{\delta,\varepsilon}))\\
&+\cdots+
\operatorname{dist}_{G_{2(2)}/SO(4)}
((U,V,W,
\zeta^{1},\zeta^{2},\chi,
\psi^{1},\psi^{2}_{\delta,\varepsilon}),
(U,V,W,
\zeta^{1},\zeta^{2},\chi,
\psi^{1},\psi^{2}))\\
\leq& C\left[|U-U_{\delta,\varepsilon}|+|V-V_{\delta,\varepsilon}|
+|W-W_{\delta,\varepsilon}|
+e^{-3U-3h_{1}}\left(e^{-\tfrac{1}{2}V-\tfrac{1}{2}h_{2}}
|\zeta^{1}-\zeta^{1}_{\delta,\varepsilon}|
+e^{\tfrac{1}{2}V+\tfrac{1}{2}h_{2}}
|\zeta^{2}-\zeta^{2}_{\delta,\varepsilon}|\right)\right]\\
&+ Ce^{-3U-3h_{1}}\left(e^{-\tfrac{1}{2}V-\tfrac{1}{2}h_{2}}
(|\psi^1|+|\psi^{1}_{0}|)
+e^{\tfrac{1}{2}V+\tfrac{1}{2}h_{2}}
(|\psi^2|+|\psi^{2}_{0}|)\right)|\chi-\chi_{\delta,\varepsilon}|\\
&+Ce^{-3U-3h_{1}}\left((|\psi^1|+|\psi^{1}_{0}|)
(|\psi^2|+|\psi^{2}_{0}|)e^{-\tfrac{1}{2}V-\tfrac{1}{2}h_{2}}
+(|\psi^2|+|\psi^{2}_{0}|)^{2}e^{\tfrac{1}{2}V+\tfrac{1}{2}h_{2}}\right)
|\psi^{1}-\psi^{1}_{\delta,\varepsilon}|\\
&+Ce^{-3U-3h_{1}}\left((|\psi^1|+|\psi^{1}_{0}|)^{2}
e^{-\tfrac{1}{2}V-\tfrac{1}{2}h_{2}}
+(|\psi^1|+|\psi^{1}_{0}|)(|\psi^2|+|\psi^{2}_{0}|)
e^{\tfrac{1}{2}V+\tfrac{1}{2}h_{2}}\right)
|\psi^{2}-\psi^{2}_{\delta,\varepsilon}|\\
&+Ce^{-2U-2h_{1}}\left(|\chi-\chi_{\delta,\varepsilon}|
+(|\psi^2|+|\psi^{2}_{0}|)|\psi^{1}-\psi^{1}_{\delta,\varepsilon}|
+(|\psi^1|+|\psi^{1}_{0}|)|\psi^{2}-\psi^{2}_{\delta,\varepsilon}|\right)\\
&+Ce^{-U-h_{1}}\left(e^{-\tfrac{1}{2}V-\tfrac{1}{2}h_{2}}|\psi^{1}-\psi^{1}_{\delta,\varepsilon}|
+e^{\tfrac{1}{2}V+\tfrac{1}{2}h_{2}}|\psi^{2}-\psi^{2}_{\delta,\varepsilon}|\right),
\end{split}
\end{align}
where it was used that distances between points of $G_{2(2)}/SO(4)$ are dominated by the length of connecting coordinate lines.

Each term on the right-hand side of \eqref{aaa} involves the difference of a component of $\Psi$ with the corresponding component of $\Psi_{\delta,\varepsilon}$. Since such expressions vanish outside of the domains $\mathbb{R}^{3}\setminus B_{1/\delta}$, $B_{2\delta}$, and $\mathcal{C}_{\delta,\sqrt{\varepsilon}}$, it is sufficient to estimate integrals on these three regions. Below we carry this out for a single term only, as the rest may be verified in a similar manner. Consider
\begin{align}
\begin{split}
&\int_{\mathbb{R}^{3}}
e^{-18U-18h_{1}}\left((|\psi^{1}|+|\psi^{1}_{0}|)^{6}
(|\psi^{2}|+|\psi^{2}_{0}|)^{6}e^{-3V-3h_{2}}
+(|\psi^{1}|+|\psi^{1}_{0}|)^{12}e^{3V+3h_{2}}\right)
|\psi^{1}-\psi^{1}_{\delta,\varepsilon}|^{6}dx\\
\leq&C\left(\int_{\mathbb{R}^{3}\setminus B_{1/\delta}}
+\int_{\mathcal{C}_{\delta,\sqrt{\varepsilon}}}
+\int_{B_{2\delta}}\frac{e^{-18U}}{\rho^{9}}
\frac{\cos^{3}\theta}{\sin^{3}\theta}(|\psi^{1}|+|\psi^{1}_{0}|)^{6}
(|\psi^{2}|+|\psi^{2}_{0}|)^{6}|\psi^{1}-\psi^{1}_{\delta,\varepsilon}|^{6}\right)\\
&+C\left(\int_{\mathbb{R}^{3}\setminus B_{1/\delta}}
+\int_{\mathcal{C}_{\delta,\sqrt{\varepsilon}}}
+\int_{B_{2\delta}}\frac{e^{-18U}}{\rho^{9}}
\frac{\sin^{3}\theta}{\cos^{3}\theta}(|\psi^{2}|+|\psi^{2}_{0}|)^{12}
|\psi^{1}-\psi^{1}_{\delta,\varepsilon}|^{6}\right).
\end{split}
\end{align}
Observe that \eqref{FALLOFF4}, \eqref{FALLOFF8.0}, \eqref{002.2},  \eqref{002.3}, \eqref{002.2'}, and \eqref{002.3'} imply that as $\delta\rightarrow 0$ we have
\begin{equation}
\int_{\mathbb{R}^{3}\setminus B_{1/\delta}}\underbrace{\frac{e^{-18U}}{\rho^{9}}}_{O(\rho^{-9})}
\left(\frac{\cos^{3}\theta}{\sin^{3}\theta}
\underbrace{(|\psi^{1}|+|\psi^{1}_{0}|)^{6}}_{\sin^{9}\theta O(r^{-6\kappa})}
\underbrace{(|\psi^{2}|+|\psi^{2}_{0}|)^{6}}_{\cos^{9}\theta O(r^{-6\kappa})}
+\frac{\sin^{3}\theta}{\cos^{3}\theta}
\underbrace{(|\psi^{2}|+|\psi^{2}_{0}|)^{12}}_{\cos^{18}\theta O(r^{-12\kappa})}\right)
\underbrace{|\psi^{1}-\psi^{1}_{\delta,\varepsilon}|^{6}}_{\sin^{9}\theta
O(r^{-6\kappa})}\rightarrow 0,
\end{equation}
and
\begin{align}
\begin{split}
&\int_{B_{2\delta}}\underbrace{\frac{e^{-18U}}{\rho^{9}}}_{
\begin{cases}
\rho^{-9}O(r^{36})\text{ }\text{ in the AF case}\\
\rho^{-9} O(r^{18})\text{ }\text{ in the AC case}
\end{cases}}
\frac{\cos^{3}\theta}{\sin^{3}\theta}
\underbrace{(|\psi^{1}|+|\psi^{1}_{0}|)^{6}
(|\psi^{2}|+|\psi^{2}_{0}|)^{6}
|\psi^{1}-\psi^{1}_{\delta,\varepsilon}|^{6}}_{\begin{cases}
\sin^{18}\theta\cos^{9}\theta O(r^{-18})\text{ }\text{ in the AF case}\\
\sin^{18}\theta\cos^{9}\theta O(1)\text{ }\text{ in the AC case}
\end{cases}}\\
+&\int_{B_{2\delta}}\underbrace{\frac{e^{-18U}}{\rho^{9}}}_{
\begin{cases}
\rho^{-9}O(r^{36})\text{ }\text{ in the AF case}\\
\rho^{-9} O(r^{18})\text{ }\text{ in the AC case}
\end{cases}}
\frac{\sin^{3}\theta}{\cos^{3}\theta}
\underbrace{(|\psi^{2}|+|\psi^{2}_{0}|)^{12}
|\psi^{1}-\psi^{1}_{\delta,\varepsilon}|^{6}}_{\begin{cases}
\cos^{18}\theta\sin^{9}\theta O(r^{-18})\text{ }\text{ in the AF case}\\
\cos^{18}\theta\sin^{9}\theta O(1)\text{ }\text{ in the AC case}
\end{cases}}\rightarrow 0.
\end{split}
\end{align}
Furthermore \eqref{FALLOFF8.1}, \eqref{002.2''}, and \eqref{002.3''} yield
\begin{equation}
\int_{\mathcal{C}_{\delta,\sqrt{\varepsilon}}}\underbrace{\frac{e^{-18U}}{\rho^{9}}}_{O(\rho^{-9})}
\left(\frac{\cos^{3}\theta}{\sin^{3}\theta}
\underbrace{(|\psi^{1}|+|\psi^{1}_{0}|)^{6}}_{\sin^{9}\theta O(1)}
\underbrace{(|\psi^{2}|+|\psi^{2}_{0}|)^{6}}_{\cos^{9}\theta O(1)}
+\frac{\sin^{3}\theta}{\cos^{3}\theta}
\underbrace{(|\psi^{2}|+|\psi^{2}_{0}|)^{12}}_{\cos^{18}\theta O(1)}\right)
\underbrace{|\psi^{1}-\psi^{1}_{\delta,\varepsilon}|^{6}}_{\sin^{9}\theta
O(1)}\rightarrow 0
\end{equation}
as $\varepsilon\rightarrow 0$. As mentioned above, all remaining terms of \eqref{aaa} may be treated similarly;
it follows that \eqref{LKJ} holds.

\section{Proof of the Main Result: Theorem \ref{MainTheorem}}

We may assume that $Q=|Q|$ by replacing $E$ with $-E$ if necessary. If such a replacement is made, then a change in orientation is also required so that $\star\rightarrow -\star$ which preserves the constraint equation \eqref{Const3}; the change of orientation does not affect the sign of $Q$ since both $\star$ and the integral change signs. Having chosen an orientation to fix the sign of the charge, and in particular the direction of rotation for the Killing fields $\eta_{(l)}$, we are not able to simultaneously guarantee the signs of the angular momenta $\mathcal{J}_{l}$. In this case the proof below will yield \eqref{maininequality}. Alternatively, we may assume without loss of generality that $\mathcal{J}_{l}=|\mathcal{J}_{l}|$, $l=1,2$ by replacing $\eta_{(l)}$ with $-\eta_{(l)}$ if necessary, but cannot simultaneously guarantee the sign of $Q$. In this situation, the proof presented below will yield \eqref{maininequality1}.

If $ab+q=0$, then we may take a perturbation of the initial data to achieve $ab+q\neq 0$ while at the same time preserving all the hypotheses of the theorem. Thus, establishing the inequality \eqref{maininequality} and \eqref{maininequality1} under the condition $ab+q\neq 0$, as is done below, also yields the desired result when $ab+q=0$ by letting the perturbation converge to zero.
Let us now assume that $ab+q\neq 0$, so that there is an extreme charged Myers-Perry black hole solution yielding the harmonic map $\tilde{\Psi}_{0}$ which satisfies the asymptotics \eqref{FALLOFF4}-\eqref{FALLOFF10.1'}. As mentioned in the introduction, the nonvanishing of $ab+q$ is required in order to have a proper black hole arising from the extreme charged Myers-Perry family; if $ab+q=0$, then the corresponding extreme charged Myers-Perry solution has a naked singularity, and such data do not satisfy the appropriate asymptotics.
In Appendix \ref{app1} it is shown that the hypotheses concerning the asymptotics of the initial data $(M,g,k,E,B)$, imply that $(U,V,W,\zeta^{1},\zeta^{2},\chi,\psi^{1},\psi^{2})$ satisfy the asymptotics
\eqref{FALLOFF1}-\eqref{FALLOFF4.1'}.
Therefore Theorem \ref{minimum} may be applied,
and the mass-angular momentum-charge inequality \eqref{maininequality} follows from \eqref{POI}, \eqref{53}, and the fact that
\begin{equation}
\mathcal{M}(\Psi_{0})=\frac{27\pi}{8}\frac{\left(\mathcal{J}_1+\mathcal{J}_2\right)^2}
{\left(2\mathcal{M}(\Psi_{0})+\sqrt{3}Q\right)^2}
+\sqrt{3}Q.
\end{equation}

Next consider the situation when equality is achieved in \eqref{maininequality} or \eqref{maininequality1}, again with $ab+q\neq 0$.  Then by \eqref{POI} and \eqref{53} we have
\begin{equation}\label{yhn}
\mu_{SG}=0,\text{ }\text{ }\text{ }\text{ }\text{
}\text{ }
A^{i}_{\rho,z}=A^{i}_{z,\rho},\text{ }\text{ }\text{ }\text{ }\text{ }i=1,2,
\text{ }\text{ }\text{ }\text{ }\text{ }E(e_{3})=E(e_{4})=0,
\end{equation}
\begin{equation}\label{ujm}
B(e_{i},e_{j})=B(e_{3},e_{3})=B(e_{3},e_{4})=B(e_{4},e_{4})=0,\text{ }\text{ }\text{ }\text{ }\text{ }i,j\neq 3, 4,
\end{equation}
\begin{equation}\label{ol.}
k(e_{i},e_{j})=k(e_{3},e_{3})=k(e_{3},e_{4})=k(e_{4},e_{4})=0,\text{ }\text{ }\text{ }\text{ }\text{ }i,j\neq 3, 4,
\end{equation}
and
\begin{equation}\label{ghj}
(U,V,W,\zeta^{1},\zeta^{2},\chi,\psi^{1},\psi^{2})
=(U_0,V_0,W_0,\zeta^{1}_0,\zeta^{2}_0,\chi_{0},\psi^{1}_{0},\psi^{2}_{0}).
\end{equation}
Observe that \eqref{sc1} and \eqref{yhn}-\eqref{ghj} produce
\begin{align}
\begin{split}
R=&16\pi\mu_{SG}+|k|^{2}+\frac{1}{2}|E|^{2}+\frac{1}{4}|B|^{2}\\
=&
\frac{e^{-8U_{0}-2\alpha+2\log r}}{2\rho^{2}}\lambda^{ij}_{0}\Theta_{0}^{i}\cdot\Theta_{0}^{j}
+\frac{e^{-6U_{0}-2\alpha+2\log r}}{2\rho^{2}}|\Upsilon_{0}|^{2}
+\frac{e^{-4U_{0}-2\alpha+2\log r}}{2}\lambda_{0}^{ij}\nabla\psi_{0}^{i}\cdot\nabla\psi_{0}^{j}\\
=&e^{2(\alpha_{0}-\alpha)}R_{0},
\end{split}
\end{align}
where $R_{0}$ and $\alpha_{0}$ are associated with the extreme charged Myers-Perry solution. Furthermore, from the basic formula for the scalar curvature of Brill data \eqref{scalarcurvature}, along with \eqref{yhn} and \eqref{ghj} we find
\begin{align}
\begin{split}
e^{2U+2\alpha-2\log r}R
=&-6\Delta U_{0}-2\Delta_{\rho,z}\alpha-6|\nabla U_{0}|^2
+\frac{\det\nabla\lambda_{0}}{2\rho^{2}}\\
=&e^{2U_{0}+2\alpha_{0}-2\log r}R_{0}+2\Delta_{\rho,z}(\alpha_{0}-\alpha).
\end{split}
\end{align}
This shows that $\Delta_{\rho,z}(\alpha_{0}-\alpha)=0$. Moreover $(\alpha_{0}-\alpha)|_{\Gamma}=0$ since there are no conical singularities on the axes \eqref{ConeCondition},
and $(\alpha_{0}-\alpha)\rightarrow 0$ as $r\rightarrow\infty$. Hence, by the maximum principle $\alpha=\alpha_{0}$.

In order to establish that $(M,g)$ is isometric to the canonical slice of the extreme charged Myers-Perry black hole, note that
according to \eqref{yhn} the 1-forms $A_{\rho}^{i}d\rho+A_{z}^{i}dz$, $i=1,2$ are closed, thereby yielding potentials satisfying
$\partial_{\rho}f^{i}=A_{\rho}^{i}$ and $\partial_{z}f^{i}=A_{z}^{i}$, $i=1,2$. Now change  coordinates by
$\widetilde{\phi}^{i}=\phi^{i}+f^{i}(\rho,z)$ so that the metric becomes
\begin{equation}
g=\frac{e^{2U_0+2\alpha_{0}}}{2\sqrt{\rho^{2}+z^{2}}}(d\rho^{2}+dz^{2})
+e^{2U_0}(\lambda_{0})_{ij}d\widetilde{\phi}^{i}d\widetilde{\phi}^{j},
\end{equation}
and $g\cong g_0$. Finally \eqref{Bcomponents}, \eqref{Ecomponents}, \eqref{kcomponents}, \eqref{yhn}-\eqref{ghj}, and $\alpha=\alpha_{0}$ show that $k$, $E$, and $B$ agree with their counterparts in the canonical slice of the extreme charged Myers-Perry spacetime, and in particular the non-electromagnetic linear momentum vanishes $J_{SG}=0$.
\hfill\qedsymbol

\appendix
\numberwithin{equation}{section}

\section{Relations Between Asymptotics}
\label{app1}

In order to apply Theorem \ref{minimum} in the proof of Theorem \ref{MainTheorem}, it is necessary to show that the asymptotics \eqref{Asym1}-\eqref{Asym12} of the initial data $(M,g,k,E,B)$, imply that the resulting harmonic map data $(U,V,W,\zeta^{1},\zeta^{2},\chi,\psi^{1},\psi^{2})$ satisfy the asymptotics \eqref{FALLOFF1}-\eqref{FALLOFF4.1'}. The purpose of this appendix is to establish this fact.
Note that the asymptotics of $U$ are given directly, and those of $V$ and $W$ come from \eqref{variablesvw}. Moreover, the asymptotics of the potentials arise from those of $|E|_{g}$, $|B|_{g}$, and $|k|_{g}$, in the following way. First observe that \eqref{1.11} implies
\begin{equation}
\sum_{i=1,2}|\nabla\psi^{i}|
=\sum_{i=1,2}(|\partial_{\rho}\psi^i|^{2}+|\partial_{z}\psi^i|^{2})^{1/2}
\leq Cr^{-1}\rho e^{2U+\alpha}\sum_{i=1,2}\left(|B(\theta^1,\theta^{i+2})|
+|B(\theta^{2},\theta^{i+2})|\right),
\end{equation}
and
\begin{equation}
\sum_{l=1,2}\lambda_{ij}B(\theta^l,\theta^{i+2})B(\theta^l,\theta^{j+2})\leq|B|_{g}^{2},
\end{equation}
so that the asymptotics of $\psi^1$ and $\psi^2$ may be obtained from those of $|B|_{g}$. Using this we find the asymptotics for $\chi$ in terms of those for $|E|_{g}$, since \eqref{Ecomponents} yields
\begin{align}
\begin{split}
|\nabla\chi|=&(|\partial_{\rho}\chi|^{2}+|\partial_{z}\chi|^{2})^{1/2}\\
\leq &C \left[r^{-1}\rho e^{3U+\alpha}(|E(e_{1})|+|E(e_{2})|)+
(|\psi^1|+|\psi^2|)(|\nabla\psi^1|+|\nabla\psi^2|)\right],
\end{split}
\end{align}
and in addition
\begin{equation}
|E(e_{1})|^2+|E(e_{2})|^2\leq|E|_{g}^{2}.
\end{equation}
Lastly, with \eqref{kcomponents} the asymptotics for the potentials $\zeta^1$ and $\zeta^2$ may be derived from
\begin{equation}
|\nabla \zeta^{i}|=\left(|\partial_{\rho}\zeta^i|^{2}+|\partial_{z}\zeta^i|^{2}\right)^{\frac{1}{2}}
\leq Cr^{-1}\rho e^{4U+\alpha}\left(|k(e_{1},e_{i+2})|+|k(e_{2},e_{i+2})|\right),
\end{equation}
and
\begin{equation}
\sum_{l=1,2}\lambda^{ij}k(e_{l},e_{i+2})k(e_{l},e_{j+2})\leq |k|_{g}^{2}.
\end{equation}
In conclusion, the asymptotics for Brill data produce the following asymptotics for the corresponding harmonic map data.
As $r\rightarrow\infty$ the following decay occurs
\begin{equation}
U,V=O(r^{-1-\kappa}),\text{ }\text{ }\text{ }W=\rho O(r^{-5-\kappa}),\text{ }\text{ }\text{ }\psi^{1}=\sin\theta O(r^{-\kappa}),\text{ }\text{ }\text{ }\psi^{2}=\cos\theta O(r^{-\kappa}),
\end{equation}
\begin{equation}
|\nabla U|=O(r^{-3-\kappa}),\text{ }\text{ }\text{ }\text{ }|\nabla V|=O(r^{-3-\kappa}),\text{ }\text{ }\text{ }\text{ }|\nabla W|=O(r^{-5-\kappa}),
\end{equation}
\begin{equation}
|\nabla\psi^{1}|=\sin\theta O(r^{-2-\kappa}),\text{ }\text{ }\text{ }\text{ }|\nabla\psi^{2}|=\cos\theta O(r^{-2-\kappa}),\text{ }\text{ }\text{ }
\end{equation}
\begin{equation}
|\nabla\chi|=\rho O(r^{-3-\kappa}),\text{ }\text{ }\text{ }\text{ }\text{ }\text{ }\text{ }|\nabla \zeta^{1}|=\rho\sin\theta O(r^{-2-\kappa}),\text{ }\text{ }\text{ }\text{ }\text{ }\text{ }\text{ }|\nabla \zeta^{2}|=\rho\cos\theta O(r^{-2-\kappa}).
\end{equation}
Consider now the nondesignated end, in which the asymptotics are broken up into two cases. In the asymptotically flat case, as $r\rightarrow 0$, we have
\begin{equation}
U=-2\log r+O(1),\text{ }\text{ }\text{ } W=\rho O(r^{-\frac{1}{2}+\kappa}),\text{ }\text{ }\text{ }\psi^{1}=\sin\theta O(r^{\kappa}),\text{ }\text{ }\text{ }
\psi^{2}=\cos\theta O(r^{\kappa}),
\end{equation}
\begin{equation}
|\nabla U|=O(r^{-2}),\text{ }\text{ }\text{ }\text{ }\text{ }r^{-2}|V|+|\nabla V|=O(r^{-1+\kappa}),\text{ }\text{ }\text{ }\text{ }\text{ }|\nabla W|= O(r^{-1/2+\kappa}),
\end{equation}
\begin{equation}
|\nabla\psi^{1}|=\sin\theta O(r^{-2+\kappa}),\text{ }\text{ }\text{ }\text{ }|\nabla\psi^{2}|=\cos\theta O(r^{-2+\kappa}),\text{ }\text{ }\text{ }\text{ }
\end{equation}
\begin{equation}
|\nabla\chi|=\rho O(r^{-5+\kappa}),\text{ }\text{ }\text{ }\text{ }\text{ }\text{ }\text{ }|\nabla \zeta^{1}|=\rho\sin\theta O(r^{-6+\kappa}),\text{ }\text{ }\text{ }\text{ }\text{ }\text{ }\text{ }|\nabla \zeta^{2}|=\rho\cos\theta O(r^{-6+\kappa}).
\end{equation}
In the asymptotically cylindrical case, as $r\rightarrow 0$, we have
\begin{equation}
U=-\log r+O(1),\text{ }\text{ }\text{ } W=\rho O(r^{-2}),\text{ }\text{ }\text{ }\psi^{1}=\sin\theta O(r^{2+\kappa}),\text{ }\text{ }\text{ }
\psi^{2}=\cos\theta O(r^{2+\kappa}),
\end{equation}
\begin{equation}
|\nabla U|=O(r^{-2}),\text{ }\text{ }\text{ }\text{ }\text{ }r^{-2}|V|+|\nabla V|=O(r^{-1+\kappa}),\text{ }\text{ }\text{ }\text{ }\text{ }|\nabla W|= O(r^{-2}),
\end{equation}
\begin{equation}
|\nabla\psi^{1}|=\sin\theta O(r^{\kappa}),\text{ }\text{ }\text{ }\text{ }|\nabla\psi^{2}|=\cos\theta O(r^{\kappa}),\text{ }\text{ }\text{ }\text{ }
\end{equation}
\begin{equation}
|\nabla\chi|=\rho O(r^{-2+\kappa}),\text{ }\text{ }\text{ }\text{ }\text{ }\text{ }\text{ }
|\nabla \zeta^{1}|=\rho\sin\theta O(r^{-2+\kappa}),\text{ }\text{ }\text{ }\text{ }\text{ }\text{ }\text{ }|\nabla \zeta^{2}|=\rho\cos\theta O(r^{-2+\kappa}).
\end{equation}
Moreover the asymptotics near the axis, that is, as $\rho\rightarrow 0$ with $\delta\leq r\leq \frac{2}{\delta}$, are given by
\begin{equation}
U,V=O(1),\text{ }\text{ }\text{ } W=O(\rho),\text{ }\text{ }\text{ }\psi^{1}=\sin\theta O(\rho),\text{ }\text{ }\text{ }
\psi^{2}=\cos\theta O(\rho),
\end{equation}
\begin{equation}
|\nabla U|=O(1),\text{ }\text{ }\text{ }\text{ }\text{ }|\nabla V|=O(1),\text{ }\text{ }\text{ }\text{ }\text{ }|\nabla W|=O(1),
\end{equation}
\begin{equation}
|\nabla\psi^{1}|=\sin\theta O(1),\text{ }\text{ }\text{ }\text{ }|\nabla\psi^{2}|=\cos\theta O(1),\text{ }\text{ }\text{ }\text{ }
\end{equation}
\begin{equation}
|\nabla\chi|=O(\rho),\text{ }\text{ }\text{ }\text{ }\text{ }\text{ }\text{ }|\nabla \zeta^{1}|=\sin\theta O(\rho),\text{ }\text{ }\text{ }\text{ }\text{ }\text{ }\text{ }|\nabla \zeta^{2}|=\cos\theta O(\rho).
\end{equation}

\section{The Charged Myers-Perry Black Hole in 5D Minimal Supergravity}
\label{app2}

Consider 5-dimensional minimal supergravity with action \eqref{sugraaction}. The charged Myers-Perry solution \cite{Chong:2005hr} may be interpreted as a natural generalization
of the Kerr-Newman black hole to 5 dimensions.
In Boyer-Lindquist coordinates the charged Myers-Perry metric takes the form
\begin{align}
\begin{split}
& -dt^2-\frac{2q}{\Sigma}\left(dt-a\sin^2\theta d\phi^1-b\cos^2\theta d\phi^2\right)
\left(b\sin^2\theta d\phi^1+a\cos^2\theta d\phi^2\right) \\
&+\frac{\mathfrak{m}\Sigma-q^2}{\Sigma^2}\left(dt-a\sin^2\theta d\phi^1-b\cos^2\theta d\phi^2\right)^2+\frac{\tilde{r}^2\Sigma}{\Delta}d\tilde{r}^2
+ \Sigma d\theta^2\\
&+\left(\tilde{r}^2+a^2\right)\sin^2\theta (d\phi^1)^2
+\left(\tilde{r}^2+b^2\right)\cos^2\theta (d\phi^2)^2,
\end{split}
\end{align}
where
\begin{equation}
\Sigma=\tilde{r}^2+b^2\sin^2\theta+a^2\cos^2\theta,\qquad \Delta=\left(\tilde{r}^2+a^2\right)\left(\tilde{r}^2+b^2\right)+q^2+2abq-\mathfrak{m} \tilde{r}^2.
\end{equation}
The gauge field which defines the field strength $F=dA$ is given by
\begin{equation}
A=\frac{\sqrt{3}q}{\Sigma}\left(dt-a\sin^2\theta d\phi^1-b\cos^2\theta d\phi^2\right).
\end{equation}
Notice that there are four parameters $(\mathfrak{m},a,b,q)$ that characterize each solution. These parameters represent the mass, angular momenta, and charge through the formulae
\begin{gather}
m=\frac{3}{8}\pi\mathfrak{m},\qquad \mathcal{J}_{1}=\frac{2am+\sqrt{3}Qb}{3},\qquad \mathcal{J}_{2}=\frac{2bm+\sqrt{3}Qa}{3},\qquad Q=\frac{\sqrt{3}\pi q}{4}.
\end{gather}
The black hole is extreme if $\mathfrak{m}=2q+(a+b)^2$, and in this case
\begin{equation}
\left(2m+\sqrt{3}Q\right)^2\left(m-\sqrt{3}Q\right)=\frac{27\pi}{8} \left( \mathcal{J}_{1}+ \mathcal{J}_{2}\right)^2:=\mathcal{J},
\end{equation}
from which one may solve for the mass to find
\begin{align}
\begin{split}
m=&\frac{3Q^2}{2}\left(\mathcal{J} + 3\sqrt{3}Q^3 + \sqrt{\mathcal{J}^2 + 6\sqrt{3} \mathcal{J} Q^3}\right)^{-1/3}\\
 &+ \frac{1}{2}\left(\mathcal{J} + 3\sqrt{3} Q^3 + \sqrt{\mathcal{J}^2 + 6\sqrt{3}\mathcal{J} Q^3}\right)^{1/3}.
\end{split}
\end{align}
Horizons are located at the roots of $\Delta$ and are given by
\begin{equation}
\tilde{r}_{\pm}=\pm\sqrt{\frac{\mathfrak{m}-a^2-b^2
		 +\sqrt{\left[\mathfrak{m}+2q-\left(a-b\right)^2\right]\left[\mathfrak{m}-2q-\left(a+b\right)^2\right]}}{2}},
\end{equation}
whereas spacetime singularities for nonvanishing $a$ and $b$ with $|a|\neq |b|$ are found at the roots of $\Sigma$. Moreover this spacetime exhibits an orthogonally transitive isometry group $\mathbb{R}\times U(1)^2$, where $U(1)^2$ represents the rotational symmetry generated by $\partial_{\phi^i}$, $i=1,2$, and $\mathbb{R}$ gives the time translation symmetry.

Consider now the metric on a constant time slice. We will focus on the exterior region $\tilde{r}>\tilde{r}_{+}$, with the remaining variables taking values $\theta\in (0,\pi/2)$ and $\phi^{i}\in (0,2\pi)$, $i=1,2$.
In this domain a new radial coordinate $r\in(0,\infty)$ may be defined by
\begin{equation}\label{fjgh}
\tilde{r}^2=r^2+\frac{1}{2}\left(\mathfrak{m}-a^2-b^2\right)
+\frac{\left[\mathfrak{m}+2q-\left(a-b\right)^2\right]\left[\mathfrak{m}-2q-\left(a+b\right)^2\right]}{16r^2},\text{ }\text{ }\text{ }\text{ }\text{ }\text{ }\mathfrak{m}\neq 2q+(a+b)^2,
\end{equation}
\begin{equation}
\tilde{r}^2=r^2+ab+q,\text{ }\text{ }\text{ }\text{ }\text{ }\text{ }\mathfrak{m}=2q+(a+b)^2.
\end{equation}
It turns out that the right-hand side of \eqref{fjgh} has a critical point at the horizon, and that on either side of this surface are isometric copies of the outer region. The purpose of the new coordinates $(r,\theta,\phi^1,\phi^2)$ is to put the spatial metric in Brill form
\begin{equation}
g=\frac{\Sigma}{r^2}\left(dr^2
+r^2 d\theta^2\right)+\Lambda_{ij}d\phi^i d\phi^j,
\end{equation}
where
\begin{gather}
\Lambda_{11}=\frac{a\left[a\left(\mathfrak{m}\Sigma-q^2\right)+2bq\Sigma\right]}{\Sigma^2}\sin^4\theta+(\tilde{r}^2+a^2)\sin^2\theta,\\
\Lambda_{12}=\frac{ab\left(\mathfrak{m}\Sigma-q^2\right)+(a^2+b^2)q\Sigma}{\Sigma^2}\sin^2\theta\cos^2\theta,\\
\Lambda_{22}=\frac{b\left[b\left(\mathfrak{m}\Sigma-q^2\right)+2aq\Sigma\right]}{\Sigma^2}\cos^4\theta+(\tilde{r}^2+b^2)\cos^2\theta.
\end{gather}
The cylindrical version of Brill coordinates arises from the typical transformation $\rho=\frac{1}{2}r^{2}\sin(2\theta)$, $z=\frac{1}{2}r^{2}\cos(2\theta)$, and in this setting the metric becomes
\begin{equation}
g=\frac{e^{2U+2\alpha}}{2\sqrt{\rho^2+z^2}}(d\rho^2+dz^2)
+e^{2U}\lambda_{ij}d\phi^i d\phi^j,
\end{equation}
with
\begin{equation}\label{htildeMP}
e^{2U}=\frac{\sqrt{\det\Lambda}}{\rho}, \qquad e^{2\alpha}=\frac{\rho \Sigma}{r^2\sqrt{\det\Lambda}}, \qquad \lambda_{ij}=\frac{\rho}{\sqrt{\det\Lambda}}\Lambda_{ij}.
\end{equation}

In order to extract the electric field $E=\iota_n F$ and magnetic 2-form field $B=\iota_n \star_{5} F$, note that if $h=\sqrt{-g^{tt}}$ then the unit normal to a constant time slice is $n = -h^{-1}dt$, and also
\begin{equation}
\sqrt{\det g} = \frac{\tilde{r}h\Sigma \sin 2\theta}{2}.
\end{equation}
Thus, we find that
\begin{equation}
E = \frac{\sqrt{3} q}{h \Sigma^2}\left[2\frac{\tilde{r} (\Delta + m \tilde{r}^2 - abq - q^2)}{ \Delta} d \tilde{r} - (a^2-b^2) \sin 2\theta d\theta\right].
\end{equation}
The explicit expression for the magnetic 2-form $B$ is fairly complicated, but the Hodge dual is more natural and is given by
\begin{align}
\begin{split}
\star B =& \frac{2 \sqrt{3} a q r \sin^2\theta}{\Sigma^2} dr \wedge d\phi^1  + \frac{2 \sqrt{3} b q r \cos^2\theta}{\Sigma^2} dr \wedge d\phi^2 \\
&- \frac{\sqrt{3} a q (a^2+r^2) \sin 2\theta}{\Sigma^2} d\theta \wedge d\phi^1  +\frac{\sqrt{3} b q (b^2+r^2) \sin 2\theta}{\Sigma^2} d\theta \wedge d\phi^2.
\end{split}
\end{align}

The potentials may be derived as follows. Write $A=A_tdt+A_id\phi^i$, then
\begin{equation}
d\psi^{i}=\iota_{\eta_{(i)}}F=\iota_{\eta_{(i)}}dA
=\mathfrak{L}_{\eta_{(i)}}A-d\iota_{\eta_{(i)}}A =-dA_{i}
\end{equation}
and so
\begin{equation}
\psi^{1}=-A_{1}= \frac{\sqrt{3} a q \sin^2\theta}{\Sigma}, \text{ }\text{ }\text{ }\text{ }\text{ }\text{ }\text{ }\text{ } \psi^{2} =-A_{2}= \frac{\sqrt{3} b q \cos^2\theta}{\Sigma}.
\end{equation}
Further computations yield the remaining potentials
\begin{equation}
\chi=-\sqrt{3}q\sin^2\theta+\frac{\sqrt{3}q(a^2-b^2)\sin^2\theta\cos^2\theta}{\Sigma},
\end{equation}
and
\begin{align}
\begin{split}
\zeta^1=& \frac{(a \mathfrak{m} + b q)(\cos 4\theta - 4 \cos 2\theta)}{8} - \frac{2(a^2-b^2)(2 a q^2 + 2(a \mathfrak{m} + bq)\Sigma)\cos^2 \theta \sin^4\theta}{4\Sigma^2}, \\
\zeta^2=&  -\frac{(b \mathfrak{m} + a q)(\cos 4\theta + 4 \cos 2\theta)}{8} - \frac{2(a^2-b^2)(2 b q^2 + 2(b \mathfrak{m} + aq)\Sigma)\cos^4 \theta \sin^2\theta}{4\Sigma^2}.
\end{split}
\end{align}

From the explicit expressions above, we may calculate the asymptotics in the non-extreme case. As $r\rightarrow\infty$ the following decay occurs
\begin{equation}
U,V=O(r^{-2}),\text{ }\text{ }\text{ }W=\rho O(r^{-6}),\text{ }\text{ }\text{ }\psi^{1}=\sin^{2}\theta O(r^{-4}),\text{ }\text{ }\text{ }\psi^{2}=\cos^{2}\theta O(r^{-4}),
\end{equation}
\begin{equation}
|\nabla U|=O(r^{-4}),\text{ }\text{ }\text{ }\text{ }|\nabla V|=O(r^{-4}),\text{ }\text{ }\text{ }\text{ }|\nabla W|=O(r^{-6}),
\end{equation}
\begin{equation}
|\nabla\psi^{1}|=\sin\theta O(r^{-5}),\text{ }\text{ }\text{ }\text{ }|\nabla\psi^{2}|=\cos\theta O(r^{-5}),\text{ }\text{ }\text{ }
\end{equation}
\begin{equation}
|\nabla\chi|=\rho O(r^{-4}),\text{ }\text{ }\text{ }\text{ }\text{ }\text{ }\text{ }|\nabla \zeta^{1}|=\rho\sin^{2}\theta O(r^{-4}),\text{ }\text{ }\text{ }\text{ }\text{ }\text{ }\text{ }|\nabla \zeta^{2}|=\rho\cos^{2}\theta O(r^{-4}).
\end{equation}
In the nondesignated asymptotically flat end case, as $r\rightarrow 0$, we have
\begin{equation}
U=-2\log r+O(1),\text{ }\text{ }\text{ } W=\rho O(r^{2}),\text{ }\text{ }\text{ }\psi^{1}=\sin^{2}\theta O(r^{2}),\text{ }\text{ }\text{ }
\psi^{2}=\cos^{2}\theta O(r^{2}),
\end{equation}
\begin{equation}
|\nabla U|=O(r^{-2}),\text{ }\text{ }\text{ }\text{ }\text{ }r^{-2}|V|+|\nabla V|=O(1),\text{ }\text{ }\text{ }\text{ }\text{ }|\nabla W|= O(r^{2}),
\end{equation}
\begin{equation}
|\nabla\psi^{1}|=\sin\theta O(1),\text{ }\text{ }\text{ }\text{ }|\nabla\psi^{2}|=\cos\theta O(1),\text{ }\text{ }\text{ }\text{ }
\end{equation}
\begin{equation}
|\nabla\chi|=\rho O(r^{-4}),\text{ }\text{ }\text{ }\text{ }\text{ }\text{ }\text{ }|\nabla \zeta^{1}|=\rho\sin^{2}\theta O(r^{-4}),\text{ }\text{ }\text{ }\text{ }\text{ }\text{ }\text{ }|\nabla \zeta^{2}|=\rho\cos^{2}\theta O(r^{-4}).
\end{equation}
Furthermore the asymptotics near the axis, that is, as $\rho\rightarrow 0$ with $\delta\leq r\leq \frac{2}{\delta}$, are given by
\begin{equation}
U,V=O(1),\text{ }\text{ }\text{ } W=O(\rho),\text{ }\text{ }\text{ }\psi^{1}=\sin^{2}\theta O(1),\text{ }\text{ }\text{ }
\psi^{2}=\cos^{2}\theta O(1),
\end{equation}
\begin{equation}
|\nabla U|=O(1),\text{ }\text{ }\text{ }\text{ }\text{ }|\nabla V|=O(1),\text{ }\text{ }\text{ }\text{ }\text{ }|\nabla W|=O(1),
\end{equation}
\begin{equation}
|\nabla\psi^{1}|=\sin\theta O(1),\text{ }\text{ }\text{ }\text{ }|\nabla\psi^{2}|=\cos\theta O(1),\text{ }\text{ }\text{ }\text{ }
\end{equation}
\begin{equation}
|\nabla\chi|=O(\rho),\text{ }\text{ }\text{ }\text{ }\text{ }\text{ }\text{ }|\nabla \zeta^{1}|=\sin^{2}\theta O(\rho),\text{ }\text{ }\text{ }\text{ }\text{ }\text{ }\text{ }|\nabla \zeta^{2}|=\cos^{2}\theta O(\rho).
\end{equation}
Asymptotics for the extreme charged Myers-Perry solution are recorded in
\eqref{FALLOFF4}-\eqref{FALLOFF10.1'}.

Lastly we state the Euler-Lagrange equations satisfied by the the extreme charged Myers-Perry harmonic map $\tilde{\Psi}_{0}:\mathbb{R}^{3}\setminus\Gamma
\rightarrow G_{2(2)}/SO(4)$, namely
\begin{align}
\begin{split}
	12\Delta u+\frac{3e^{-6u-v}}{\cosh w}|\Theta^1|^2+3e^{-6u+v}\cosh w|e^{-v}\tanh w\Theta^1-\Theta^2|^2+\frac{e^{-2u-v}}{\cosh w}|\nabla\psi^1|^2&\\
	+e^{-2u+v}\cosh w|e^{-v}\tanh w\nabla\psi^1-\nabla\psi^2|^2+2e^{-4u}\Upsilon^2&=0,\\
	2\operatorname{div}\left(\cosh^2w\nabla v\right)+e^{-6u}\cosh w\left\{e^{-v}|\Theta^1|^2-e^{v}|\Theta^2|^2\right\}&\\
	+e^{-2u}\cosh w\left\{e^{-v}|\nabla\psi^1|^2-e^{v}|\nabla\psi^2|^2\right\}&=0,\\
	-2\Delta w+\sinh 2w|\nabla v|^2-2e^{-2u}\cosh w\left\{\delta_3(\nabla\psi^1,\nabla\psi^2)+e^{-4u}
\delta_3(\Theta^1,\Theta^2)\right\}&\\
+e^{-6u}\sinh w\left\{e^{-v}|\Theta^1|^2+e^{v}|\Theta^2|^2\right\}+e^{-2u}\sinh w\left\{e^{-v}|\nabla\psi^1|^2+e^{v}|\nabla\psi^2|^2\right\}&=0,\\
	\operatorname{div}\left(e^{-6u-v}\cosh w\Theta^1-e^{-6u}\sinh w\Theta^2\right)&=0,\\
	\operatorname{div}\left(e^{-6u+v}\cosh w\Theta^2-e^{-6u}\sinh w\Theta^1\right)&=0,\\
	\operatorname{div}\left\{e^{-6u}\cosh w\left(e^{-v}\Theta^1\psi^1+e^{v}\Theta^2\psi^2\right)-e^{-6u}\sinh w\left(\Theta^2\psi^1+\Theta^1\psi^2\right)+e^{-4u}\Upsilon\right\}&=0,\\
	\operatorname{div}\left\{e^{-6u}\psi^1\psi^2\left(\sinh w\Theta^2-e^{-v}\cosh w\Theta^1\right)+(\psi^2)^2e^{-6u}\left(\sinh w\Theta^1-e^{v}\cosh w\Theta^2\right)\right.&\\
	\left. +3\sqrt{3}e^{-2u}\left(e^{-v}\cosh w\nabla\psi^1-\sinh w\nabla\psi^2\right)-3e^{-4u}\psi^2\Upsilon\right\}&\\ +e^{-6u}\delta_{3}\left(\left(3\sqrt{3}\nabla\chi+\left(2\psi^1\nabla\psi^2
-\psi^2\nabla\psi^1\right)\right),\left(\sinh w\Theta^2-e^{-v}\cosh w\Theta^1\right)\right)&\\
	+e^{-6u}\delta_{3}(\psi^2\nabla\psi^2,(\sinh w\Theta^1-e^{v}\cosh w\Theta^2))-3e^{-4u}\delta_3(\nabla\psi^2,\Upsilon)&=0,\\
	\operatorname{div}\left\{e^{-6u}\psi^1\psi^2\left(\sinh w\Theta^1-e^{v}\cosh w\Theta^2\right)+(\psi^1)^2e^{-6u}\left(e^{-v}\cosh w\Theta^1-\sinh w\Theta^2\right)\right.&\\
	\left. -3\sqrt{3}e^{-2u}\left(\sinh w\nabla\psi^1-e^{v}\cosh w\nabla\psi^2\right)+3e^{-4u}\psi^1\Upsilon\right\}&\\ +e^{-6u}\delta_{3}\left(\left(3\sqrt{3}\nabla\chi
+\left(\psi^1\nabla\psi^2-2\psi^2\nabla\psi^1\right)\right)
,\left(\sinh w\Theta^1-e^{v}\cosh w\Theta^2\right)\right)&\\
+e^{-6u}\delta_{3}(\psi^1\nabla\psi^1,(e^{-v}\cosh w\Theta^1-\sinh w\Theta^2))
+3e^{-4u}\delta_3(\nabla\psi^1,\Upsilon)&=0.
\end{split}
\end{align}

\section{Spacetime Construction of Potentials and Charges}
\label{app3}

In Section \ref{secpot} potentials for the electromagnetic field, as well as the angular momentum, were constructed from the initial data perspective. Here we show how the same potentials arise from the spacetime point of view. We do not assume stationarity, but do impose the spacetime field equations \eqref{FIELDeqns} and \eqref{FIELDeqns1}.
Consider the two magnetic 1-forms
$\mathcal{B}_{(i)} = -\iota_{\eta_{(i)}} F$, $i=1,2$. Since $dF =0$ and $\eta_{(i)}$ is a Killing field, Cartan's formula \eqref{cartanmagic} implies that these forms are closed
\begin{equation}
d\mathcal{B}_{(i)}=\iota_{\eta_{(i)}} dF-\mathfrak{L}_{\eta_{(i)}}F=0.
\end{equation}
Assuming that the spacetime is simply connected, there then exist magnetic potentials such that $\mathcal{B}_{(i)} = d\psi^{i}$, $i=1,2$. Now consider the electric 1-form
\begin{equation}
\mathcal{E} = -\iota_{\eta_{(1)}} \iota_{\eta_{(2)}} \star_{5} F,
\end{equation}
which vanishes on any axis of rotation where some linear combination of the $\eta_{(i)}$ vanish. This satisfies
\begin{equation}
d\mathcal{E} = -\iota_{\eta_{(1)}} \iota_{\eta_{(2)}}d\mathcal{S},
\end{equation}
where $S$ is defined from the Maxwell equations
\begin{equation}
d\left(\star_{5} F - \mathcal{S}\right) =0, \text{ }\text{ }\text{ }\text{ }\text{ }\text{ } \mathcal{S} = -\frac{1}{\sqrt{3}} A \wedge F.
\end{equation}
Therefore
\begin{equation}
d\mathcal{E} = d\sigma,\text{ }\text{ }\text{ }\text{ }\text{ }\text{ } \sigma = \frac{1}{\sqrt{3}}(\psi^1 d\psi^2 - \psi^2 d\psi^1),
\end{equation}
showing the existence of an electric potential with $d\chi = \mathcal{E} - \sigma$.
Finally we show how to construct the charged twist potentials, which encode angular momentum. Consider the twist 1-forms
\begin{equation}
\omega_{(i)} = \star_{5}(\eta_{(1)} \wedge \eta_{(2)} \wedge d\eta_{(i)}), \text{ }\text{ }\text{ }\text{ }\text{ }\text{ }\text{ }i=1,2.
\end{equation}
By the Frobenius theorem these 1-forms represent the obstruction to integrability of the distribution orthogonal to the 2-planes spanned by $\eta_{(1)}$ and $\eta_{(2)}$, and in addition they satisfy
\begin{equation}
d\omega_{(i)} = 2\star(\eta_{(1)} \wedge \eta_{(2)} \wedge \operatorname{Ric}(\eta_{(i)}))
\end{equation}
where $\operatorname{Ric}$ denotes the spacetime Ricci tensor. Then a computation \cite{Kunduri:2011zr} utilizing the Einstein equations shows that
\begin{equation}
d\omega_{(i)}=\mathcal{E} \wedge \mathcal{B}_{(i)} = d\left[\psi^{i} \left(d\chi + \frac{1}{3\sqrt{3}}(\psi^1 d\psi^2 - \psi^2 d\psi^1)\right)\right].
\end{equation}
It follows that twist potentials exist such that
\begin{equation}
d\zeta^i = \omega_i - \psi^i \left(d\chi + \frac{1}{3\sqrt{3}}(\psi^1 d\psi^2
 - \psi^2 d\psi^1)\right).
\end{equation}

In $3+1$-dimensional Einstein-Maxwell theory, one may integrate the closed 2-forms $F$ and $\star F$ over 2-cycles which enclose an asymptotically flat end, in order to obtain an appropriate definition of total magnetic and electric charge, respectively. In $4+1$-dimensional minimal supergravity, $F$ is still closed, however there are no 2-cycles which enclose an asymptotically flat end, and thus there is no natural notion of total magnetic charge. If the spacetime possesses nontrivial 2-cycles then one may integrate $F$ over these surfaces in order to obtain a notion of quasi-local magnetic charge. These surfaces are often referred to as bubbles supported by magnetic flux \cite{Marolf:2000cb}, but play no role in the current paper. On the other hand, there is a natural notion of total electric charge in 5-dimensional minimal supergravity, however it is not obtain by integrating $\star_5 F$, as this form is no longer closed. Rather, total electric charge is obtain by integrating
$\star_5 F + \frac{1}{\sqrt{3} }A \wedge F$ over a nontrivial 3-cycle, as this form is closed in light of the minimal supergravity equations.

\section{Conventions and Formulas for Forms}
\label{app4}

On an $n$-dimensional manifold let
\begin{equation}
\omega=\frac{1}{p!}\omega_{i_{1}\cdots i_{p}} e^{i_{1}}\wedge\cdots\wedge e^{i_{p}}
\end{equation}
be a $p$-form, where $\omega_{i_{1}\cdots i_{p}}$ is an antisymmetric covariant $p$-tensor and $e^i$, $i=1,\ldots,n$ form a basis for the cotangent space.
If $\alpha$ is a $q$-form then the wedge product of these two forms is a $p+q$-form given by
\begin{equation}
(\omega\wedge\alpha)_{i_{1}\cdots i_{p}j_{1}\cdots j_{q}}=\frac{(p+q)!}{p!q!}
\omega_{[i_{1}\cdots i_{p}}\alpha_{j_{1}\cdots j_{q}]},
\end{equation}
where for any tensor $T$ its antisymmetric part is
\begin{equation}
T_{[i_{1}\cdots i_{p}]}=\frac{1}{p!}\left(T_{i_{1}\cdots i_{p}}+\text{even permutations}
-\text{odd permutations}\right).
\end{equation}
The exterior derivative of $\omega$ produces a $p+1$-form
\begin{equation}
(d\omega)_{i_{1}\cdots i_{p+1}}=(p+1)\partial_{[i_{1}}\omega_{i_{2}\cdots i_{p+1}]},
\end{equation}
and the Hodge star operation is expressed in component form by
\begin{equation}
(\star \omega)_{j_{1}\cdots j_{n-p}}=\frac{1}{p!}
\epsilon_{j_{1}\cdots j_{n-p}}^{\phantom{j_{1}\cdots j_{n-p}}i_{1}\cdots i_{p}}
\omega_{i_{1}\cdots i_{p}}.
\end{equation}
For a metric with $t$ negative eigenvalues we have
\begin{equation}
\star\star\omega=(-1)^{p(n-p)+t}\omega,
\end{equation}
and
\begin{equation}
(\star d\star\omega)_{i_{1}\cdots i_{p-1}}=(-1)^{p(n-p+1)+t+1}\nabla^{l}\omega_{li_{1}\cdots i_{p-1}}.
\end{equation}
A useful formula for contracting volume forms is
\begin{equation}
\epsilon^{i_{1}\cdots i_{n-p} l_{1}\cdots l_{p}}\epsilon_{j_{1}\cdots j_{n-p} l_{1}\cdots l_{p}}
=(-1)^{t}p!(n-p)!\delta^{i_{1}\cdots i_{n-p}}_{j_{1}\cdots j_{n-p}},
\end{equation}
where
\begin{equation}
\delta^{i_{1}\cdots i_{q}}_{j_{1}\cdots j_{q}}
=\delta^{[i_{1}}_{[j_{1}}\cdots \delta^{i_{q}]}_{j_{q}]}.
\end{equation}
If $X$ is a vector field then Cartan's formula is
\begin{equation}\label{cartanmagic}
\mathfrak{L}_{X}w = d\iota_{X}\omega+\iota_{X}d\omega,
\end{equation}
where $\iota$ denotes the interior product
\begin{equation}
(\iota_{X}\omega)_{i_{1}\cdots i_{p-1}}=X^{j}\omega_{ji_{1}\cdots i_{p-1}},
\end{equation}
which also satisfies
\begin{equation}
\iota_{X}(\omega\wedge\alpha)=(\iota_{X}\omega)\wedge\alpha+(-1)^{p}\omega\wedge(\iota_{X}\alpha).
\end{equation}
 

\bibliographystyle{abbrv}
\bibliography{masterfile}

\end{document}